\pgfplotsset{width=8cm,compat=newest}
\def\colorful{1}
\newcommand{\D}{\mathcal{D}}
\newcommand{\mcC}{\mathcal{C}}
\newcommand{\mcO}{\mathcal{O}}
\newcommand{\error}{\mathrm{error}}
\newcommand{\NP}{\mathsf{NP}}
\renewcommand{\P}{\mathsf{P}}
\newcommand{\Enc}{\mathrm{Enc}}
\newcommand{\Dec}{\mathrm{Dec}}
\newcommand{\Cert}{\textsc{Cert}}
\newcommand{\UC}{\textsc{UnifCert}}
\newcommand{\VCdim}{\mathrm{VCdim}}
\newcommand{\NTIME}{\mathsf{NTIME}}
\newcommand{\RTIME}{\mathsf{RTIME}}
\newcommand{\AMTIME}{\mathsf{AMTIME}}
\newcommand{\AM}{\mathsf{AM}}
\newcommand{\RP}{\mathsf{RP}}
\newcommand{\SAT}{\mathrm{SAT}}
\newcommand{\Ldim}{\mathrm{Ldim}}
\newtheorem{hypothesis}{Hypothesis}
\Crefname{theorem}{Theorem}{Theorems}
\Crefname{thmenumi}{Theorem}{Theorems}
    \setlist[enumerate,1]{
        label={\textit{(\roman*)}},
        ref={\thetheorem(\roman*)}
    }%
\begin{document}


\title{

Computational-Statistical Tradeoffs from NP-hardness \vspace{10pt}


}

\author{ 
Guy Blanc \vspace{6pt} \\ 
\hspace{-7pt} {\sl Stanford} \and 
Caleb Koch \vspace{6pt} \\ 
\hspace{-10pt} { {\sl Stanford}} \and 
 Carmen Strassle \vspace{6pt} \\
 \hspace{-5pt} { {\sl Stanford}} \vspace{15pt}
 \and 
Li-Yang Tan \vspace{6pt}  \\
\hspace{-10pt} {{\sl Stanford}}
}

\date{\small{\today}}

 \maketitle

 \begin{abstract}

 A central question in computer science and statistics is whether efficient algorithms can achieve the information-theoretic limits of statistical problems.
Many  computational-statistical tradeoffs have been shown under average-case assumptions, but since statistical problems are average-case in nature, it has been a challenge to base them on standard worst-case assumptions.

In PAC learning where such tradeoffs were first studied, the question is  whether computational efficiency can come at the cost of using more samples than information-theoretically necessary. We base such tradeoffs  on  $\NP$-hardness and obtain: 


\begin{itemize}
    \item[$\circ$]  
Sharp computational-statistical tradeoffs assuming $\NP$ requires exponential time: For every polynomial $p(n)$, there is an $n$-variate class $\mathcal{C}$ with VC dimension~$1$ such that the sample complexity of {\sl time-efficiently} learning~$\mathcal{C}$ is $\Theta(p(n)).$

\item[$\circ$] A characterization of $\RP$ vs.~$\NP$ in terms of learning: $\RP = \NP$ iff every $\NP$-enumerable class is learnable with $O(\VCdim(\mathcal{C}))$ samples in polynomial time. The forward implication has been known since (Pitt and Valiant, 1988); we prove the reverse implication. 
\end{itemize}

 Notably, all our lower bounds hold  against improper learners. These are the first $\NP$-hardness results for improperly learning a subclass of polynomial-size circuits, circumventing formal barriers of Applebaum, Barak, and Xiao (2008). 
\end{abstract}

 \thispagestyle{empty}
 \newpage


 \setcounter{page}{1}

\section{Introduction}

\label{sec:intro}
This paper is concerned with the relationship between two fundamental resources in learning: runtime and samples. Naturally, we would like algorithms that are both time- {\sl and} sample-efficient. However, such algorithms have proven elusive even for simple learning problems,   suggesting the existence of inherent tradeoffs between the two resources. Are there problems for which any sample-efficient algorithm must be time-inefficient? Are there problems for which more samples provably lead to faster algorithms?

The possibility of such computational-statistical tradeoffs was first raised in an early landmark paper of Blumer, Ehrenfeucht, Haussler, and Warmuth~\cite{BEHW89}. Their paper proved that the Vapnik--Chervonenkis (VC) dimension~\cite{VC71} characterizes the sample complexity of PAC learning, a foundational result now known as the ``Fundamental Theorem of PAC Learning". While undoubtedly important, this characterization is crucially statistical in nature and does not take time complexity into consideration. The upper bound is realized by empirical risk minimization, a naive implementation of which involves iterating over all possible hypotheses. 
The authors themselves noted this and concluded their paper as follows:  
\begin{quote}
   {\sl [C]onsiderable further research remains to be done in this
area. In particular, there may be interesting general trade-offs between the sample
size required for learning and the computational effort required to produce a
consistent hypothesis that are yet to be discovered. These issues are important if this theory of learnability is to find useful applications.} 
\end{quote}

\paragraph{Existing tradeoffs and their  assumptions.} The first such tradeoffs were obtained by Decatur, Goldreich, and Ron~\cite{DGR00} who constructed a concept class~$\mathcal{C}$ with small VC dimension---and is therefore information-theoretically learnable with a small number of samples---and yet is such that, under cryptographic assumptions, a much larger number of samples is both necessary and sufficient to {\sl time-efficiently} learn~$\mathcal{C}$. 

Computational-statistical tradeoffs carry both a negative and a positive message. On one hand, they say that time-efficient learning is impossible even with many more samples than information-theoretically necessary. The size of this gap corresponds to the width of the yellow region in~\Cref{fig:generic tradeoff}. On the other hand, they also say that just a small increase in sample complexity---the transition from the yellow to the green region in~\Cref{fig:generic tradeoff}---leads to substantial speedups in the runtimes of algorithms.
\medskip 

\begin{figure}[h!]
    \centering
    \begin{tikzpicture}
        \def\rectHeight{0.8} 
        \def\rectWidth{5} 

        \fill[pattern=north east lines, pattern color=red!25] 
            (0, -\rectHeight) rectangle (\rectWidth, \rectHeight)
            node[pos=0.5, text width=5cm, text centered] {\normalsize{Information-theoretically \\ impossible}};
        
        \fill[pattern=north east lines, pattern color=yellow] 
            (\rectWidth, -\rectHeight) rectangle (2*\rectWidth, \rectHeight)
            node[pos=0.5, text width=4.5cm, text centered] {\normalsize{Computationally intractable}};


        \fill[pattern=north east lines, pattern color=green!30] 
            (2*\rectWidth, -\rectHeight) rectangle (3*\rectWidth, \rectHeight)
            node[pos=0.5, text width=4cm, text centered] {\normalsize{Computationally tractable}};

        \draw[-Stealth] (0,-1.5) -- (3*\rectWidth,-1.5) node[midway, fill=white]{Sample complexity};
        
        \draw[dashed] (\rectWidth, -\rectHeight-0.2) -- (\rectWidth, \rectHeight+0.2);
        \draw[dashed] (2*\rectWidth, -\rectHeight-0.2) -- (2*\rectWidth, \rectHeight+0.2);
        
    \end{tikzpicture}
    \caption{An illustration of a computational-statistical tradeoff}
    \label{fig:generic tradeoff}
\end{figure}
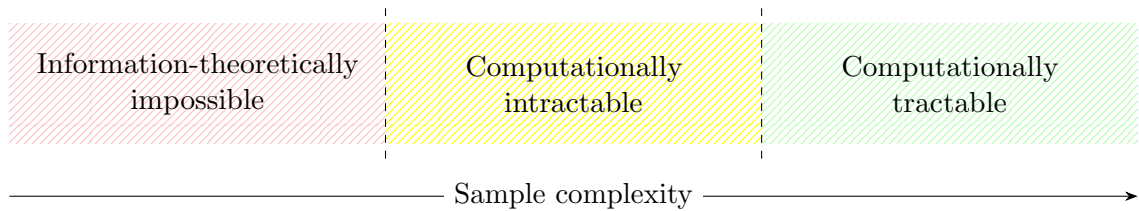

Following~\cite{DGR00}, more such tradeoffs were obtained under cryptographic~\cite{Ser99,SSST12} and more general average-case  assumptions~\cite{DLSS13}.  Beyond PAC learning, computational-statistical tradeoffs have emerged as a recurring phenomenon in areas spanning  computer science, statistics, and statistical physics. We refer the reader to~\cite{JM15,BB20,Sim21} for overviews of the now sizable literature on this topic.

This research has led to a rich theory of average-case reductions and an improved understanding of restricted classes of algorithms. However, connections to worst-case complexity---and specifically $\NP$-hardness, the canonical hardness assumption---have been wanting. The crux of the difficulty has long been recognized: Statistical problems are, by nature, average-case problems, making it challenging to base their hardness on standard worst-case assumptions. 


\paragraph{Is PAC learning $\NP$-hard?} 
 This ties into a major effort in PAC learning that dates back to its inception: basing  the intractability of learning on minimal complexity assumptions. As Valiant already observed in~\cite{Val84},  runtime lower bounds for learning follow straightforwardly from cryptographic assumptions: The very definition of a pseudorandom function~\cite{GGM86} implies that it cannot be learned in polynomial time. 

Such  intractability results are incomparable to computational-statistical tradeoffs. See \Cref{fig:intractable}. On one hand, they are  stronger in the sense of being runtime lower bounds that hold regardless of sample complexity. On the other hand (as perhaps evidenced by the chronological order of~\cite{Val84}'s observation and~\cite{DGR00}'s result), tradeoffs are more subtle in the sense that they can only be shown for more ``delicate" problems that {\sl can} be time-efficiently learned with sufficiently many samples. Put differently, an intractability result involves only a lower bound, whereas a tradeoff involves both a lower and an upper bound.

 \medskip  

\begin{figure}[h!]
    \centering
    \begin{tikzpicture}
        \def\rectHeight{0.8} 
        \def\rectWidth{5} 

        \fill[pattern=north east lines, pattern color=red!25] 
            (0, -\rectHeight) rectangle (\rectWidth, \rectHeight)
            node[pos=0.5, text width=5cm, text centered] {\normalsize{Information-theoretically \\ impossible}};
        
        \fill[pattern=north east lines, pattern color=yellow] 
            (\rectWidth, -\rectHeight) rectangle (3*\rectWidth, \rectHeight)
            node[pos=0.5, text width=5.5cm, text centered] {\normalsize{Computationally intractable}};


        \draw[-Stealth] (0,-1.5) -- (3*\rectWidth,-1.5) node[midway, fill=white]{Sample complexity};
        
        \draw[dashed] (\rectWidth, -\rectHeight-0.2) -- (\rectWidth, \rectHeight+0.2);
    \end{tikzpicture}
    \caption{An illustration of a computationally intractable learning problem}
    \label{fig:intractable}
\end{figure}
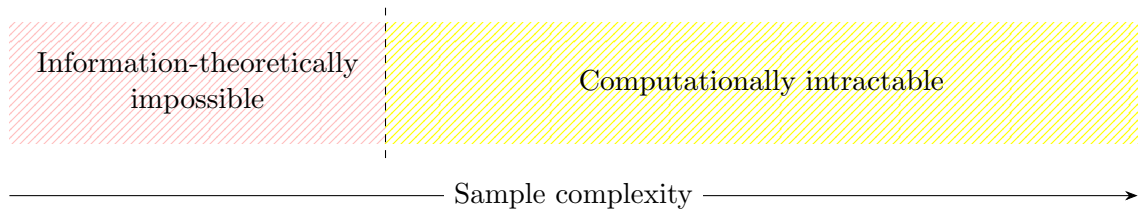

In the same paper, Valiant raised the possibility of basing the intractability of learning on $\NP$-hardness, and subsequently with Pitt~\cite{PV88} they initiated the systematic study of this question.  Despite sustained  efforts, and influential lines of work on the power of cryptographic~\cite{KV94} and more general average-case assumptions~\cite{DLSS14} in learning,  the connection to $\NP$-hardness remains elusive. In fact, Applebaum, Barak, and Xiao~\cite{ABX08} 
established formal barriers to obtaining runtime lower bounds for learning from $\NP$-hardness: They showed that the polynomial hierarchy collapses if such a result is obtained using certain standard types of reductions.


\section{This Work}

 We base computational-statistical tradeoffs in PAC learning on $\NP$-hardness (\Cref{thm:NP intro}), the first such tradeoff across areas based on $\NP$-hardness.

Our lower bound alone carries two implications. First, it shows that the barriers of~\cite{ABX08} can be circumvented: it {\sl is} possible to obtain runtime lower bounds for learning from $\NP$-hardness, via a standard type of reduction, if sample complexity is also taken into account. \Cref{table} summarizes how this fits within the broader context discussed above.\medskip

\backrefsetup{disable}
\begin{table}[h]
 \begin{adjustwidth}{-3.2em}{}
\centering
\renewcommand{\arraystretch}{2}
\begin{tabular}{|c|c|c|}
\hline
 & 
 ~~\begin{tabular}{@{}c@{}}
 Intractability of Learning 
 \vspace{3pt}
       \end{tabular}~
       & 
 \begin{tabular}{@{}c@{}}
  Computational-Statistical Tradeoffs 
  \vspace{3pt}
  \end{tabular}
  \\
\hline \hline 
\begin{tabular}{@{}c@{}}
~~{Cryptographic and}~~ \vspace{-12pt} \\
~~{Average-case assumptions}~~ \end{tabular}  & 
~{\color{teal}$\checkmark$} \cite{Val84,KV94}, \cite{DLSS14},\,...~ & ~{\color{teal}$\checkmark$} \cite{DGR00,Ser99,SSST12}, \cite{DLSS13},\,...~ \\
\hline
\begin{tabular}{@{}c@{}}
$\NP$-hardness
\vspace{3pt}\end{tabular}
& {\color{purple}{$\boldsymbol{\times}$}} Barriers of~\cite{ABX08}& {\color{teal}$\checkmark$} {\bf This work} \\ \hline 
\end{tabular}
\end{adjustwidth}
\caption{Our result in the context of prior work.}
\label{table}
\end{table}
\backrefsetup{enable}

Second, when combined with an observation of~\cite{PV88}, our lower bound yields an {\sl equivalence} between the hardness of learning and the hardness of $\NP$ (\Cref{cor:dichotomy}). 



\subsection{Computational-statistical tradeoffs and the hardness of $\NP$}


Our main result is as follows:\medskip

 \begin{tcolorbox}[colback = white,arc=1mm, boxrule=0.25mm]
\vspace{5pt} 
 \begin{theorem}[Informal; see~\Cref{thm:NP formal} for the formal version]
\label{thm:NP intro}
For every growth function $p(n)\ge n$, there is a concept class $\mathcal{C}$ over $\zo^n$   satisfying the following: 
    \begin{enumerate}
    \item {\bf Very Few Samples $\Rightarrow$ Exists Slow Learner:}  $\VCdim(\mathcal{C}) = 1$ and $\mathcal{C}$ is learnable with $O(1)$ samples in $2^{O(p(n))}$ time.\label{thm:NP intro:VCdim}
            \item {\bf Few Samples $\Rightarrow$ Requires Slow Learners:} If $\NTIME(p(n)) \not\sse \RTIME(t(n))$, any algorithm that learns $\mathcal{C}$ with $O(\log t(n))$ samples requires $(t(n)/p(n))^{\Omega(1)}$  time.\label{thm:NP intro:few samples}
        \item {\bf Many samples $\Rightarrow$ Exists Fast Learner:} There is an algorithm that learns $\mathcal{C}$ with $O(p(n))$ samples in $O(p(n))$ time. \label{thm:NP intro:many samples}
    \end{enumerate}
Furthermore, each concept in $\mcC$ is computed by a circuit of size $O(p(n))$.    
\end{theorem}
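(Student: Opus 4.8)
\emph{Strategy and construction.} The plan is to realize $\mcC$ as a VC-dimension-$1$ class that encodes the verification problem of an $\NP$-complete language $L$, and then read off the three items. After padding, write $L\in\NTIME(p(n))$ via a polynomial-time verifier $M$: an instance $w$ of a suitable length lies in $L$ iff there is a witness $y\in\zo^{p(n)}$ with $M(w,y)=1$. View $\zo^n$ as a disjoint union of ``slices'', one per instance $w$. To each $w\in L$ attach a \emph{single} concept $c_w$, supported entirely inside the slice of $w$, whose positive set has size $O(p(n))$ and which is computed by a circuit of size $O(p(n))$ that hard-wires (information about) a canonical witness $y^w$ of $w$; the positive set is arranged so that recovering $c_w$ from a few labeled examples forces one to solve the search problem for $w$. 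Because at most one concept is ever nonzero on a given slice, no two domain points can be shattered (within a slice there are at most two achievable labelings; across slices the all-ones labeling is impossible), so $\VCdim(\mcC)\le 1$, and since a single point is shattered, $\VCdim(\mcC)=1$. By Sauer--Shelah, $|\mcC|\le 2^n+1\le 2^{O(p(n))}$, and each $c_w$ is a size-$O(p(n))$ circuit, giving the ``furthermore'' clause.

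\emph{Upper bounds.} For item~(i): by the Fundamental Theorem of PAC learning, $O(1/\epsilon\cdot\log(1/\delta))=O(1)$ examples information-theoretically determine the target, and brute-force empirical risk minimization over the $\le 2^{O(p(n))}$ members of $\mcC$ runs in $2^{O(p(n))}$ time. For item~(iii): since every $c\in\mcC$ has a positive set of size $m\le O(p(n))$, the \emph{memorizing} hypothesis $h_S(x):=[(x,1)\in S]$ built from a sample $S$ has expected error $\sum_j q_j(1-q_j)^{|S|}\le m/(e|S|)$, where $q_1,q_2,\dots$ are the $\mathcal D$-masses of the positive points; this is $\le\epsilon$ once $|S|=O(p(n)/\epsilon)$, for \emph{every} distribution $\mathcal D$, and outputting $h_S$ costs $O(p(n))$ time. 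This is exactly why $\Theta(p(n))$ is the operative regime: with that many examples one memorizes the entire (small) positive set, with asymptotically fewer one provably cannot, and in that gap the learner is forced to do something genuinely nontrivial.

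\emph{Lower bound.} For item~(ii), I would use a randomized search-to-decision reduction driven by a hypothesized learner. Given $w$, reconstruct $y^w$ over $O(p(n))$ adaptive stages: having pinned down a prefix, run the learner on a distribution $\mathcal D$ (depending only on $w$ and the recovered prefix) that is realizable by $c_w$ precisely when $w\in L$ and that places more than $\epsilon$ mass on the handful of domain points witnessing the next piece of $y^w$, so that any $\epsilon$-good hypothesis $h$ must label them as $c_w$ does and thereby reveals that piece. After all stages we hold a candidate $y'$; \emph{accept iff $M(w,y')=1$}. Completeness holds since when $w\in L$ the learner succeeds whp at each stage. Soundness is automatic: if $w\notin L$ there is no valid witness, so the final $M$-check rejects regardless of what the learner did. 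A learner using $O(\log t(n))$ examples and running in time $\tau$ then gives a randomized procedure for $L$ of running time $O(p(n)\cdot\tau)+\mathrm{poly}(p(n))$, which cannot be $o(t(n))$ when $\NTIME(p(n))\not\sse\RTIME(t(n))$, forcing $\tau\ge(t(n)/p(n))^{\Omega(1)}$. Crucially the reduction leans only on the sample bound (the few-sample learner is forced to generalize past what it saw) and, in the ``no'' case, on the efficient verifier $M$ rather than on any guarantee about the learner's output --- which is how it escapes the \cite{ABX08} barrier and why the bound applies even to improper learners.

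\emph{Main obstacle.} Everything hinges on building a class that is at once VC-dimension $1$ --- which by Sauer--Shelah leaves each concept very little ``room'' and essentially forces the rigid one-function-per-slice shape, meaning the target is information-theoretically pinned down by $O(1)$ examples --- and yet genuinely hard to learn improperly from few examples, while still being the trivial memorizer's playground once $\Theta(p(n))$ examples are available. The delicate point inside the reduction is an apparent circularity: to run the example oracle one must be able to \emph{label} examples of the target, yet the hardness demands that the target (hence the encoded witness) be hard to determine; threading this needle --- arranging that the labels the reduction needs really are computable without the witness, that the per-stage distributions force an $\epsilon$-good $h$ to leak the next piece of $y^w$, and that all of this coexists with $\VCdim=1$ and with item~(iii)'s cheap memorizer --- is where essentially all the work lies. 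This is precisely the difficulty the \cite{ABX08} barrier formalizes for improper learning, and the sample restriction is the lever that pries it open.
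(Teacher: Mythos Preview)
Your overall shape is right (one concept per instance, supported on that instance's slice, encoding a canonical witness; $\VCdim=1$ by the ``at most one nonzero concept per slice'' argument; the memorizer gives the $O(p(n))$-sample upper bound), and it matches the paper. The gap is in item~(ii), and it is exactly the circularity you flag but do not resolve.

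You write that the reduction will ``run the learner on a distribution $\mathcal D$'' and that the fix is ``arranging that the labels the reduction needs really are computable without the witness.'' But in any construction of this shape the labels \emph{are} (encoded) witness bits; if the labels were efficiently computable from $w$ alone, the hypothesis would not carry information about $y^w$ and querying it would be useless. Your adaptive $O(p(n))$-stage plan does not escape this: even stage one needs labeled samples you cannot produce. Relatedly, your runtime bound $O(p(n)\cdot\tau)+\poly(p(n))$ never uses the sample bound $m=O(\log t(n))$ at all, so nothing in your argument explains why few-sample learners are any harder to exploit than many-sample learners.

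The paper's resolution is the missing idea: do not try to compute the labels---\emph{guess} them. The reduction is a single-shot $\AM$ protocol: Arthur draws $m$ uniform indices $i^{(1)},\dots,i^{(m)}$ into the slice of $z$, Merlin supplies the $m$ labels, the learner is run on these $m$ examples to produce a hypothesis $h$, then $h$ is evaluated on \emph{all} $O(p(n))$ points of the slice to obtain a noisy codeword, which is decoded (an error-correcting code is built into the concept's definition: $\Cert_z(z,i)=\Enc(w^\star)_i$) to a candidate witness $\tilde w$, and finally one outputs $V(z,\tilde w)$. Perfect soundness comes from the verifier check, completeness from the learner's $\eps^\star$-accuracy plus the code's decoding radius. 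The nondeterminism is exactly the $m$ label bits, so converting $\AM$ to $\RTIME$ by brute-forcing Merlin costs $2^{O(m)}$; with $m=O(\log t(n))$ this is $\poly(t(n))$, which is precisely where the sample bound enters and why the \cite{ABX08} barrier is sidestepped. Your proposal is missing both ingredients---the guess-the-labels step (hence the $2^{O(m)}$ factor) and the error-correcting code that lets a single learner run, rather than $p(n)$ adaptive stages, recover the whole witness from an $\eps^\star$-accurate hypothesis.
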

\vspace{1pt}
\vspace{-5pt}
\end{tcolorbox}




    
\paragraph{Example settings of parameters.} Assuming $\RP \ne \NP$,
we get a  class $\mathcal{C}$ with $\VCdim(\mathcal{C}) = 1$ that is learnable with $O(n)$ samples in $O(n)$ time, and yet any algorithm using $O(\log n)$ samples must take superpolynomial time.  

Stronger assumptions yield sharper tradeoffs. Under the randomized exponential time hypothesis (ETH), we get a  class $\mathcal{C}$ with $\VCdim(\mathcal{C})= 1$ that is learnable with $O(n)$ samples in $O(n)$ time, and yet there is a constant $\delta > 0$ such that any algorithm using $\le \delta n$ samples must take $2^{\Omega(n)}$ time. This is therefore a class that is information-theoretically learnable from $O(1)$ samples, for which $\Theta(n)$ samples are both necessary and sufficient for {\sl time-efficient} learning.  See~\Cref{fig:ETH}.

\begin{figure}[h!]
    \hspace{1.25em}
    \begin{tikzpicture}
    \def\l{-0.4}
    \def\lp{-0.3}
    \begin{axis}[
            scale only axis,
            width=0.6\linewidth,
            xtick={-0.3,0.75,0.925},
            xticklabels={$O(1)$,{\raisebox{-1.5ex}{$\delta n$}},$O(n)$},
            ytick={0.2,0.8,1},
            yticklabels={$O(n)$,$2^{\Omega(n)}$,$2^{O(n)}$},
            xmin=\l, xmax=1.4,
            ymin=0, ymax=1.1,
            xlabel = {Sample complexity},
            ylabel = {Runtime},
            y label style = {rotate=-90},
            axis lines=left,
            clip=false,
            legend pos=north west,
            legend style={font=\small, nodes={scale=1.25, transform shape}},
            legend image post style={scale=1.5},
        ]
        \addplot[purple, forget plot] coordinates {(0.925,0.2) (1.4,0.2)};
        \addplot[purple, forget plot] coordinates {(\lp,0.8) (0.75,0.8)};
        \addplot[purple, forget plot] coordinates {(\lp,1) (1.4,1)};

        \addplot[gray, dashed, forget plot] coordinates {(\lp,0) (\lp,1)};
        \addplot[gray, dashed, forget plot] coordinates {(0.75,0) (0.75,0.8)};
        \addplot[gray, dashed, forget plot] coordinates {(0.925,0) (0.925,0.2)};

        \addplot[color=blue,smooth,thick,-,domain=\lp:0.651, forget plot] {0.899};
        \addplot[color=blue,smooth,thick,-,domain=0.9:1.4, forget plot] {0.104};
        \addplot[color=blue,smooth,thick,-,domain=0.65:0.9, forget plot] {-0.0045*atan(300*(x-0.8))+0.5};

        \draw[-Stealth] (0.6,1) to (0.6,0.95);
        \draw[-Stealth] (0.23,0.8) to (0.23,0.85);
        \draw[-Stealth] (1.15,0.2) to (1.15,0.15);

        \draw[] (0.6,1.05) node [fill=white] {\small{\cref{thm:NP intro:VCdim}}};
        \draw[] (0.23,0.75) node [fill=white] {\small{\cref{thm:NP intro:few samples}}};
        \draw[] (1.15,0.25) node [fill=white] {\small{\cref{thm:NP intro:many samples}}};
        
    \end{axis}
\end{tikzpicture}
  \captionsetup{width=.93\linewidth}
\caption{Computational-statistical tradeoffs under randomized ETH. Time-efficient learning of this class is impossible even with $\delta n$ samples, many more than information-theoretically necessary ($O(1)$). On the other hand, a constant-factor increase in sample complexity, from $\delta n$ to $O(n)$, leads to an exponential speedup in the runtimes of algorithms.}
\label{fig:ETH}
\end{figure}
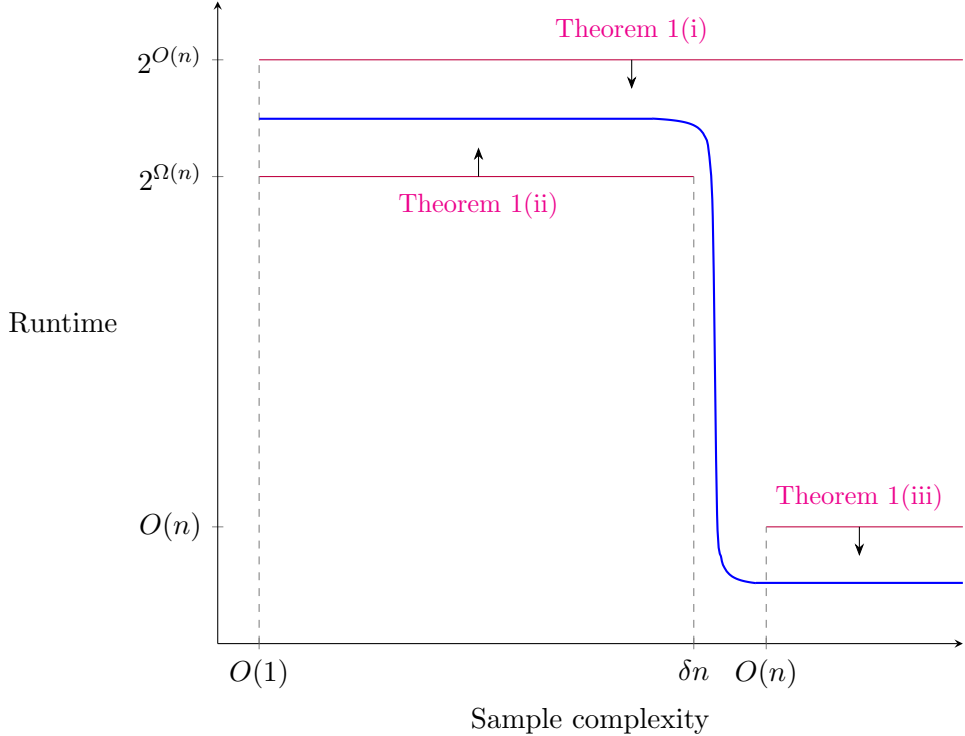

\Cref{thm:NP intro} gives an arbitrarily large separation of $O(1)$ vs.~$\Theta(p(n))$ samples under an appropriate generalization of randomized ETH: There are languages in $\NTIME(p(n))$ that require $2^{\Omega(p(n))}$ time on randomized Turing machines. This is a strong but still standard assumption~\cite{AHU74,Mor81}, saying that the trivial simulation is essentially the best possible.


\subsection{Corollary of~\Cref{thm:NP intro}: An equivalence between the hardness of learning and the hardness of $\NP$}
\label{sec:equivalence}



A basic fact about the computational complexity of learning is that a broad swath of concept classes {\sl can} be time- and sample-efficiently learned if $\RP = \NP$. 
A concept class $\mathcal{C}$ is {\sl $\NP$-enumerable} if there is an efficient algorithm that, given access to an $\NP$ oracle, enumerates all concepts in~$\mathcal{C}$; see~\Cref{def:enumerable-specific} for a formal definition.  Almost all commonly-studied concept classes are $\NP$-enumerable. 
Pitt and Valiant observed the following: 

\begin{fact}[\cite{PV88}]
\label{fact:if RP=NP}
    If $\RP = \NP$, every $\NP$-enumerable class can be learned in polynomial time with $O(\VCdim(\mathcal{C}))$  samples. 
\end{fact}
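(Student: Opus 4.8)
The plan is to pair the statistical content of the Fundamental Theorem of PAC Learning with a search-to-decision argument powered by the collapse $\RP = \NP$. On the statistical side, recall from~\cite{BEHW89} that for every class~$\mathcal{C}$ a sample of size $m = O\!\big(\tfrac1\varepsilon\VCdim(\mathcal{C})\log\tfrac1\varepsilon + \tfrac1\varepsilon\log\tfrac1\delta\big)$ suffices to $(\varepsilon,\delta)$-PAC learn $\mathcal{C}$ in the realizable setting, and that this bound is attained by empirical risk minimization: with probability $\ge 1-\delta$ over such a sample $S$, \emph{every} $h \in \mathcal{C}$ consistent with $S$ has error at most~$\varepsilon$. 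Holding $\varepsilon,\delta$ constant this is $O(\VCdim(\mathcal{C}))$ samples, so it remains to show that, assuming $\RP = \NP$, the algorithmic core --- given a labeled sample $S$ guaranteed to be consistent with some concept of $\mathcal{C}$, output one such concept --- runs in randomized polynomial time; the learner obtained this way is moreover proper.

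For the algorithmic core I would invoke $\NP$-enumerability (\Cref{def:enumerable-specific}): there is a polynomial-time oracle machine $E$ such that $E^{\SAT}(1^n,i)$, as $i$ ranges over $\zo^{\mathrm{poly}(n)}$, produces exactly the concepts of $\mathcal{C}_n$, each given as (say) a circuit. Finding a concept consistent with $S$ is then the search problem ``find an index $i$ with $E^{\SAT}(1^n,i)$ agreeing with $S$'', which has a solution by realizability. The associated decision problem ``does some index extending a given prefix yield a circuit consistent with $S$?'' lies in $\Sigma_2^{\P}$: an existential quantifier over the remaining bits of $i$, layered on the $\SAT$-oracle computation of $E$ and a polynomial-time consistency check. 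A bit-by-bit prefix search issues $\mathrm{poly}(n)$ such queries to pin down a concrete $i^\star$, and one final $\SAT$-oracle evaluation of $E$ returns the consistent circuit. Since $\RP = \NP$ collapses the polynomial hierarchy into $\mathsf{BPP}$, every one of these $\Sigma_2^{\P}$ queries --- and the final evaluation of $E^{\SAT}$ --- can be carried out in randomized polynomial time; amplifying each of the $\mathrm{poly}(n)$ randomized subroutines to failure probability $\delta/\mathrm{poly}(n)$ and union bounding keeps the total failure probability $O(\delta)$, which is folded into the confidence parameter. Together with the sampling bound this gives a proper, polynomial-time learner using $O(\VCdim(\mathcal{C}))$ samples.

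I expect the search-to-decision bookkeeping to be the step needing the most care. A VC-dimension-$1$ class can already contain exponentially many concepts, so one cannot simply list $\mathcal{C}_n$ and must binary-search the index space; this pushes the decision queries up to $\Sigma_2^{\P}$ and so relies on the full collapse of the polynomial hierarchy under $\RP = \NP$, not merely on $\NP \subseteq \RP$. The remaining subtlety is routing the (two-sided, after amplification) error of these randomized simulations correctly through the $\mathrm{poly}(n)$ adaptive queries so that it all aggregates into the stated $(\varepsilon,\delta)$ guarantee.
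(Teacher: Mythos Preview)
Your proposal is correct and matches the paper's approach: the paper's own ``proof'' is the one-line remark that empirical risk minimization can be time-efficiently implemented for $\NP$-enumerable classes when $\RP=\NP$, and your argument is exactly the natural way to unpack that sentence. Your observation that the consistent-seed decision problem sits in $\Sigma_2^{\P}$ (because of the $\NP$ oracle inside the enumerator) and hence needs the full $\mathsf{PH}\subseteq\mathsf{BPP}$ collapse, together with the prefix-search and amplification bookkeeping, is precisely the content the paper elides.
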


The proof of~\Cref{fact:if RP=NP} is simple and follows from the fact that empirical risk minimization can be time-efficiently implemented for $\NP$-enumerable classes if $\RP = \NP$. 

As noted in~\cite{PV88}, the significance of this fact is that short of proving $\RP \ne \NP$, lower bounds against $\NP$-enumerable classes have to be conditioned on complexity-theoretic assumptions---ideally on $\RP \ne \NP$, matching~\Cref{fact:if RP=NP}. Their paper initiated research on such lower bounds.


\paragraph{Representation-dependent vs.~-independent hardness.} 
This effort has proven fruitful in the representation-{\sl dependent} setting where the learner is restricted to output a hypothesis of a certain form (e.g.~proper learning). By now, there is a large body of results showing that it is $\NP$-hard to learn a certain concept class $\mathcal{C}$ with a certain hypothesis class $\mathcal{H}$. See the survey~\cite{Fel16} and the textbook~\cite[Chapter 8]{SB14}.

 However, representation-dependent results leave open the possibility that the learning task at hand becomes tractable for learners that are allowed to return more expressive hypotheses.\footnote{In fact, many representation-dependent hardness results are shown for problems that {\sl can} be efficiently learnable in the representation-independent setting. For example, Pitt and Valiant showed, via a reduction from $0$-$1$ integer programming, that the class $\zo$-weight halfspaces is not properly learnable efficiently unless $\RP = \NP$. The restriction to proper learners is clearly crucial here, since even the class of all halfspaces is efficiently learnable.} Much of the power of learning algorithms comes from their ability to return any hypothesis of their choosing, and so representation-{\sl independent} (i.e.~improper learning) hardness results are important for our understanding of their true limitations. Quoting~\cite{KV94}, ``For practical purposes, the efficient learnability of a [concept] class must be considered unresolved until a polynomial-time learning algorithm is discovered or until a representation-independent hardness result is proved." 
 
 Basing the hardness of representation-independent learning on $\NP$-hardness has proved much more difficult. Existing lower bounds in the representation-independent setting all rely on cryptographic and average-case assumptions. The literature on such lower bounds is equally large; we refer the reader to the pioneering papers by Kearns and Valiant~\cite{KV94} (on cryptographic assumptions) and Daniely, Linial, and Shalev-Schwartz~\cite{DLSS14} (on more general average-case assumptions).


\subsubsection{\Cref{thm:NP intro} $\Rightarrow$ A converse to~\Cref{fact:if RP=NP}}

\Cref{thm:NP intro} gives the first representation-independent lower bounds that are based on $\NP$-hardness. Our concept class~$\mathcal{C}$ is $\NP$-enumerable, and we not only rule out polynomial-time learners that use $O(\VCdim(\mathcal{C}))$ many samples, but even those that use $\Phi(\VCdim(\mathcal{C}))$ many samples for {\sl any} function~$\Phi$. \Cref{thm:NP intro} therefore yields a strong converse to \Cref{fact:if RP=NP}, and together they show an equivalence between $\RP$ vs.~$\NP$ question and the hardness of learning: 

\medskip
 
\begin{tcolorbox}[colback = white,arc=1mm, boxrule=0.25mm]
\vspace{5pt}
\begin{corollary}[An equivalence between hardness of $\NP$ and hardness of learning]
    \label{cor:dichotomy} \ 
\begin{itemize}
    \item[$\circ$] {\bf \cite{PV88}:} If $\RP = \NP$, every $\NP$-enumerable class $\mathcal{C}$ can be learned in polynomial time with $O(\VCdim(\mathcal{C}))$ samples.
    \item[$\circ$] {\bf \Cref{thm:NP intro}:} If $\RP \ne \NP$, there is an $\NP$-enumerable class $\mathcal{C}$ that cannot be learned in polynomial time with $\Phi(\VCdim(\mathcal{C}))$ samples for  any function $\Phi$.  
\end{itemize}
\end{corollary}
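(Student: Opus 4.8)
The plan is to read off both bullets of this corollary from results already in hand, the only real work being a choice of parameters in \Cref{thm:NP intro}. The first bullet is nothing more than \Cref{fact:if RP=NP}, so there is nothing new to prove; I would just recall the reason it holds: empirical risk minimization (ERM) on a sample of size $O(\VCdim(\mathcal{C}))$ outputs a hypothesis of the required error by the Fundamental Theorem of PAC Learning, and for an $\NP$-enumerable $\mathcal{C}$ one can implement ERM in polynomial time given an $\NP$ oracle (search the enumeration for a concept consistent with the sample), which $\RP=\NP$ turns into a polynomial-time randomized procedure.

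For the second bullet I would start from $\RP \ne \NP$: fix a language $L \in \NP \setminus \RP$ together with a polynomial $p(n)\ge n$ such that $L \in \NTIME(p(n))$, and fix an integer $d$ with $p(n) \le n^{d}$ for all large $n$. Apply \Cref{thm:NP intro} (in its formal form, \Cref{thm:NP formal}) to this $p$ to obtain a concept class $\mathcal{C}$ over $\zo^n$ that is $\NP$-enumerable, has $\VCdim(\mathcal{C}) = 1$ by \cref{thm:NP intro:VCdim}, and satisfies \cref{thm:NP intro:few samples}: there is a constant $\alpha>0$ (and fixed constant error and confidence parameters) such that whenever $\NTIME(p(n)) \not\subseteq \RTIME(t(n))$, every learner for $\mathcal{C}$ using $O(\log t(n))$ samples runs in time at least $(t(n)/p(n))^{\alpha}$. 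Since $\VCdim(\mathcal{C})=1$, for every fixed $\Phi$ the quantity $\Phi(\VCdim(\mathcal{C}))=\Phi(1)$ is a constant, so it suffices to show that $\mathcal{C}$ is not learnable in polynomial time with any constant number of samples.

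Suppose for contradiction that some learner $A$ learns $\mathcal{C}$ in time $n^{c}$ using $k_0$ samples, where $k_0$ and $c$ are constants. For a large parameter $k$, set $t(n)=n^{k}$. Since $L \in \NTIME(p(n))$ but $L \notin \RP \supseteq \RTIME(n^{k})$, we have $\NTIME(p(n)) \not\subseteq \RTIME(t(n))$, so \cref{thm:NP intro:few samples} applies to this $t$; and $k_0 = O(\log t(n))$ because $\log t(n) = k\log n \to \infty$, so $A$ is a learner using $O(\log t(n))$ samples. Hence $A$ must run in time at least $(n^{k}/p(n))^{\alpha} \ge n^{\alpha(k-d)}$ on large inputs, forcing $n^{c} \ge n^{\alpha(k-d)}$ and therefore $c \ge \alpha(k-d)$; choosing $k > d + c/\alpha$ gives a contradiction. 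Thus no such $A$ exists, and $\mathcal{C}$ cannot be learned in polynomial time with $\Phi(\VCdim(\mathcal{C}))$ samples for any $\Phi$, which is the second bullet.

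The one step carrying any content — everything else is bookkeeping — is the observation that a \emph{single} class $\mathcal{C}$, depending only on $p$ and hence only on $L$, simultaneously defeats \emph{all} polynomial-time constant-sample learners; this is exactly what is gained by keeping $p$ fixed while letting the time bound $t(n)=n^{k}$ in \cref{thm:NP intro:few samples} range over all polynomials, and it is the reason that theorem is stated with a free parameter $t(n)$ rather than a fixed one. I do not expect any genuine obstacle here: once \Cref{thm:NP intro} is established, the corollary is immediate.
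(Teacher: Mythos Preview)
Your proof is correct, but it takes a slightly different route from the paper's. The paper fixes $p(n)=n$ unconditionally, applies \Cref{thm:NP formal} once with $m(n)=O(\log n)$, and observes that a polynomial-time $O(\log n)$-sample learner would give $\NTIME(n)\subseteq\RTIME(\poly(n))$, hence by padding $\NP\subseteq\RP$; since $O(\log n)$ dominates any constant $\Phi(1)$, the second bullet follows (this is \Cref{cor:np-neq-rp}). You instead tie $p$ to a specific witness $L\in\NP\setminus\RP$ and rule out constant-sample polynomial-time learners by ranging $t(n)=n^k$ over all polynomials in the informal \Cref{thm:NP intro:few samples}. Your route avoids the padding step but pays for it with the varying-$k$ argument, which is more effort than necessary: invoking the formal \Cref{thm:NP formal:few samples} directly, a $k_0$-sample time-$n^c$ learner immediately yields $\NTIME(p(n))\subseteq\RTIME\bigl(2^{O(k_0)}\,n^c\log(n^c)\,\poly(p(n))\bigr)=\RTIME(\poly(n))$, placing $L$ in $\RP$ in one line without ever touching a free parameter $t$. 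The paper's version also nets the slightly stronger conclusion that even $O(\log n)$ samples do not suffice, though that extra strength is not needed for the corollary as stated.
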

\vspace{1pt}
\vspace{-5pt}
\end{tcolorbox}\medskip 


\Cref{cor:dichotomy} therefore shows that the $\RP$ vs.~$\NP$ question can be recast as an independently interesting question about the power of time- and sample-efficient learners.

\paragraph{The~\cite{ABX08} barrier.}
\label{subsec:includes-ABX-intro}
As mentioned, Applebaum, Barak, and Xiao~\cite{ABX08}, building on~\cite{FF93,BT06,AGGM06}, established a formal barrier to proving a statement of the following form: ``If $\RP\ne \NP$, then $\mathcal{C}$ cannot be improperly learned in polynomial time". They showed that the polynomial hierarchy collapses if such a statement is proved using either a Karp or a non-adaptive Turing reduction. We refer the reader to~\cite{Xia09} for an excellent exposition of~\cite{ABX08}.

Our proof of~\Cref{thm:NP intro} uses a non-adaptive Turing reduction, but we are able to sidestep their barrier with a reduction that is specifically tailored to learners that use a certain number of samples. The precise statement of~\cite{ABX08}'s barrier and the reasons why our reduction sidesteps it are somewhat technical; we defer the details to~\Cref{sec:ABX}.

\paragraph{Learnability beyond polynomial-size circuits.} The PAC model requires learners to output an efficient hypothesis, i.e.~a polynomial-size circuit. Without this restriction, every concept class is time- and sample-efficiently learnable by offloading all the computation to the hypothesis (see e.g.~Exercise 2.4 in \cite{KVBook}). Given this restriction, Schapire~\cite{Sch90} used boosting to show the impossibility of learning any strict {\emph{super}}class of polynomial-size circuits. 
 For example, if $\NP/\mathsf{poly} \ne \P /\mathsf{poly}$ then  $\NP/\mathsf{poly}$ is not learnable.

The focus of much of learning theory, including our work and the~\cite{ABX08} barrier, has therefore been on learning  {\emph{sub}}classes of polynomial-size circuits. In this setting, lower bounds are more difficult to establish. While Schapire's result is representational in nature, concerning the impossibility of representing a superclass of polynomial-size circuits as polynomial-size circuits, our result can no longer rely on representational impossibility. We are instead concerned with the computational difficulty of finding a good hypothesis even with the promise that one exists.




\subsection{Extensions to other settings}

Our proof extends to give analogous tradeoffs in other well-studied settings. 

\paragraph{Uniform-distribution learning.} The focus of our exposition is on Valiant's original model of distribution-free PAC learning, where  learners are expected to succeed with respect to all distributions. However, our construction and proof can be modified so that they also give tradeoffs for {\sl uniform-distribution} PAC learning.  In fact, in the modified construction the upper bounds are witnessed by learners that succeed with respect to all distributions, and the lower bound holds even if the learner only has to succeed with respect to the uniform distribution. See~\Cref{thm:uniform}. 


There is a sense in which uniform-distribution learning is ``more average-case" in nature than distribution-free learning. This makes lower bounds, especially those based on worst-case assumptions, more difficult to prove. In the distribution-free setting hardness reductions can (and often) produce  distributions that place their mass on a few carefully chosen inputs, but  the uniform-distribution setting does not admit this degree of freedom. 

\paragraph{Online learning.} Two of the most standard theoretical models for supervised learning are the PAC model (batch learning) and Littlestone's mistake bound model (online learning)~\cite{Lit88}.  We also give analogous tradeoffs for the latter, where Littlestone dimension takes the place of VC dimension and mistake bounds take the place of sample complexity.  See~\Cref{thm:online}.


 There has long been a focus on time-efficient online algorithms with optimal mistake bounds. Indeed, there are two parts to Littlestone's original paper: the first proves that Littlestone dimension characterizes the optimal mistake bound  {\sl if runtime is not a concern} (exactly analogously with VC dimension and sample complexity), and the second gives a time-efficient algorithm, {\sc Winnow}, that achieves the optimal mistake bound for simple concept classes. Our work gives the first example of a concept class that, assuming $\RP \ne \NP$, does not admit any such algorithm.

\section{Technical Overview}



Our approach is simple. For each language $L \in \NP$, we define a learning problem that inherits its time vs.~sample complexity tradeoffs from the {\sl time vs.~nondeterminism} tradeoffs for $L$. Crucial to this connection is a view of nondeterminism as a  quantitative resource~\cite{KF77,GLM96}. While conjectures such as $\RP \ne \NP$ and  randomized ETH concern the distinction between algorithms with no access to nondeterminism versus those with unlimited access to nondeterminism, we will be concerned with the finer-grained distinction between algorithms with access to a few bits of nondeterminism versus those with access to more. 

\paragraph{Time vs.~nondeterminism tradeoffs for $\mathrm{SAT}$.} For concreteness, consider $L = \mathrm{SAT}$. 
Instances of $\SAT$ over $n$ variables can be decided in polynomial time with $n$ bits of nondeterminism: these $n$ bits specify a satisfying assignment. The conjecture that $\RP\ne \NP$ posits that the time complexity of $\SAT$ is superpolynomial for (randomized) algorithms with no access nondeterminism, and similarly,  randomized ETH posits that it is exponential for such algorithms. 

What about algorithms with access to $b(n)$ bits of nondeterminism for $0 < b(n) < n$? Surely giving algorithms access to more bits of nondeterminism can only lead to faster runtimes, but what is the rate of speedups? In other words, we are interested in how the quantity:  
\[ \SAT(b(n)) \coloneqq \text{Time complexity of $\SAT$ given $b(n)$ bits of nondeterminism} \]
behaves as $b(n)$ ranges from $0$ to $n$.
We call this the {\sl time vs.~nondeterminism tradeoff curve} for $\SAT$. 

\subsection{A learning problem based on $\SAT$} 

We now define a learning problem that inherits its time vs.~sample complexity tradeoffs from the curve $\SAT(b(n))$. For an instance $\phi$ of $\SAT$ over $n$ variables, let $\mathrm{Sol}(\phi) \in \zo^n$ denote the lexicographically first satisfying assignment of $\phi$ if $\phi \in \SAT$ and the all-zeroes string otherwise. We define a corresponding boolean function $f_\phi$: 
\[ f_\phi(x) = 
\begin{cases}
\mathrm{Enc}(\mathrm{Sol}(\phi))_i & \text{if $x = (\phi, i)$} \\
0 & \text{otherwise,} 
\end{cases}
\]
where $\mathrm{Enc}(\cdot)$ is the encoding function of an efficient error correcting code with constant rate and distance. Let us refer to inputs $x$ whose prefix match $\phi$ (i.e.~$x = (\phi,i)$ for some $i$) as {\sl useful} and all other inputs as {\sl useless}. Therefore~$f_\phi$ reveals a coordinate of $\mathrm{Enc}(\mathrm{Sol}(\phi))$ on useful inputs and always outputs $0$ on useless ones. 

\paragraph{$\mathcal{C}_{\SAT}$ and its basic properties.} Now consider the concept class $\mathcal{C}_{\SAT}$ comprising of $f_\phi$ for all instances $\phi$ of $\SAT$. We begin with two  claims about $\mathcal{C}_{\SAT}$, corresponding to~\Cref{thm:NP intro:VCdim} and~\ref{thm:NP intro:many samples}. The first is that  $\VCdim(\mathcal{C}_{\mathrm{SAT}}) = 1$. The intuition is that the learner knows $\phi$ after receiving a {\sl single} useful example, and if time-efficiency is not a concern it can then determine  $\mathrm{Sol}(\phi)$ by brute force, thereby learning $f_\phi$ exactly. The second claim is that $\mathcal{C}_{\SAT}$ is time-efficiently learnable with $O(n)$ samples. This is because with that many samples, the learner will have seen most coordinates of $\mathrm{Enc}(\mathrm{Sol}(\phi))$ and can therefore efficiently produce an accurate hypothesis. More work is needed to make these proofs formal (the last sentence for example assumes that the distribution over examples places significant weight on useful ones), but this description captures the essential ideas. 

$\mathcal{C}_{\SAT}$ is therefore learnable  in exponential time with $O(1)$ samples and in polynomial time with $O(n)$ samples.  As in the case of $\SAT$ above, we are now interested in how the quantity 
\[ \mathcal{C}_{\SAT}(s(n)) \coloneqq \text{Time complexity of learning $\mathcal{C}_{\SAT}$ with $s(n)$ samples.} \]
behaves as $s(n)$ ranges from $0$ to $O(n)$, i.e.~the {\sl time vs.~sample complexity tradeoff curve} for $\mathcal{C}_{\SAT}$. 

\subsection{$\SAT$'s curve lowerbounds $\mathcal{C}_{\SAT}$'s curve} The discussion so far already hints at a connection between the complexity of deciding $\SAT$ and that of learning $\mathcal{C}_{\SAT}$. On one hand, $\SAT$ can be decided in exponential time with no bits of nondeterminism and $\mathcal{C}_{\SAT}$ can be learned in exponential time with $O(1)$ samples. On the other hand,  $\SAT$ can be decided in polynomial time with $n$ bits of nondeterminism and $\mathcal{C}_{\SAT}$ can be learned in polynomial time with $O(n)$ samples.

Indeed, we show that the connection extends to all intermediate settings of parameters: \medskip



\begin{lemma}[Informal; see~\Cref{lem:one curve lower bounds the other} for the formal version]
\label{lem:one curve lower bounds the other intro} For all growth functions $s(n)$,
    \[ \mathcal{C}_{\SAT}(s(n)) \succsim \frac{\SAT(s(n))}{n\log n} - \poly(n).\]
\end{lemma}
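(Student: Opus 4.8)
The plan is to establish the lower bound by a reduction in the contrapositive direction: suppose we have a learning algorithm $A$ for $\mathcal{C}_{\SAT}$ that uses $s(n)$ samples and runs in time $T$; I will build a randomized algorithm for $\SAT$ that uses only $O(s(n))$ bits of nondeterminism and runs in time $T \cdot \poly(n)$ (times logarithmic factors), thereby showing $\SAT(O(s(n))) \lesssim \mathcal{C}_{\SAT}(s(n)) \cdot n\log n + \poly(n)$, which rearranges to the claimed inequality. The key conceptual point is that a sample-bounded learner, when fed labeled examples drawn from a distribution supported on the useful inputs $(\phi, i)$ for a fixed target $\phi$, must produce a hypothesis that agrees with $f_\phi$ on most useful inputs — hence recovers most coordinates of $\mathrm{Enc}(\mathrm{Sol}(\phi))$ — and the $s(n)$ labeled examples it needs can be supplied using only about $s(n)$ bits of nondeterminism, since each label is a single bit and each example index can be drawn by the reduction itself (using its own randomness) rather than guessed.

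Concretely, on input $\phi$ over $n$ variables: first decide the distribution $\mathcal{D}$ to be uniform over the useful inputs $\{(\phi,i)\}_{i \in [m]}$ where $m = \Theta(n)$ is the codeword length. Run $A$ with accuracy/confidence parameters set so that its hypothesis $h$ has error at most, say, a small constant $\varepsilon$ less than half the relative distance of the code. Each time $A$ requests an example, the reduction samples $i \sim [m]$ on its own, and obtains the label $f_\phi((\phi,i)) = \mathrm{Enc}(\mathrm{Sol}(\phi))_i$ by \emph{guessing} one nondeterministic bit; over the whole run this is $s(n)$ guessed bits. After $A$ halts with hypothesis $h$, the reduction evaluates $h$ on all $m$ useful inputs to get a received word $w \in \zo^m$, runs the decoder $\mathrm{Dec}$ to obtain a candidate assignment $a \in \zo^n$, and finally checks in $\poly(n)$ time whether $a$ satisfies $\phi$; it accepts iff so. For soundness, if $\phi$ is unsatisfiable the candidate $a$ can never satisfy $\phi$, so we never falsely accept regardless of what the (possibly adversarial, but here we just run it honestly) learner does. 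For completeness, if $\phi \in \SAT$, then with high probability over $A$'s randomness (and the reduction's sampling) the correct nondeterministic guesses yield a labeled sample consistent with $f_\phi$, so $A$ outputs $h$ that is $\varepsilon$-close to $f_\phi$ under $\mathcal{D}$, meaning $h$ agrees with $\mathrm{Enc}(\mathrm{Sol}(\phi))$ on a $(1-\varepsilon)$ fraction of coordinates, which is within the unique-decoding radius, so $\mathrm{Dec}(w) = \mathrm{Sol}(\phi)$ and the final check passes.

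The main obstacle I anticipate is handling the learner's confidence parameter correctly, and with it the $\log n$ and $\poly(n)$ overheads in the statement. A PAC learner only succeeds with probability $1-\delta$, and moreover it expects a genuinely i.i.d.\ sample; I need to make sure that (i) conditioning on "all nondeterministic guesses are correct" the distribution of the $(x,f_\phi(x))$ pairs the learner sees is exactly the i.i.d.\ distribution it expects, which is fine since the reduction draws $i$ i.i.d.\ itself and the label is then deterministic, and (ii) the success probability is boosted high enough — this may require running $A$ several times or using majority/amplification, contributing a $\log$-type factor, or it may be absorbed into re-randomizing over which useful coordinates to query. The division by $n\log n$ in the lemma presumably accounts for: a factor $n$ because one "query to $f_\phi$" in the learning problem costs reading an index $i \in [m]$ with $m = \Theta(n)$ (so $\log n$ bits per example, but nondeterminism cost is only the one label bit — the $n$ more likely tracks the blow-up in going from $s(n)$ samples to the number of nondeterminism bits when the reduction must also verify on all $m$ inputs), and an extra $\log n$ from amplifying the learner's confidence. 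I would also need to double-check the edge behavior when $s(n)$ is tiny (the $\poly(n)$ additive term covers the cost of the decoder and the satisfiability check, which is incurred regardless), and confirm that the nondeterminism-bounded model being used (cf.\ the $\NTIME$/$\RTIME$ framework referenced in \Cref{thm:NP intro}) charges exactly the number of guessed bits, so the accounting lines up with the formal \Cref{lem:one curve lower bounds the other}.
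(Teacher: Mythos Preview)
Your proposal is correct and essentially matches the paper's proof: the paper frames the reduction as an $\AM$ protocol (Arthur samples the indices, Merlin supplies the $s(n)$ label bits, then Arthur runs the learner, evaluates the hypothesis on all $\Theta(n)$ useful inputs, decodes, and checks the resulting assignment against $\phi$), which is exactly your construction, with perfect soundness coming from the final satisfiability check just as you argue. The only minor correction is that the extra $\log$ factor in the denominator does not come from confidence amplification but from the $O(T\log T)$ overhead of simulating the learner's output hypothesis as a Turing machine on each of the $\Theta(n)$ useful inputs.
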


In words,~\Cref{lem:one curve lower bounds the other intro} says that the task of learning $\mathcal{C}_{\SAT}$ inherits its time vs.~sample complexity tradeoffs from the time vs.~nondeterminism tradeoffs of $\SAT$. Pictorially, we visualize this as the tradeoff curve for $\SAT$ (essentially) lowerbounding the tradeoff curve $\mathcal{C}_{\SAT}$. See~\Cref{fig:lower-bound-curve}. \Cref{lem:one curve lower bounds the other intro} implies, for example, that if the time complexity of deciding $\SAT$ with $s(n)$ bits of nondeterminism is superpolynomial, then so is the time complexity of learning $\mathcal{C}_{\SAT}$ with $s(n)$ many samples.

\begin{figure}[h!]
    \hspace{0.6em}
    \begin{tikzpicture}
        \def\l{-0.4}
        \def\lp{-0.3}
        
        \def\hOffset{0.07}  
        \def\vOffset{0.1}  
        
        \begin{axis}[
            scale only axis,
            width=0.6\linewidth,
            xmajorticks=false,
            ytick={0.1,0.9},
            yticklabels={$\poly(n)$,$2^{O(n)}$},
            xmin=\l, xmax=1.4,
            ymin=0.05, ymax=1.1,  
            axis lines=left,
            clip=false,
            legend pos=south west,
            legend style={nodes={scale=0.9, transform shape}},
            xlabel = {Number of nondeterministic bits / samples},
            ylabel = {Runtime},
            y label style = {rotate=-90},
            y label style = {at={(-0.19,0.5)}},
            x label style = {at={(0.45,-0.05)}},
        ]
        \addplot[color=purple, smooth, thick, domain=\lp:0.651] {0.899};
        \addplot[color=purple, smooth, thick, domain=0.9:1.4, forget plot] {0.104};
        \addplot[color=purple, smooth, thick, domain=0.65:0.9, forget plot] {-0.0045*atan(300*(x-0.8))+0.5};

        \addplot[color=blue, smooth, thick, domain=\lp:0.651+\hOffset] {0.899 + \vOffset};  
        \addplot[color=blue, smooth, thick, domain=0.9+\hOffset:1.4] {0.104 + \vOffset};   
        \addplot[color=blue, smooth, thick, domain=0.65+\hOffset:0.9+\hOffset] {-0.0045*atan(300*((x-\hOffset)-0.8))+0.5 + \vOffset}; 
        
        \draw[-Stealth] (0.25,0.899) to (0.25,0.95);

        
        \draw[] (0.22,0.85) node [fill=white,text width=5.5cm, centered] {{\footnotesize{SAT's time vs.~nondeterminism curve}}};
        \draw[] (0.35,1.05) node [fill=white,text width=5.5cm, centered] {\footnotesize{$\mcC_{\text{SAT}}$'s time vs.~samples curve}};
    \end{axis}
    \end{tikzpicture}
    \caption{An illustration of~\Cref{lem:one curve lower bounds the other intro}}
    \label{fig:lower-bound-curve}
\end{figure}


The proof of~\Cref{lem:one curve lower bounds the other intro} proceeds via the contrapositive: we show how a time- and sample-efficient algorithm for learning $\mathcal{C}_{\SAT}$ yields a time- and nondeterminism-efficient algorithm for deciding $\SAT$. To do so, we give a nondeterministic reduction from deciding $\SAT$ to learning $\mathcal{C}_{\SAT}$, where the number of bits of nondeterminism the reduction uses corresponds to the sample complexity of the learning algorithm.


\Cref{thm:NP intro:few samples} follows   straightforwardly from~\Cref{lem:one curve lower bounds the other intro} and the fact that an algorithm using $b$ bits of nondeterminism can be simulated by one using no nondeterminism with a runtime overhead of~$2^b$.  However, our description  omits a slight technical complication. Owing to the inherent randomness of the PAC model (specifically, the distribution over examples) 
and the internal randomness of PAC learners,~\Cref{lem:one curve lower bounds the other intro} actually lowerbounds $\mathcal{C}_{\SAT}$'s curve in terms of the {\sl Arthur-Merlin}  time complexity of deciding~$\SAT$ with a certain number of nondeterministic bits.  ($\AM$ is a randomized variant of $\NP$~\cite{Bab85}.) See~\Cref{lem:one curve lower bounds the other}, the formal version of~\Cref{lem:one curve lower bounds the other intro}. This is ultimately why~\Cref{thm:NP intro:few samples} is based on the hardness of simulating $\NTIME$ with $\RTIME$ instead of $\mathsf{DTIME}$. 

\begin{remark}[Obtaining the full range of parameters in~\Cref{thm:NP intro}] 
\label{rem:full range} 
This construction cannot give a larger gap than $O(1)$ vs.~$\Theta(n)$ samples since $\SAT$ {\sl can} be efficiently decided with $n$ bits of nondeterminism. To get an arbitrarily large gap of $O(1)$ vs.~$\Theta(p(n))$, we therefore look beyond $\SAT$ to languages in $\NP$ that use $p(n)$ bits of nondeterminism. The ideas in this section extend  to show, for any language $L \in \NP$, the existence of a concept class $\mathcal{C}_L$ that inherits its time vs.~sample complexity tradeoffs from the time vs.~nondeterminism tradeoffs for $L$. The sought-for tradeoffs for $\mathcal{C}_L$ then hold under corresponding hardness assumptions about~$L$. 
\end{remark}

\subsection{Details of the~\cite{ABX08} barrier and why we evade it}
\label{sec:ABX}

\cite{ABX08} considered {\sl nonadaptive Turing reductions} from an $\NP$-hard language $L$ to the learning of a concept class $\mathcal{C}$. These are  reductions that map an instance $z$ to multiple, say~$k$, learning instances such that the hypotheses for these instances allow one to efficiently determine if $z \in L$. In more detail, they considered reductions that produce generators $G_1,\ldots,G_k$ of distributions over $\zo^n\times \zo$. On a random seed $\br$, the $i$-th generator $G_i(\br)$ outputs a labeled example $(\bx,\by)$. We associate $G_i$ with the function $f_i$ whose labeled examples it generates, and the corresponding learning task is to output an accurate hypothesis $h_i$ for $f_i$ if $f_i \in \mathcal{C}.$  The reduction is allowed to produce $G_i$'s whose corresponding $f_i$ is not in $\mathcal{C}$, but in this case the learner is off the hook and there are no requirements on its performance. \cite{ABX08} further required the reduction to decide if $z \in L$ by accessing the $h_i$'s only as blackboxes. They proved that such a nonadaptive blackbox Turing reduction can be used to collapse the polynomial hierarchy to the second level. Since the polynomial hierarchy is believed to be infinite, this suggests that such a reduction does not exist.\footnote{\cite{ABX08} also established barriers for more general types of reductions (constant-round Turing reductions) and showed that their barriers even hold in the setting of agnostic learning (where $f_i$ only has to be close to $\mathcal{C}$), but it is this result that is most relevant to us.} 

Our proof of~\Cref{thm:NP intro} can be cast as a nonadaptive blackbox Turing reduction and  therefore falls within~\cite{ABX08}'s framework.  However, we evade their barrier because our reduction only works for learners that access a certain number of samples from the $G_i$'s. Our reduction actually produces $G_i$'s for which {\sl none} of the corresponding $f_i$'s belong to $\mathcal{C}$. Crucially, though, since we are dealing with learners with a bound $m$ on their sample complexity, it suffices for us to show that we are likely to ``get lucky" in the sense of having one of the $G_i$'s produce a batch of $m$ labeled examples that happen to coincide with some function in $\mathcal{C}$. This guarantee no longer holds, and consequently our overall reduction breaks, for learners that are allowed $\gg m$ samples---necessarily so, as otherwise the~\cite{ABX08} barrier would apply.

\section{Discussion and Future Work}
\label{sec:future work}

Understanding the relative strengths of cryptographic, average-case, and worst-case hardness assumptions, the subject of Impagliazzo's Five Worlds survey~\cite{Imp95fiveworlds}, is a central goal of complexity theory. Pinning down the minimal assumptions necessary for computational lower bounds in learning has been challenging because there is {\sl both} a worst-case and an average-case aspect to the PAC model:  the learner has to succeed with respect to a worst-case concept in the concept class, and yet its hypothesis only has to achieve high accuracy with respect to a distribution over examples.  

It has long been noted~\cite{IL90,BFKL93} that cryptographic assumptions, which are fundamentally average-case in nature, yield learning lower bounds that are ``stronger than necessary": not only do they show that a {\sl worst-case} concept is hard to learn, they even give a {\sl distribution} over concepts under which learning is average-case hard. Quoting~\cite{BFKL93}, ``This state of affairs naturally leads one to wonder
if intractability results for learning in fact {\sl require}
cryptographic assumptions." On the other hand, as formalized in~\cite{ABX08}, the distribution over examples makes it challenging to base learning lower bounds on worst-case assumptions.  

Our work makes progress towards resolving this tension by showing that it {\sl is} possible to obtain computational lower bounds for learning from $\NP$-hardness---if sample complexity is also taken into account. Sample-efficiency is as important a desideratum as time-efficiency, and so it is of interest to understand the limitations of time- and sample-efficient algorithms.
One avenue for future work is to see if our techniques could apply to concept classes with long-conjectured computational-statistical tradeoffs (e.g.~decision lists~\cite{Blu90}). Another is to understand if the same is true for the many other settings, beyond PAC learning, in which such tradeoffs are being studied.










\section{Preliminaries}

\paragraph{Notation and naming conventions.} We write $[n]$ to denote the set $\{1,2,\ldots,n\}$. We use lowercase letters to denote bitstrings e.g. $x,y\in\zo^n$ and subscripts to denote bit indices: $x_i$ for $i\in [n]$ is the $i$th index of $x$. The length of a bitstring is $|x|$. We use \textbf{boldface} letters, e.g.~$\bx,\by$, to denote random variables.  For functions $f,g:\zo^n\to\zo$ and distributions $\D$, we write $\error_{\D}(f,g)$ to denote $\Prx_{\bx\sim\D}[f(\bx)\neq g(\bx)]$. When $\D$ is the uniform distribution, we drop the subscript and write $\error(f,g)$. The \textit{sparsity} of a function $f:\zo^n\to\zo$ refers to the number of $1$-inputs. A function is $k$-sparse if $|f^{-1}(1)|\le k$.

\subsection{Learning theory}

\begin{definition}[PAC learning \cite{Val84}]
    \label{def:PAC-learning}
    For any concept class $\mcC$, we say an algorithm $A$ \emph{learns $\mcC$ to error $\eps$ over distribution $\mathcal{D}$ using $m$ samples} if the following holds: For any $f \in \mcC$, given $m$ independent samples of the form $(\bx, f(\bx))$ where $\bx \sim \mathcal{D}$, $A$ returns a hypothesis $h$, that with probability at least $2/3$, satisfies $\error_{\D}(f,h)\le \eps.$
    
    We say that $A$ learns $\mcC$ to error $\eps$ using $m$ samples if \emph{for every} distribution $\D$, $A$ learns $\mcC$ to error $\eps$ over the distribution $\mathcal{D}$ using $m$ samples.
\end{definition}

\begin{definition}[VC dimension \cite{VC71}]
    Let $\mcC$ be a concept class consisting of Boolean functions. A set $S\sse \zo^n$ is \emph{shattered} if for every labeling $\ell:S\to\zo$, there is an $f\in\mcC$ such that $f(x)=\ell(x)$ for every $x\in S$. The \emph{VC dimension} of $\mcC$, $\VCdim(\mcC)$, is the size of the largest set which is shattered. 
\end{definition}

A basic fact about VC dimension is that it can be upperbounded in terms of the size of the concept class: 
\begin{fact}
\label{fact:vc-upper}
    For a finite concept class $\mcC$, we have $\VCdim(\mcC)\le\log |\mcC|$.
\end{fact}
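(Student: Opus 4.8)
The plan is to exhibit an injection from the set of labelings of a shattered set into $\mathcal{C}$, which immediately forces $|\mathcal{C}|$ to be at least $2^{\VCdim(\mathcal{C})}$ and hence $\VCdim(\mathcal{C}) \le \log|\mathcal{C}|$.

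Concretely, let $d = \VCdim(\mathcal{C})$ and let $S \subseteq \zo^n$ be a shattered set with $|S| = d$. By the definition of shattering, for every labeling $\ell : S \to \zo$ there exists $f_\ell \in \mathcal{C}$ with $f_\ell(x) = \ell(x)$ for all $x \in S$; fix one such $f_\ell$ for each of the $2^d$ labelings. The first step is to observe that the map $\ell \mapsto f_\ell$ is injective: if $\ell \ne \ell'$, then they differ at some $x \in S$, so $f_\ell(x) = \ell(x) \ne \ell'(x) = f_{\ell'}(x)$ and thus $f_\ell \ne f_{\ell'}$. Since this injection has domain of size $2^d$ and codomain $\mathcal{C}$, we get $|\mathcal{C}| \ge 2^d$. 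Taking base-$2$ logarithms yields $d \le \log|\mathcal{C}|$, as claimed.

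There is essentially no obstacle here — the only thing to be mildly careful about is the degenerate case where $\mathcal{C}$ shatters the empty set (so $d = 0$), in which case the bound reads $0 \le \log|\mathcal{C}|$ and holds trivially as long as $\mathcal{C}$ is nonempty; and if $\mathcal{C} = \emptyset$ then no set (not even $\emptyset$) is shattered, so $\VCdim(\mathcal{C})$ is conventionally $-\infty$ or undefined and the statement is vacuous. For a finite nonempty $\mathcal{C}$ the argument above is complete.
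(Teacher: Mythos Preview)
Your proof is correct and is the standard argument. The paper itself does not supply a proof of this fact---it is stated as a well-known basic property of VC dimension and left unproved---so there is no ``paper's approach'' to compare against; your injection argument is exactly the canonical justification one would expect.
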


A \emph{empirical risk minimizer} for a concept class $\mcC$ is an algorithm that draws a random sample and outputs a hypothesis $h\in\mcC$ with lowest error over the sample. In our setting, the examples are labeled by a concept $f\in\mcC$, and so an empirical risk minimizer always outputs some hypothesis consistent with the examples. With enough samples, any empirical risk minimizer PAC learns $\mcC$:

\begin{fact}[Empirical risk minimizer is a PAC learner \cite{BEHW89}]
\label{fact:consistent}
    Let $\mcC$ be a finite concept class. An empirical risk minimizer learns $\mcC$ to error $\eps$ for any $\eps>0$ with sample complexity $\Theta(\log|\mcC|/\eps)$.
\end{fact}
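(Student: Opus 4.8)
The plan is to prove the substantive direction — that $m = O(\log|\mcC|/\eps)$ samples suffice — via the classical union bound over ``bad'' hypotheses, and then to note that the matching $\Omega(\log|\mcC|/\eps)$ lower bound implicit in the $\Theta(\cdot)$ follows from a standard packing argument.

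First I would fix an arbitrary target concept $f \in \mcC$ and an arbitrary distribution $\D$ over $\zo^n$, and draw $m$ independent labeled examples $(\bx^{(1)}, f(\bx^{(1)})), \dots, (\bx^{(m)}, f(\bx^{(m)}))$ with each $\bx^{(j)} \sim \D$. Since $f$ itself is consistent with every such sample, the empirical risk minimizer is well-defined and returns some $h \in \mcC$ that agrees with $f$ on all the $\bx^{(j)}$. Call $h \in \mcC$ \emph{$\eps$-bad} if $\error_{\D}(f,h) > \eps$. The minimizer fails — i.e.~returns an $h$ with $\error_{\D}(f,h) > \eps$ — only if the drawn sample happens to be consistent with at least one $\eps$-bad hypothesis, so it suffices to upper bound the probability of that event over the draw of the sample.

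Next I would bound, for a single fixed $\eps$-bad hypothesis $h$, the probability that all $m$ examples are consistent with it. Since $\Prx_{\bx \sim \D}[h(\bx) \neq f(\bx)] > \eps$, each example independently satisfies $h(\bx^{(j)}) = f(\bx^{(j)})$ with probability less than $1 - \eps$, so the probability that all $m$ of them do is less than $(1-\eps)^m \le e^{-\eps m}$. Taking a union bound over the at most $|\mcC|$ many $\eps$-bad hypotheses, the failure probability is less than $|\mcC| \cdot e^{-\eps m}$; choosing $m = \lceil (\ln|\mcC| + \ln 3)/\eps \rceil = O(\log|\mcC|/\eps)$ makes this at most $1/3$, giving success probability at least $2/3$ for every $f$ and every $\D$, exactly as \Cref{def:PAC-learning} requires.

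The upper bound is entirely routine; the only point needing a little care is that the guarantee must hold no matter which consistent hypothesis the minimizer breaks ties toward, which is precisely why we bound the probability that the sample is simultaneously consistent with \emph{some} bad hypothesis rather than a particular one. For the lower bound implicit in the $\Theta(\cdot)$, I would exhibit, for each target size, a class on which $\Omega(\log|\mcC|/\eps)$ samples are information-theoretically necessary — for instance the class of all functions supported on a fixed set of about $\log|\mcC|$ ``heavy'' points, each carrying $\D$-mass roughly $\eps/\log|\mcC|$, against an all-zeros background — and observe that until a constant fraction of those points have actually appeared among the examples, the learner has no information about their labels, forcing error exceeding $\eps$ with constant probability. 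I expect this lower-bound construction, rather than the union bound, to be the only mildly non-trivial step.
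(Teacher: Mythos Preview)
The paper does not actually prove \Cref{fact:consistent}; it is stated as a cited fact from \cite{BEHW89} with no accompanying argument, so there is no in-paper proof to compare against. Your union-bound argument for the $O(\log|\mcC|/\eps)$ upper bound is correct and is precisely the standard one --- indeed, the paper reproduces essentially the identical calculation in its proof of \Cref{claim:large-samples-upper-bound}. Your treatment of the $\Theta(\cdot)$ lower bound as an existential statement (exhibiting a worst-case class of each size) is also the appropriate reading, since the lower bound plainly cannot hold for \emph{every} finite $\mcC$.
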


\subsection{Complexity theory} 
\subsubsection{Randomized time}
\begin{definition}[Randomized time with one sided error]
    A language $L$ is in $\RTIME(t(n))$ if there is a randomized algorithm $A$ running in worst-case time $O(t(n))$ with the following properties. For all $x\in\zo^*$, $A$ takes as input a random string $\br$ and $x$ and satisfies:
    \begin{enumerate}
        \item if $x\in L$, then $\Prx_{\br}[A(\br,x)=1]\ge 2/3$;
        \item if $x\not\in L$, then $\Prx_{\br}[A(\br,x)=0]=1$.
    \end{enumerate}
\end{definition}
This definition is a refinement of the standard complexity class $\mathsf{RP}$ which is the class of languages in $\RTIME(t(n))$ for some $t(n)=\poly(n)$.

The randomized exponential time hypothesis states that randomized algorithms for SAT require exponential time.
\begin{hypothesis}[Randomized exponential time hypothesis (ETH) \cite{CIKP03,DHHTMTW14}]
There exists a constant $\delta>0$ such that $3$-\textnormal{SAT} on $n$ variables cannot be solved in randomized time $O(2^{\delta n})$.
\end{hypothesis}

\subsubsection{Complexity classes involving nondeterminism}
\begin{definition}[Nondeterministic time and nondeterministic time with bounded certificate size]
    The class of languages decided in nondeterministic time $O(t(n))$ is $\NTIME(t(n))$. The class of languages which can be decided by a verifier running in time $O(t(n))$ with certificates of size $O(p(n))$ is $\NTIME(t(n),p(n))$.
\end{definition}
The definition of $\NTIME(t(n),p(n))$ is a refinement of the standard definition $\NTIME(t(n))$ which is itself a refinement of $\NP$. Since $p(n)\le t(n)$ by definition, we have $\NTIME(t(n))=\NTIME(t(n),t(n))$. Furthermore, $\NP$ is the union of all languages in $\NTIME(t(n))$ for some $t(n)=\poly(n)$. 

We will also use a similar refinement of the complexity class $\AM$.

\begin{definition}[$\AM$~protocols with bounded proof size]
    A language $L$ is in $\AMTIME(t(n),p(n))$ if it is decided by an Arthur-Merlin $(\AM)$ protocol where Arthur runs in time $O(t(n))$ and uses proofs of length $O(p(n))$ from Merlin. Formally, an $\AM$~protocol is a Turing machine $M$ which takes an input $z$, $r$ random bits, and a proof $w$, and decides whether $z\in L$. The protocol $M$ must satisfy \textnormal{completeness} and \textnormal{soundness}.
    \begin{enumerate}
        \item \textbf{Completeness}: If $z\in L$, then
        $$
        \Prx_{\br}[\textnormal{there exists a }w\textnormal{ such that }M(\br,w,z)=1]\ge 2/3.
        $$
        \item \textbf{Soundness}: If $z\not\in L$, then 
        $$
        \Prx_{\br}[\textnormal{there exists a }w\textnormal{ such that }M(\br,w,z)=1]\le 1/3.
        $$
    \end{enumerate}
The protocol has \textnormal{perfect soundness} if the corresponding probability is $0$ for every $z\not\in L$. Furthermore, if the runtime of $M$ is $O(t(n))$ and $|w|\le O(p(n))$, then $L\in \AMTIME(t(n),p(n))$.
\end{definition}

Informally, an $\AM$ protocol operates by first having Arthur, the verifier, send a random string $\br$ to Merlin who then provides a proof $w$ that $z\in L$. If $z$ is in fact in $L$, then, with high probability over $\br$, there must exist some proof that convinces Arthur, and otherwise all proofs must fail. 

In this paper, all of the $\AM$ protocols will have perfect soundness. 

An important difference between $\AMTIME(t(n),p(n))$ and the standard class $\AM$ is that for $L\in \AM$ one can assume the $\AM$ protocol for $L$ has perfect completeness. This is because one can convert a protocol with imperfect completeness into one with perfect completeness by increasing the size of the proof used by Merlin. This transformation does not apply to the class $\AMTIME(t(n),p(n))$.

\subsubsection{Error-correcting codes}

\begin{definition}[Efficient binary error-correcting codes] 
\label{def:codes}
An \emph{efficient binary error-correcting code} is a pair of polynomial-time algorithms, the encoder $(\Enc)$ and the decoder $(\Dec)$, with the following properties. There are constants $c>1$ and $\eps^\star >0$ such that $\Enc:\zo^n\to \zo^{c n}$ and $\Dec:\zo^{c n}\to \zo^{n}$, and given $y$ such that $y=\Enc(x)$, the decoder $\Dec$ can decode messages which have up to $\eps^\star c n$ errors, i.e.~$\Dec(\Tilde{y})=x$ for any $\Tilde{y}$ which differs from $y$ on at most $\eps^\star c n$ coordinates. The \emph{rate} of the code is $1/c$.
\end{definition}

Binary error-correcting codes of this form exist. For example, one could use expander codes~\cite{SS96}. Using the explicit construction of expanders in \cite{CRVW02}, for every $c>1$ there is a sufficiently small $\eps^\star>0$, such that efficient binary error-correcting codes with parameters $c$ and $\eps^\star$ exist. 


\subsubsection{Oracles and enumerability}

\begin{definition}[Oracle Turing machines]
    A \emph{oracle Turing machine} $M$ is a standard Turing machine augmented with a \emph{oracle tape}. Given access to an \emph{oracle} $\mcO: \zo^{\star} \to \zo$, at unit cost, $M^{\mcO}$ may determine the value of $\mcO(x)$ where $x$ is the quantity currently written onto its oracle tape. \emph{Nondeterministic oracle Turing machines} and \emph{randomized oracle Turing machines} are analogous extensions of standard nondeterministic and randomized Turing machines.
\end{definition}

\begin{definition}[Enumerability of concept classes]
    \label{def:enumerable-general}
    For any oracle $\mcO$, we say a concept class $\mcC$ is $\mcO$-enumerable if there exists an efficient oracle Turing machine $M$ and $s(n) = \poly(n)$ with the following properties for all $n \in \N$:
    \begin{enumerate}
        \item For every seed $s \in \zo^{s(n)}$, the output of $M^{\mcO}(S)$ is a circuit computing some function in $\mcC$.
        \item Every function in $\mcC$ is the output of $M^{\mcO}(s)$ for some seed $s \in \zo^{s(n)}$.
    \end{enumerate}
\end{definition}

\begin{definition}[$\NP$- and $\NTIME(t(n),p(n))$-enumerability]
    \label{def:enumerable-specific}
    We say a concept class is $\NP$-enumerable if it is $\mcO$-enumerable for an oracle $\mcO$ computing a language in $\NP$. $\NTIME(t(n),p(n))$-enumerability is the analogous definition for the class $\NTIME(t(n),p(n))$.
\end{definition}

\section{The hard concept class and its basic properties}

\subsection{A concept class $\mathcal{C}_{L,V}$ for every language $L$ and verifier $V$}

First, we define the concept class that will be essential to the proof. For every verifier $V$ which verifies a language $L\in \NTIME(t(n),p(n))$, we obtain a distinct concept class $\mathcal{C}_{L,V}$.
\begin{definition}[Definition of $\mathcal{C}_{L,V}$]
    \label{def:concept-class}
    Let $L\in \NTIME(t(n),p(n))$ for growth functions $t(n),p(n)\ge n$ and let $V$ be a time-$O(t(n))$ verifier for $L$ using size-$O(p(n))$ certificates. Let $(\Enc,\Dec)$ be the error-correcting code from \Cref{def:codes} with rate $1/c$ for constant $c>1$. For each $z\in \zo^n$, we define a concept $\Cert_z:\zo^{n+\log (c p(n))}\to \zo$ as follows. If $z\not\in L$, then $\Cert_z$ is the constant $0$ function. Otherwise, let $w^\star\in \{0,1\}^{p(n)}$ be the lexicographically first certificate such that $V(z,w^\star)=1$ and define $\Cert_z(x)$ as 
    $$
    \Cert_z(x)=\begin{cases}
        \Enc(w^\star)_i & \text{if }x=(z,i)\\
        0 & \text{otherwise}
    \end{cases}
    $$
    where $i\in \zo^{\log (cp(n))}$ is interpreted as an integer index in the range $\{1,\ldots,cp(n)\}$. We call examples $x$ where $x=(z,i)$ for some $i$ as \textnormal{useful} examples and all other examples \textnormal{useless}.  Finally, the concept class $\mathcal{C}_{L,V}$ is defined to be the set of all $\Cert_z$ for $z\in\zo^n$. 
\end{definition}

\subsection{Formal statement of \Cref{thm:NP intro}}

Using the concept class in \Cref{def:concept-class}, we prove \Cref{thm:NP intro}. The formal version of the theorem is as follows.
\begin{theorem}[Formal statement of \Cref{thm:NP intro}]
\label{thm:NP formal}
    For every time constructible growth function $p(n)\ge n$,  there is a concept class $\mcC$ such that the following holds.
    \begin{enumerate}
        \item We have $\VCdim(\mcC)=1$ and $\mcC$ is $\NTIME(p(n)\log p(n),p(n))$-enumerable. There is an algorithm that, for every $\eps>0$, learns $\mcC$ to error $\eps$ using $O(1/\eps)$ samples in time $2^{O(p(n))}$.\label{thm:NP formal:VCdim} 
        \item Let $\eps^\star>0$ be the constant from \Cref{def:codes}. If there is an $m(n)$-sample, time-$t(n)$ algorithm that learns $\mcC$ w.h.p.~to error $\eps^\star$, then $\NTIME(p(n))\sse \RTIME\left(2^{O(m(n))}t(n)\log t(n)\poly(p(n))\right)$.\label{thm:NP formal:few samples}
        \item There is an algorithm that, for every $\eps>0$, learns $\mcC$ to error $\eps$ whose sample complexity and runtime are both $O(p(n)/\eps)$.\label{thm:NP formal:many samples}
    \end{enumerate}
    Furthermore, each concept in $\mcC$ is computed by a circuit of size $O(p(n))$, and in fact, even a decision tree of size $O(p(n))$.
\end{theorem}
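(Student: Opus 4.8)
The plan is to instantiate the concept class $\mathcal{C}$ as $\mathcal{C}_{L,V}$ from Definition~\ref{def:concept-class}, where $L$ is a language that uses exactly $p(n)$ bits of nondeterminism and is ``maximally hard'' given that budget, and $V$ is its natural verifier. Concretely, since $p(n)$ is time-constructible one can take $L$ to be (a padded version of) a generic $\NTIME(p(n)\log p(n), p(n))$-complete problem — e.g.\ bounded-nondeterminism $\SAT$ on $p(n)$-bit certificates — so that $\NTIME(p(n))$ reduces to deciding $L$. With this choice I would prove the three items separately, essentially by translating the ``$\SAT$/$\mathcal{C}_{\SAT}$'' sketch from the technical overview into the $\mathcal{C}_{L,V}$ language and combining it with Lemma~\ref{lem:one curve lower bounds the other intro} (its formal version) for item~(ii).

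\emph{Item (i).} For the VC-dimension bound: I claim no two-element set $\{x, x'\}$ is shattered. The key case analysis is on whether each of $x,x'$ is useful; for useless inputs every concept outputs $0$, so a set containing a useless input cannot be shattered (it admits no labeling putting a $1$ there). For two useful inputs $x = (z,i)$, $x' = (z',j)$: if $z \neq z'$ then any single concept $\Cert_{z''}$ is nonzero on at most one of the two prefixes, so the labeling $(1,1)$ is unattainable; if $z = z'$, then the pair of bits $(\Enc(w^\star_z)_i, \Enc(w^\star_z)_j)$ is a single fixed pair, so again not all four labelings appear. Hence $\VCdim(\mathcal{C}) \le 1$, and it is exactly $1$ since a single useful input $(z,i)$ is shattered by choosing $z \in L$ versus $z \notin L$ appropriately. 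For $\NTIME(p(n)\log p(n), p(n))$-enumerability: the enumerator takes a seed $z \in \zo^n$, and with an oracle for $L' := \{(z, i, b) : \Enc(w^\star_z)_i = b \text{ where } w^\star_z \text{ is the lex-first witness}\}$ — which lies in $\NTIME(p(n)\log p(n), p(n))$ since guessing and verifying a certificate plus checking lex-minimality of a prefix costs $O(p(n)\log p(n))$ time and $O(p(n))$ certificate bits — outputs the size-$O(p(n))$ decision tree / circuit that branches on the $n$-bit prefix and, when it equals $z$, reads off $\Enc(w^\star_z)_i$ from the oracle-supplied table; when the prefix is anything else it outputs $0$. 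For the slow learner: draw $O(1/\eps)$ samples; if all are useless, output the constant-$0$ hypothesis (correct up to error $\eps$ whenever the useful mass is $\le \eps$, and if the useful mass exceeds $\eps$ then with the claimed constant we see a useful example with probability $\ge 2/3$); once we see a useful $(z,i)$ we know $z$, brute-force over all $2^{p(n)}$ certificates in time $2^{O(p(n))}$ to find $w^\star_z$, and output the exact concept $\Cert_z$.

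\emph{Item (ii).} This is where I invoke Lemma~\ref{lem:one curve lower bounds the other intro}'s formal version: an $m(n)$-sample, time-$t(n)$ PAC learner for $\mathcal{C}_{L,V}$ to error $\eps^\star$ yields, via the nondeterministic (Arthur--Merlin) reduction, a decision procedure for $L$ running in Arthur--Merlin time $t(n)\log t(n)\cdot\poly(p(n))$ with $O(m(n))$ proof bits. The reduction idea: to decide $z$, feed the learner examples from a distribution concentrated on the useful inputs $(z, \cdot)$ labeled by the guessed certificate; Merlin's $O(m(n))$ bits ``guess'' a transcript of $m(n)$ labeled examples consistent with some $\Cert_{z'}$, Arthur runs the learner on them, decodes the resulting hypothesis with $\Dec$ to extract a candidate certificate $w$, and accepts iff $V(z,w)=1$ (this is perfectly sound: a valid witness is produced only if one exists). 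Since error-correctness only needs the learner to be right on a $(1-\eps^\star)$ fraction and the code corrects an $\eps^\star$ fraction, a correct hypothesis decodes to a genuine certificate. Then collapsing $\AM$-time with $O(m(n))$ proof bits into $\RTIME$ costs the $2^{O(m(n))}$ factor (enumerate all proofs), giving $\NTIME(p(n)) \subseteq L \subseteq \RTIME(2^{O(m(n))} t(n)\log t(n)\poly(p(n)))$.

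\emph{Item (iii).} The fast learner draws $O(p(n)/\eps)$ samples. If the useful mass is $\le \eps/2$, the constant-$0$ hypothesis suffices and it is output whenever few/no useful examples are seen. Otherwise, among the useful examples we see, with high probability, a $(1-\eps/4)$ fraction of all $cp(n)$ coordinate-indices $i$ (a coupon-collector / Chernoff estimate: $O(p(n)/\eps)$ useful draws suffice since the conditional distribution on indices could be arbitrary, so I should instead only claim that we recover enough coordinates that the decoder succeeds — actually the cleanest route is: the learner's hypothesis outputs $0$ on useless inputs and, on useful inputs $(z,i)$, outputs the observed label if $i$ was seen and $0$ otherwise; its error is the useful mass on unseen coordinates, which is $\le \eps$ with high probability by a direct sampling argument). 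This runs in time $O(p(n)/\eps)$. If one instead wants an \emph{exact} hypothesis, decode the partial string via $\Dec$ once a $(1-\eps^\star)$ fraction of coordinates is known — but the weaker guarantee above already suffices for the stated bound.

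\emph{Main obstacle.} The delicate point is item~(ii): making the Arthur--Merlin reduction handle the PAC model's two independent sources of randomness (the example distribution and the learner's coins) while keeping the proof length $O(m(n))$ rather than $O(m(n)\cdot\text{(bits per example)})$ — this is exactly why the formal statement routes through $\AMTIME$ and $\RTIME$ rather than $\NTIME$ and $\DTIME$, and why one must argue that Merlin can supply a short description of a ``lucky'' batch of examples that the bounded-sample learner cannot distinguish from genuine $\mathcal{C}$-examples. Verifying that this lucky-batch argument goes through — and that it genuinely breaks for learners with $\gg m(n)$ samples, consistent with the~\cite{ABX08} barrier — is the crux; the rest is bookkeeping with the error-correcting code parameters and Chernoff bounds.
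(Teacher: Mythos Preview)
Your overall plan matches the paper's: take $\mcC = \mcC_{L^\star,V^\star}$ for an $\NTIME(p(n))$-complete language $L^\star$, prove items~(i) and~(iii) directly from the structure of $\Cert_z$, and prove item~(ii) via an Arthur--Merlin protocol in which Arthur draws $m(n)$ uniform useful inputs $(z,\bi^{(j)})$, Merlin supplies only the $m(n)$ label bits, Arthur runs the learner, queries the hypothesis on all $(z,i)$, decodes, and checks $V(z,\tilde w)$. Your ``main obstacle'' is not actually one: because Arthur generates the example inputs as part of his own randomness, Merlin's proof is literally the $m(n)$ label bits, so the proof length is $m(n)$ with no further argument needed. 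The ``lucky batch'' language from the \cite{ABX08} discussion is about why the reduction evades that barrier, not about how completeness is established; completeness here is just that when $z\in L$ Merlin answers honestly with $\Cert_z$'s labels, and soundness is perfect because the final step is $V(z,\tilde w)$.

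There is one genuine gap, in your enumerability argument. The oracle language you propose, $L'=\{(z,i,b):\Enc(w^\star_z)_i=b\}$, is \emph{not} in $\NTIME(p(n)\log p(n),p(n))$. Certifying that a guessed $w$ is the lexicographically first witness is a $\mathrm{co}\NP$-type condition (``no $w'<w$ satisfies $V(z,w')=1$''), so ``guessing and verifying a certificate plus checking lex-minimality of a prefix'' cannot be done with a short nondeterministic certificate. The paper instead uses the language $\textsc{Lex}(V)=\{(z,k):\exists\,w\text{ with }\textsc{Rank}(w)\le k\text{ and }V(z,w)=1\}$, which \emph{is} in $\NTIME(t(n),p(n))$ for the obvious reason, and recovers $w^\star$ by binary search on $k$ using $p(n)$ oracle calls. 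This is a standard search-to-decision trick, but it is needed; your $L'$ as stated sits in $\mathsf{P}^{\NP}$ rather than $\NP$.

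A minor note on item~(i): to show $\VCdim\ge 1$ you shatter a fixed point $(z,i)$ by exhibiting two \emph{concepts}, not by varying $z$; take $z\in L$ with $\Enc(w^\star)_i=1$, so $\Cert_z$ labels it $1$ and any $\Cert_{z'}$ with $z'\ne z$ labels it $0$.
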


Note that by choosing $m(n)=O(\log t(n))$ in \Cref{thm:NP formal:few samples}, the collapse becomes $\NTIME(p(n))\sse \RTIME\left(\poly(t(n),p(n))\right)$. The statement of \Cref{thm:NP intro:few samples} then follows via the contrapositive. 

\subsection{Basic properties of the concept class: Proofs of \Cref{thm:NP formal:VCdim} and \Cref{thm:NP formal:many samples}}
\label{subsec:basic-properties-of-C}

\subsubsection{\Cref{thm:NP formal:VCdim}: The VC dimension of the concept class and its enumerability}

We show the VC dimension of $\mathcal{C}_{L,V}$ is $1$ and that it can be learned to error $\eps$ with $O(1/\eps)$ samples. 
\begin{claim}
\label{claim:vc-dimension}
    Let $L\in \NTIME(t(n),p(n))$ for growth functions $t(n),p(n)\ge n$ and let $V$ be a time-$O(t(n))$ verifier for $L$ using size-$O(p(n))$ certificates. Then $\VCdim(\mathcal{C}_{L,V}) = 1$. Furthermore, there is an algorithm that, for every $\eps>0$, learns $\mathcal{C}_{L,V}$ to error $\eps$ using $O(1/\eps)$ samples in time $O(t(n))\cdot 2^{O(p(n))}$.
\end{claim}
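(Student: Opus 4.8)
The plan is to establish the two parts of \Cref{claim:vc-dimension} separately. For the VC dimension bound, I would first show $\VCdim(\mathcal{C}_{L,V}) \ge 1$ by exhibiting a single shattered point: pick any useful input $x = (z,i)$ for a $z \in L$ whose first certificate bit of $\Enc(w^\star)$ differs appropriately, together with the constant-$0$ concept $\Cert_{z'}$ for some $z' \notin L$ (or a $z'$ whose encoding has a $0$ in the relevant coordinate); these two concepts realize both labelings of $\{x\}$. For the upper bound $\VCdim(\mathcal{C}_{L,V}) \le 1$, I would argue no two-element set $\{x_1, x_2\}$ can be shattered. The key observation is that every concept is either the constant-$0$ function or is supported entirely on inputs of the form $(z, \cdot)$ for one fixed $z$. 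So if both $x_1, x_2$ are to be labeled $1$ by some concept, they must share the same prefix $z$, and then the labeling is forced to be $(\Enc(w^\star_z)_{i_1}, \Enc(w^\star_z)_{i_2})$, a single fixed pair — hence at most one of the four labelings with a $1$ in it can be hit. Combined with the constant-$0$ concept realizing $(0,0)$, at most two of the four labelings are achievable, so the set is not shattered.

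For the learning algorithm, I would describe the following simple learner using $m = O(1/\eps)$ samples. Draw the sample batch $S$. If no useful example appears in $S$, output the constant-$0$ hypothesis. Otherwise, let $(z,i)$ be any useful example observed; now $z$ is known. Brute-force over all $w \in \zo^{p(n)}$ to find the lexicographically first $w^\star$ with $V(z,w^\star)=1$ (this takes $2^{O(p(n))}$ candidates each checked in time $O(t(n))$, for the claimed $O(t(n)) \cdot 2^{O(p(n))}$ runtime), and output (a circuit for) $\Cert_z$ exactly. Correctness: conditioned on seeing a useful example, the hypothesis equals the target $\Cert_z$ and has zero error. Conditioned on seeing none, the target restricted to the distribution's support places at most... — here is the crux — I need that the constant-$0$ hypothesis is $\eps$-accurate whenever the learner fails to see a useful example. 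The point is that the only inputs on which $\Cert_z$ disagrees with the constant-$0$ function are the $cp(n)$ useful inputs $(z,1),\dots,(z,cp(n))$; if the total $\D$-mass on useful inputs were more than $\eps$, then with $m = O(1/\eps)$ independent draws we would see a useful example with probability $\ge 2/3$. So either the useful mass is $\le \eps$ (and constant-$0$ is $\eps$-close to the target), or we are likely in the "saw a useful example" case and output the target exactly. A standard calculation picking the constant in $m = \Theta(1/\eps)$ makes the overall success probability $\ge 2/3$.

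The main obstacle I anticipate is getting the case analysis for the learner's correctness to combine into a clean $\ge 2/3$ bound: the two "good" events (useful mass small, vs. useful mass large and a useful example seen) must be handled so that their failure probabilities sum to less than $1/3$ across the relevant regime of useful-mass values, which requires choosing the hidden constant in $O(1/\eps)$ with a little care (e.g. a threshold argument: if useful mass $\le \eps$ we are always fine; if useful mass $> \eps$ then $\Pr[\text{no useful example in } m \text{ draws}] \le (1-\eps)^m \le e^{-\eps m} \le 1/3$ for $m \ge 2/\eps$). The VC upper bound argument also needs the small technical point that the constant-$0$ concept is always in $\mathcal{C}_{L,V}$ as long as some $z \notin L$ exists — if $L = \zo^n$ for some length one must instead note that any two concepts with distinct prefixes $z \ne z'$ cannot both label a common point $1$, so shattering still fails; I would fold this into the case analysis. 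The rest is routine.
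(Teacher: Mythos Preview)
Your VC-dimension argument matches the paper's. The genuine gap is in the learning algorithm: you have the learner branch on whether a \emph{useful} example appears in the sample and then read off $z$ from it, but the learner cannot recognize useful examples. Every domain point in $\zo^{n+\log(cp(n))}$ parses as some pair $(z',i')$; ``useful'' means $z'$ equals the target's hidden prefix $z$, which the learner does not know. So the step ``let $(z,i)$ be any useful example observed; now $z$ is known'' is circular---you would need to already know $z$ to know which samples are useful.

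The fix (and what the paper does) is to branch on whether a \emph{$1$-labeled} example appears. Any $1$-labeled example is necessarily useful, since useless inputs are always labeled $0$; hence seeing one genuinely reveals $z$. The correctness analysis then tracks the $1$-mass $\Pr_{\bx\sim\D}[\Cert_z(\bx)=1]$ rather than the useful mass: if this exceeds $\eps$, then $m=O(1/\eps)$ draws yield a $1$-label with probability at least $2/3$ and brute-forcing over certificates for the revealed $z$ gives the exact target; if it is at most $\eps$, the constant-$0$ hypothesis already has error at most $\eps$. Your threshold calculation at the end is exactly right once ``useful'' is replaced by ``labeled $1$'' throughout. (Relatedly, your sentence ``the only inputs on which $\Cert_z$ disagrees with the constant-$0$ function are the $cp(n)$ useful inputs'' overshoots---the disagreement set is $\Cert_z^{-1}(1)$, a subset of the useful inputs---which is another symptom of tracking the wrong mass.)
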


\begin{proof}
    Let $z\in L$ and $w^\star$ be the lexicographically first certificate such that $V(z,w^\star)=1$. Let $i\in \zo^{\log (cp(n))}$ be such that $\Enc(w^\star)_i=1$. Then clearly the set $\{(z,i)\}$ is shattered. Now, we claim that no set of 2 points is shattered. Consider the points $(z^{(1)},i)$ and $(z^{(2)}, j)$. If $z^{(1)}\neq z^{(2)}$, then no concept labels both points as $1$. Otherwise, the only labelings realized are $(0,0)$ and $(\Enc(w^\star)_i,\Enc(w^\star)_j)$ where $w^\star$ is the lexicographically first certificate for $z^{(1)}=z^{(2)}$. Therefore, these points are not shattered.
    
    For the second part of the claim, every $1$-input to a concept $\Cert_z\in\mathcal{C}_{L,V}$ has the form $(z,i)$. If the learning algorithm receives a $1$-input, it can use $z$ to find the lexicographically first certificate $w^\star$ such that $V(z,w^\star)=1$ and then output the exact representation of $\Cert_z$, all in time $O(t(n))\cdot 2^{O(p(n))}$. Using this observation, it is straightforward to show that $O(1/\eps)$ random samples suffice to learn $\Cert_z$ over any distribution. Specifically, for any distribution $\D$ over $\zo^n$, if $\Pr_{\bx\sim\D}[\Cert_z(\bx)=1]\ge \eps$, then for any random sample $\bS$ of size $m$, we have
    $$
    \Prx_{\bS}[\Cert_z(\bx^{(i)})=0\text{ for all examples }\bx^{(i)}\text{ in }\bS ]\le (1-\eps)^m\le\exp(-\eps m).
    $$
    By choosing $m=O(1/\eps)$, we have that with probability at least 0.99, some example in $\bS$ is labeled $1$ and our learning algorithm outputs an exact representation of $\Cert_z$. If no example is labeled $1$, the learner outputs the constant $0$ function. When $\Pr_{\bx\sim\D}[\Cert_z(\bx)=1]< \eps$, the constant $0$ function approximates $\Cert_z$ to error $\eps$. So in either case, our learning algorithm outputs a hypothesis with error $\eps$ with probability at least $0.99$.
\end{proof}

The rest of the section is devoted to proving the following claim which establishes the enumerability of $\mathcal{C}_{L,V}$.

\begin{claim}[Enumerability of $\mathcal{C}_{L,V}$]
    \label{claim:enumerable}
    Let $L$ be a language in $\NTIME(t(n),p(n))$ for growth functions $t(n),p(n)\ge n$ and let $V$ be a time-$O(t(n))$ verifier for $L$ using size-$O(p(n))$ certificates. There is an algorithm which takes as input $z\in\zo^n$, makes $p(n)$ calls to an $\NTIME(t(n),p(n))$ oracle, and outputs a decision tree representation of $\Cert_z\in\mathcal{C}_{L,V}$ in  $\poly(p(n))$ time.
\end{claim}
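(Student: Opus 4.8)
The plan is to use the oracle to recover the lexicographically-first accepting certificate $w^\star$ of $z$ (when $z\in L$), and then to write down a decision tree that simply hard-codes the string $\Enc(w^\star)$ from \Cref{def:codes}. This directly gives an $\NTIME(t(n),p(n))$-enumerator in the sense of the enumerability definitions, since the seed $z$ ranges over $\zo^n$.

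First I would perform the standard bit-by-bit \emph{prefix search} for $w^\star$. For each $j\in[p(n)]$ define the language
\[
L_j \;=\; \bigl\{\,(z,\,b_1\cdots b_{j-1}) \;:\; \exists\, w\in\zo^{p(|z|)} \text{ with } w_1\cdots w_{j-1}=b_1\cdots b_{j-1},\; w_j=0,\; V(z,w)=1 \,\bigr\}.
\]
Its obvious verifier — check that the guessed $w$ extends the given prefix with a $0$ in coordinate $j$, then run $V(z,w)$ — uses a certificate of length $p(|z|)$ and time $O(t(|z|))$; since the auxiliary string $b_1\cdots b_{j-1}$ has length $\le p(n)$ (here I use $p(n)\ge n$), each $L_j$ sits in $\NTIME(t(n),p(n))$ up to constants, measuring complexity in $n=|z|$. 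The algorithm maintains a prefix, queries the oracle whether $(z,b_1\cdots b_{j-1})\in L_j$, sets $b_j:=0$ on ``yes'' and $b_j:=1$ on ``no'', and after $p(n)$ queries obtains a string $w\in\zo^{p(n)}$. A routine induction shows that if $z\in L$ then $w$ is exactly the lexicographically-first accepting certificate $w^\star$, and if $z\notin L$ then $V(z,w)=0$; so one further oracle query — is $w$ accepted by $V$? (a deterministic, hence $\NTIME(t(n),p(n))$, check, with $w$ handed in as part of the input) — decides membership of $z$ in $L$. This phase makes $p(n)+1=O(p(n))$ oracle calls, and, crucially, the reduction itself never simulates $V$ (which could cost $t(n)\gg\poly(p(n))$): all such work is pushed into the oracle.

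Next I would write down the tree. If $z\notin L$, output the one-leaf tree computing the constant $0$. If $z\in L$, compute $\Enc(w^\star)\in\zo^{cp(n)}$ in $\poly(p(n))$ time, and output the decision tree on the $n+\lceil\log(cp(n))\rceil$ variables of $\Cert_z$ built as follows: a path (``spine'') of length $n$ that reads the first $n$ input bits in order and branches to a $0$-labeled leaf the moment some bit disagrees with the corresponding bit of $z$; at the end of the spine — reached exactly when the first $n$ bits equal $z$ — a complete binary tree of depth $\lceil\log(cp(n))\rceil$ that reads the remaining index bits and lands at a leaf labeled $\Enc(w^\star)_i$, where $i$ is the index those bits encode (and $0$ for out-of-range indices). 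By construction this tree computes $\Cert_z$ exactly, it has $O(n)+O(cp(n))=O(p(n))$ nodes, and it can be produced in $\poly(p(n))$ time.

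The only points needing care are (i) confirming that each prefix-search language $L_j$ really lies in $\NTIME(t(n),p(n))$ after accounting for the extra prefix bits appended to its input and the slight change in input length — this is routine given $p(n)\ge n$ — and (ii) the bookkeeping that the greedy search returns $w^\star$ precisely when $z\in L$ and that the final tree has the stated size. I do not anticipate a genuine obstacle; the one conceptual point is that all $\Theta(t(n))$-time verification is delegated to the oracle, which is what keeps the reduction running in time $\poly(p(n))$.
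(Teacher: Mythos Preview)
Your proposal is correct and follows essentially the same approach as the paper: use the oracle to recover the lexicographically-first certificate $w^\star$, then hard-code $\Enc(w^\star)$ into a spine-plus-complete-subtree decision tree of size $O(p(n))$. The only cosmetic difference is the oracle language: the paper packages everything into a single language $\textsc{Lex}(V)=\{(x,k):\exists\,w\text{ with }\textsc{Rank}(w)\le k,\ V(x,w)=1\}$ and binary-searches on $k$, whereas you do bit-by-bit prefix search with languages $L_j$; both take $p(n)$ calls and both sit in $\NTIME(t(n),p(n))$ for the same reason. (One bookkeeping point: since enumerability is stated for a \emph{single} oracle language, you should fold your family $\{L_j\}$ and the final acceptance check into one language by including $j$ in the input---a triviality, but worth stating.)
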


 This result implicitly shows that every concept in $\mathcal{C}_{L,V}$ can be represented as a decision tree of size $\poly(p(n))$. It also implies that the class $\mathcal{C}_{L,V}$ is enumerable with an oracle for $\NTIME(t(n),p(n))$. \Cref{thm:NP formal:VCdim} will follow from this along with the fact that $\VCdim(\mathcal{C}_{L,V}) = 1$.

Before we prove \Cref{claim:enumerable}, we show that $\Cert_z$ has a small decision tree representation which can be efficiently computed given the lexicographically first certificate for $z$. 

\begin{proposition}[$\Cert_z$ has a small decision tree representation]
\label{prop:cert_dt}
    Let $\Cert_z$ and $p(n)$ be as in \Cref{def:concept-class}. For every $z\in \zo^n$, $\Cert_z$ is computed by a decision tree $T_z$ of size $O(p(n))$. Furthermore, if $z\in L$ and $w^\star$ is the lexicographically first certificate for $z$, then there is a polynomial-time algorithm which, given $z$ and $w^\star$, outputs $T_z$.
\end{proposition}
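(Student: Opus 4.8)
The plan is to construct the decision tree $T_z$ explicitly. Recall that $\Cert_z : \zo^{n + \log(cp(n))} \to \zo$ takes an input of the form $x = (a, i)$ where $a \in \zo^n$ is a candidate instance-prefix and $i \in \zo^{\log(cp(n))}$ is an index. The function outputs $\Enc(w^\star)_i$ when $a = z$ (and $z \in L$), and $0$ otherwise. So the tree has two ``phases''. First I would have $T_z$ read all $n$ bits of the prefix $a$ along a single root-to-fork path, comparing each bit $a_j$ against the corresponding hard-wired bit $z_j$: as soon as a mismatch is found, the path terminates in a leaf labeled $0$. This uses a path of length $n$ and $O(n)$ nodes (at each of the $n$ levels, one child is a $0$-leaf and the other continues). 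If all $n$ comparisons succeed, we reach a node knowing $a = z$.

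From that node, the second phase reads the $\log(cp(n))$ bits of the index $i$ to form a complete binary tree of depth $\log(cp(n))$, which has $cp(n)$ leaves, one for each value of $i \in \{1, \ldots, cp(n)\}$. The leaf corresponding to index $i$ is labeled with the constant bit $\Enc(w^\star)_i$. This subtree has $O(p(n))$ nodes. Combining the two phases, the total size of $T_z$ is $O(n) + O(p(n)) = O(p(n))$ since $p(n) \ge n$, as claimed. It is straightforward to verify that $T_z$ computes $\Cert_z$: on input $(a,i)$, if $a \neq z$ the first phase outputs $0$, matching $\Cert_z$; if $a = z$ and $z \in L$, the second phase outputs $\Enc(w^\star)_i$, again matching; and if $z \notin L$ then $w^\star$ is undefined but $\Cert_z \equiv 0$, so in that case we instead hard-wire every phase-two leaf to $0$ (equivalently, just take $T_z$ to be the single $0$-leaf).

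For the ``furthermore'' part: given $z$ and the lexicographically first certificate $w^\star$, a polynomial-time algorithm can output $T_z$ directly from the above description. It needs only (i) the bits of $z$, which it has; and (ii) the string $\Enc(w^\star)$, which it computes by running the polynomial-time encoder $\Enc$ from \Cref{def:codes} on the $O(p(n))$-bit string $w^\star$, producing $cp(n) = O(p(n))$ bits. Writing down the phase-one path ($O(n)$ nodes) and the phase-two complete binary tree with leaves labeled by the bits of $\Enc(w^\star)$ ($O(p(n))$ nodes) takes $\poly(p(n))$ time total.

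I do not anticipate a genuine obstacle here — the statement is essentially a direct unpacking of \Cref{def:concept-class} together with the efficiency of the encoder. The only point requiring a small amount of care is bookkeeping the size bound and handling the degenerate $z \notin L$ case (where $w^\star$ does not exist, so the ``furthermore'' clause is vacuous since its hypothesis is $z \in L$, and the size bound holds trivially for the constant-$0$ tree). The real content — the interaction between $p(n)$, the sample complexity, and nondeterminism — lives in \Cref{claim:enumerable} and the later reduction lemmas, not in this proposition.
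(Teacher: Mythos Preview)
Your proposal is correct and matches the paper's own proof essentially line for line: a length-$n$ path checking the prefix against $z$ with $0$-leaves on each mismatch, followed by a complete depth-$\log(cp(n))$ subtree whose $cp(n)$ leaves are labeled by the bits of $\Enc(w^\star)$, with the $z\notin L$ case handled by the trivial size-$1$ tree. The size bound $n + cp(n) = O(p(n))$ and the polynomial-time constructibility from $(z, w^\star)$ via the efficient encoder are exactly what the paper records.
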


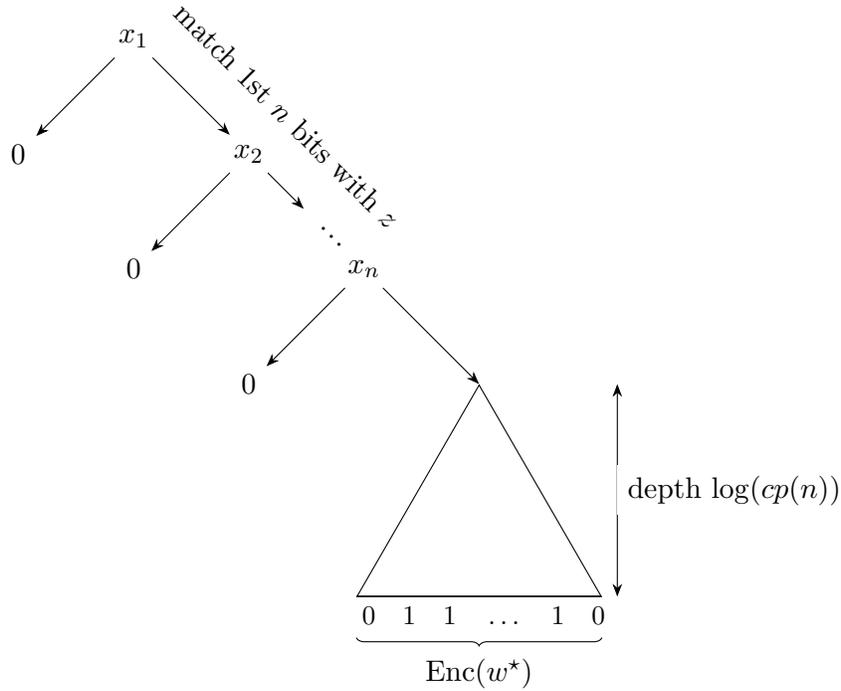
\begin{figure}[h!]
    \centering
    \resizebox{0.7\linewidth}{!}{
    \begin{tikzpicture}[tips=proper]
        \draw[] (2,2) node [fill=white] (x1) {$x_1$};
        \draw[] (3.5,0.5) node [fill=white] (x2) {$x_2$};
        \draw[] (4.5,-.5) node [fill=white,rotate=45] (x3) {$\vdots$};
        \draw[] (5,-1) node [fill=white] (xn) {$x_n$};

        \draw[] (0.5,0.5) node [fill=white] (z1) {$0$};
        \draw[] (2,-1) node [fill=white] (z2) {$0$};
        \draw[] (3.5,-2.5) node [fill=white] (zn) {$0$};

        \draw[-Stealth] (x1) to (x2);
        \draw[-Stealth] (x2) to (x3);

        \draw[-Stealth] (x1) to (z1);
        \draw[-Stealth] (x2) to (z2);
        \draw[-Stealth] (xn) to (zn);
                
        \node[isosceles triangle,
            draw,
            anchor=apex,
            isosceles triangle apex angle=60,
            rotate=90,
            minimum size=2.75cm] (T1) at (6.5,-2.5){};
        \draw[-Stealth] (xn) to (T1.apex);

        \draw[] (4,1) node [fill=white,rotate=-45] (x_label) {{match 1st $n$ bits with $z$}};

        \draw[{Stealth[scale=1]}-{Stealth[scale=1]}] ([xshift=1.8cm]T1.east) to node[right,fill=white!30,scale=1] {depth $\log (c p(n))$} ([xshift=1.8cm]T1.west);

        \draw [] ([xshift=0cm]T1.left corner) -- (T1.right corner) node [black,pos=0.5,below] { \small{$~ 0\quad 1\quad 1\quad \ldots\quad 1\quad 0$}};
        
        \draw [black,decorate,decoration={brace,mirror,raise=1pt,amplitude=3pt}] ([xshift=0cm,yshift=-0.5cm]T1.left corner) -- ([yshift=-0.5cm]T1.right corner) node [black,pos=0.5,yshift=-0.5cm] {$\Enc(w^\star)$};
        
    \end{tikzpicture}
    }
    \caption{Illustration of a decision tree for $\Cert_z$ where $z=1^n$.}
    \label{fig:dt for cert}
\end{figure}

\begin{proof}
    If $z\not\in L$, then $\Cert_z$ is a decision tree of size $1$. Otherwise, assume $z\in L$ and $w^\star$ is the lexicographically first certificate for $z$. We will describe the decision tree $T_z$. See \Cref{fig:dt for cert} for an illustration. First, $T_z$ fully queries the first $n$ bits of input and checks that they match $z$. Formally, for each $i\in [n]$, $T_z$ queries $x_i$ and outputs $0$ if $x_i\neq z_i$ and moves on to querying $x_{i+1}$ otherwise. Then, $T_z$ fully queries the next $\log(cp(n))$ bits. Finally, at the leaf, the decision tree outputs $\Enc(w^\star)_i$ where $i$ is the index in $\{1,\ldots,cp(n)\}$ defined by the last $\log(cp(n))$ bits of the input. It follows that $T_z$ exactly computes $\Cert_z$. Additionally, the size of the decision tree is $n+cp(n)$ as desired.

    If $w^\star$ is given as input, then $\Enc(w^\star)$ can be computed in polynomial time. Therefore, there is an efficient algorithm which outputs $T_z$. 
\end{proof}

In the proof of \Cref{claim:enumerable}, we will use an oracle for an augmented variant of the language of a verifier $V$, called $\textsc{Lex}(V)$, to help us find lexicographically first certificates. 

\begin{definition}[The language $\textsc{Lex}(V)$]
    Let $V$ be a verifier, and for a string $w\in\zo^n$, let $\textsc{Rank}(w)\in [2^n]$ denote the rank of $w$ in the lexicographic ordering of strings in $\zo^{n}$. 
    Then, $\textsc{Lex}(V)$ is the language defined as
    $$
    \textsc{Lex}(V)=\{(x,k)\mid \text{there is a certificate }w\text{ s.t. } V(x,w)=1\text{ and }\textsc{Rank}(w)\le k\}.
    $$
\end{definition}

\begin{proposition}[Verifiers for the language $\textsc{Lex}(V)$]
\label{prop:lex}
    Let $V$ be a time-$O(t(n))$ verifier using size-$O(p(n))$ certificates for growth functions $t(n),p(n)\ge n$. Then, we have that $\textsc{Lex}(V)\in \NTIME(t(n), p(n))$.
\end{proposition}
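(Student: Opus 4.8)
The plan is to show that the language $\textsc{Lex}(V)$ can be verified in nondeterministic time $O(t(n))$ using certificates of size $O(p(n))$, which directly gives $\textsc{Lex}(V) \in \NTIME(t(n), p(n))$. The key observation is that a membership claim $(x,k) \in \textsc{Lex}(V)$ is itself an existential claim — there exists a certificate $w$ with $V(x,w) = 1$ and $\textsc{Rank}(w) \le k$ — so it should be checkable by a verifier that receives such a $w$ as its certificate.

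\medskip

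\noindent\textbf{Step 1: The verifier.} On input $(x,k)$ where $x \in \zo^n$, the verifier $V'$ for $\textsc{Lex}(V)$ expects a certificate $w$. It first checks that $|w| = O(p(n))$ (rejecting otherwise, so that the certificate length bound is enforced), then runs $V(x,w)$ and checks that it accepts, and finally checks that $\textsc{Rank}(w) \le k$. It accepts iff all three checks pass.

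\medskip

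\noindent\textbf{Step 2: Correctness.} Completeness and soundness are immediate from the definition of $\textsc{Lex}(V)$: $(x,k) \in \textsc{Lex}(V)$ iff some $w$ of the appropriate length satisfies $V(x,w) = 1$ and $\textsc{Rank}(w) \le k$, which is exactly the condition under which $V'$ has an accepting certificate.

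\medskip

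\noindent\textbf{Step 3: Resource bounds.} The certificate $w$ has length $O(p(n))$ by assumption on $V$. For the runtime: checking $|w| = O(p(n))$ takes $O(p(n))$ time; running $V(x,w)$ takes $O(t(n))$ time by hypothesis; and computing $\textsc{Rank}(w)$ for $w \in \zo^{O(p(n))}$ amounts to reading off the integer whose binary representation is $w$ and comparing it against $k$, which takes time polynomial in $|w| + |k| = O(p(n) + \log k)$. Since $k$ is part of the input and (in all uses of this proposition) has magnitude at most $2^{O(p(n))}$, so $\log k = O(p(n))$, this comparison is $O(p(n))$ time. As $t(n) \ge p(n) \ge n$, the total runtime is $O(t(n))$.

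\medskip

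\noindent The only point requiring a moment of care is the claim that $\textsc{Rank}(w)$ can be compared to $k$ efficiently. This is really just the remark that the lexicographic rank of a bitstring $w$ (under the convention that $\zo^m$ is ordered so that rank corresponds to the value of $w$ read as an $m$-bit binary number, with the length fixed) is computable by a trivial linear scan, and that integer comparison is linear-time in the bit length of the operands. I do not anticipate a genuine obstacle here — the proposition is essentially a definitional unfolding — but writing it cleanly requires fixing the convention for $\textsc{Rank}$ (all certificates for a given $x$ have the same length $O(p(n))$, so there is no ambiguity in comparing ranks across strings of differing lengths).
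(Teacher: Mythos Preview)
Your proposal is correct and follows essentially the same approach as the paper: construct a verifier that, given certificate $w$, checks both $V(x,w)=1$ and $\textsc{Rank}(w)\le k$, and verify the time and certificate-size bounds. The only point the paper adds that you omit is the observation that the input to the new verifier has length $N = n + p(n)$ rather than $n$, and that the growth-function assumption on $t$ and $p$ ensures the $O(t(n))$ and $O(p(n))$ bounds are still $O(t(N))$ and $O(p(N))$ as required by the definition of $\NTIME(t(\cdot),p(\cdot))$.
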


\begin{proof}
    We construct a verifier $V_{\mathrm{lex}}$ for the language $\textsc{Lex}(V)$ which runs in time $O(p(n))+t(n)$ and uses certificates of size $p(n)$. On input $(x,k)$ where $x\in \zo^n$ and $k\in \zo^{p(n)}$ and given certificate $w\in \zo^{p(n)}$, $V_{\mathrm{lex}}$ checks that $w$ is within the lexicographically first $k$ strings in $\zo^{p(n)}$ and if so, outputs $V(x,w)$. The correctness of the verifier follows immediately from the correctness of $V$. Moreover, the runtime is $O(p(n))+t(n)$ and, since $p(n)\le t(n)$, this is $O(t(n))$. Note that the length of the input is $N=n+p(n)$. Therefore, the runtime is still $O(t(n))$ and the certificate length is at most $p(N)$ by our assumption that $v,p$ are both growth functions. 
\end{proof}

\begin{proof}[Proof of \Cref{claim:enumerable}]
    First, our algorithm uses an oracle for $\textsc{Lex}(V)$ to find the lexicographically first certificate $w^\star$ such that $V(z,w^\star)=1$, if such a certificate exists. This certificate can be found using binary search and requires at most $p(n)$ calls to the oracle. If no such certificate exists, then the algorithm outputs the decision tree for the constant $0$ function and terminates. Otherwise, the algorithm obtains $w^\star$, and using the algorithm from \Cref{prop:cert_dt} outputs the decision tree $T_z$. 

    \paragraph{Runtime.}{
        The algorithm requires $O(n)+p(n)$ time to find $w^\star$. Then it requires $\poly(p(n))$ time to compute $T_z$. So the overall runtime is $\poly(p(n))$ as desired.
    }
    \paragraph{Oracle complexity.}{
        By \Cref{prop:lex}, the language $\textsc{Lex}(V)$ is in $\NTIME(t(n),p(n))$ and therefore our algorithm makes $p(n)$ oracle calls to a language in $\NTIME(t(n),p(n))$ as desired.\hfill\qedhere
    }
\end{proof}

\subsubsection{\Cref{thm:NP formal:many samples}: The concept class can be learned quickly with enough samples}

In this section, we show that $\mathcal{C}_{L,V}$ can be efficiently learned with enough samples. This fact follows from a more general one saying that concept classes consisting of \textit{sparse functions} can be quickly learned with sufficiently many samples.

\begin{claim}[Sparse concept classes can be learned quickly with many samples]
\label{claim:large-samples-upper-bound}
  Let $\mcC$ be a concept class and $p(n)\ge n$ be a growth function such that every function $f\in\mcC$ is $p(n)$-sparse. Then, for every $\eps>0$, there is an algorithm which learns $\mcC$ to error $\eps$ and whose runtime and sample complexity are both $O(p(n)/\eps)$.
\end{claim}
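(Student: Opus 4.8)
\textbf{Proof plan for Claim~\ref{claim:large-samples-upper-bound}.} The plan is to show that, for a concept class of $p(n)$-sparse functions, the empirical distribution of labeled examples already pins down enough of the target to produce an accurate hypothesis, so a simple ``memorization'' learner works. Concretely, given a target $f \in \mathcal{C}$ and a distribution $\D$, let $U = f^{-1}(1)$, which has size at most $p(n)$. The learner draws $m = O(p(n)/\eps)$ samples, collects the set $S$ of all $1$-labeled examples it sees, and outputs the hypothesis $h$ that is the indicator of $S$ (equivalently, $h(x) = 1$ iff $x$ appeared with label $1$ in the sample, and $h(x)=0$ otherwise). Since $|S| \le m$, the hypothesis $h$ has a trivial circuit/description of size $O(p(n)/\eps)$, and evaluating $h$ on a point takes $O(p(n)/\eps)$ time, so both the runtime and sample complexity are $O(p(n)/\eps)$ as claimed.

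The correctness argument is the heart of it, and it is a standard coupon-collector / union-bound style estimate. Note $\error_\D(f,h) = \Pr_{\bx \sim \D}[\,\bx \in U \setminus S\,]$, since $h$ never errs on $0$-inputs of $f$ (those are automatically mapped to $0$ unless they were mislabeled, which cannot happen as labels come from $f$) and $h$ only errs on points of $U$ that were never sampled with label $1$. Call a point $x \in U$ \emph{heavy} if $\D(x) \ge \eps/(2p(n))$ and \emph{light} otherwise. The total $\D$-mass of light points in $U$ is at most $p(n) \cdot \eps/(2p(n)) = \eps/2$. For each heavy point $x$, the probability it fails to appear in $m$ samples is at most $(1 - \eps/(2p(n)))^m \le \exp(-\eps m /(2p(n)))$, which is at most $\eps/(2p(n))$ once $m = O(p(n)/\eps \cdot \log(p(n)/\eps))$; summing over the at most $p(n)$ heavy points, the expected $\D$-mass of missed heavy points is at most $\eps/2$. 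Hence $\E[\error_\D(f,h)] \le \eps/2 + \eps/2 = \eps$ wait — I should be more careful to land a \emph{high-probability} bound of the form required by Definition~\ref{def:PAC-learning} with the stated sample complexity $O(p(n)/\eps)$, not $O(p(n)\log p(n)/\eps)$.

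The cleaner route that avoids the extra log factor is to bound the \emph{expected} error directly and then apply Markov. Redefine: a point $x \in U$ is \emph{heavy} if $\D(x) \ge \eps / (100 \, p(n))$. Light points contribute total mass $\le \eps/100$. For a heavy point $x$, $\Pr[x \notin S] \le (1-\D(x))^m \le e^{-\D(x) m}$; since there are at most $p(n)$ heavy points, $\E\big[\sum_{x \text{ heavy}} \D(x)\,\mathbf{1}[x\notin S]\big] = \sum_{x\text{ heavy}} \D(x)(1-\D(x))^m \le p(n)\cdot \max_t t e^{-tm} = p(n)/(em) \le \eps/100$ for $m \ge 100\,p(n)/(e\eps)$. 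So $\E[\error_\D(f,h)] \le \eps/50$, and by Markov's inequality $\error_\D(f,h) \le \eps$ with probability at least $1 - 1/50 \ge 2/3$. This gives the claimed $O(p(n)/\eps)$ sample complexity and runtime.

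Finally, Claim~\ref{claim:large-samples-upper-bound} implies Theorem~\ref{thm:NP formal:many samples}: by Proposition~\ref{prop:cert_dt} (or directly from Definition~\ref{def:concept-class}), every concept $\Cert_z$ has at most $cp(n)$ one-inputs — all of the form $(z,i)$ — so $\mathcal{C}_{L,V}$ consists of $O(p(n))$-sparse functions over its input domain, and the claim applies with the appropriate choice of concept class $\mathcal{C}$. \textbf{The main obstacle} I anticipate is purely bookkeeping: making sure the ``memorization'' hypothesis is correctly handled on $0$-inputs (it errs nowhere on them, since labels are noiseless), and getting the heavy/light threshold and Markov constants to yield exactly $O(p(n)/\eps)$ samples with success probability $2/3$ rather than an extra logarithmic factor — i.e., bounding \emph{expected} error rather than chasing a per-point high-probability coverage guarantee.
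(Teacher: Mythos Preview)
Your proof is correct and uses the same learner as the paper (memorize the $1$-labeled examples and output their indicator), but your \emph{analysis} is different. The paper observes that the hypothesis $h_{\bS}$ is determined by which subset of $f^{-1}(1)$ appears in the sample, so the effective hypothesis class $\mathcal{H}$ has size at most $2^{p(n)}$; it then applies the standard realizable Occam/union bound $\Pr[\exists\, h\in\mathcal{H}\text{ consistent with }\bS,\ \error_\D(f,h)>\eps]\le |\mathcal{H}|(1-\eps)^m$ to conclude that $m=O(p(n)/\eps)$ suffices. You instead bound $\E[\error_\D(f,h)]$ directly via $\sum_{x\in U}\D(x)(1-\D(x))^m\le p(n)/(em)$ (indeed the heavy/light split is not even needed once you use $\max_t te^{-tm}=1/(em)$) and finish with Markov. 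The paper's route gives a high-probability guarantee that strengthens with more samples, whereas your Markov step only yields constant success probability---but that is all the claim requires. Both arguments are elementary; yours is slightly more hands-on, the paper's is the textbook consistent-learner bound specialized to the small class of ``subset of the $1$-inputs'' hypotheses.
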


\begin{proof}
    Fix an unknown distribution $\mathcal{D}$ over examples and a target concept $f:\zo^n\to\zo$. Consider the following empirical risk minimizer using $m$ samples to learn for some $m$ which we specify later. First, it draws a random sample $\bS$ of size $m$. Then, it outputs the hypothesis $h_{\bS}$ defined as
    $$
    h_{\bS}(x)=\begin{cases}
        1 & \text{if }(x,1)\in\bS\\
        0 & \text{else}
    \end{cases}.
    $$
    We will show that with high probability over $\bS$, the hypothesis $h_{\bS}$ satisfies $\error_{\D}(h_{\bS}, f)\le \eps$. 
    
    Let $\mathcal{H}$ denote the class of all possible $h_{S}$ for some size-$m$ sample $S$. Since $h_S$ is uniquely defined by the $1$-inputs in $S$, and there are $p(n)$ many $1$-inputs, the size of $\mathcal{H}$ is at most $2^{p(n)}$. Therefore, we can write
    \begin{align*}
        \Prx_{\bS}[\text{some }h\in \mathcal{H}&\text{ is consistent with }f\text{ over }\bS\text{ and } \error_{\D}(f,h)>\eps]\\
        &\le \sum_{h\in \mathcal{H}}\Prx_{\bS}[h\text{ is consistent with }f\text{ over }\bS\text{ and }\error_{\D}(f,h)>\eps]\tag{Union bound}\\
        &\le |\mathcal{H}|(1-\eps)^m\le |\mathcal{H}|\exp(-\eps m).\tag{Samples are i.i.d.}
    \end{align*}
    By choosing $m=O(\log |\mathcal{H}|/\eps)=O(p(n)/\eps)$, we can ensure that with probability at least $0.99$, the $h$ which our algorithm outputs satisfies $\error_{\D}(f,h)\le \eps$. Moreover, the runtime of this algorithm is $O(p(n)/\eps)$ and the claim is proved. 
\end{proof}

Since the functions in $\mathcal{C}_{L,V}$ are $O(p(n))$-sparse, \Cref{claim:large-samples-upper-bound} implies that $\mathcal{C}_{L,V}$ can be learned quickly with sufficiently many samples:

\begin{corollary}[$\mathcal{C}_{L,V}$ can be learned quickly with many samples]
\label{cor:large-samples-upper-bound}
Let $L\in \NTIME(t(n),p(n))$ for growth functions $v(n),p(n)\ge n$ and let $V$ be a time-$O(t(n))$ verifier for $L$ using size-$O(p(n))$ certificates. Then, for every $\eps>0$, there is an algorithm which learns $\mathcal{C}_{L,V}$ to error $\eps$ whose runtime and sample complexity are both $O(p(n)/\eps)$.
\end{corollary}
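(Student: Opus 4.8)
The statement to prove is Corollary~\ref{cor:large-samples-upper-bound}: that $\mathcal{C}_{L,V}$ can be learned to error $\eps$ with runtime and sample complexity both $O(p(n)/\eps)$.

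The plan is to simply invoke Claim~\ref{claim:large-samples-upper-bound} (sparse concept classes can be learned quickly with many samples), so the only work is to verify its hypothesis applies to $\mathcal{C}_{L,V}$. First I would recall the structure of concepts in $\mathcal{C}_{L,V}$ from Definition~\ref{def:concept-class}: for each $z \in \zo^n$, the concept $\Cert_z$ is either the constant-$0$ function (when $z \notin L$) or, when $z \in L$, it outputs $\Enc(w^\star)_i$ on inputs of the form $(z,i)$ for $i$ ranging over $\{1,\dots,cp(n)\}$ and $0$ on all other inputs. In either case, the $1$-inputs of $\Cert_z$ are confined to the at most $cp(n)$ useful inputs $(z,1),\dots,(z,cp(n))$, so $|\Cert_z^{-1}(1)| \le cp(n) = O(p(n))$. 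Hence every function in $\mathcal{C}_{L,V}$ is $O(p(n))$-sparse.

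Next I would note that $p(n) \ge n$ is a growth function by hypothesis (so the $n \mapsto p(n)$ of Claim~\ref{claim:large-samples-upper-bound} is met, modulo the constant factor $c$ absorbed into the $O(\cdot)$), and apply Claim~\ref{claim:large-samples-upper-bound} with sparsity bound $O(p(n))$. This yields an algorithm learning $\mathcal{C}_{L,V}$ to error $\eps$ with runtime and sample complexity both $O(p(n)/\eps)$, which is exactly the conclusion. Strictly, I should remark that the number of input variables of $\Cert_z$ is $n + \log(cp(n))$ rather than $n$, but since $p$ is a growth function and $p(n) \ge n$, the sparsity $cp(n)$ is still $O(p(N))$ where $N = n + \log(cp(n))$ is the true input length, so the bookkeeping goes through.

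There is essentially no obstacle here; the corollary is a direct specialization. The only thing to be careful about is the input-length bookkeeping (distinguishing $n$ from the actual arity $n + \log(cp(n))$ of the concepts), but this is routine and handled exactly as in the proof of Proposition~\ref{prop:lex}.

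\begin{proof}[Proof of \Cref{cor:large-samples-upper-bound}]
    By \Cref{def:concept-class}, for every $z\in\zo^n$ the concept $\Cert_z$ outputs $1$ only on useful examples of the form $(z,i)$ with $i\in\{1,\ldots,cp(n)\}$, and it outputs $0$ on all useless examples (and on all of $\zo^{n+\log(cp(n))}$ if $z\notin L$). Hence $|\Cert_z^{-1}(1)|\le cp(n)$, so every function in $\mathcal{C}_{L,V}$ is $O(p(n))$-sparse. Since $p(n)\ge n$ is a growth function, the input length $N=n+\log(cp(n))$ satisfies $p(n)=O(p(N))$, so $\mathcal{C}_{L,V}$ is a concept class of $O(p(N))$-sparse functions on $N$ variables. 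Applying \Cref{claim:large-samples-upper-bound} with this sparsity bound gives, for every $\eps>0$, an algorithm learning $\mathcal{C}_{L,V}$ to error $\eps$ whose runtime and sample complexity are both $O(p(n)/\eps)$.
\end{proof}
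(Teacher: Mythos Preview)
Your proposal is correct and follows exactly the paper's approach: observe that every $\Cert_z$ is $O(p(n))$-sparse (since it can be nonzero only on the $cp(n)$ useful inputs) and then invoke \Cref{claim:large-samples-upper-bound}. Your added remark about the input-length bookkeeping is even slightly more careful than the paper, which simply states the sparsity and applies the claim.
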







\section{Main reduction: Proof of~\Cref{thm:NP formal:few samples}}

\subsection{$\mcC_{L,V}$ inherits its time versus sample complexity tradeoffs from the time versus nondeterminism tradeoffs for $L$}

The focus of this section will be to prove the following lemma. 

\begin{lemma}
\label{lem:one curve lower bounds the other} 
Let $L\in \NTIME(t(n),p(n))$ for growth functions $t(n),p(n)\ge n$, let $V$ be a time-$O(t(n))$ verifier for $L$ using size-$O(p(n))$ certificates, and let $\eps^\star>0$ be the constant from \Cref{def:codes}. Suppose that $L\not\in \AMTIME(\ell(n),m(n))$. Then, any $m(n)$-sample algorithm which learns $\mcC_{L,V}$ to error $\eps^\star$ must run in time $\Omega\left(\frac{\ell(n)-t(n)}{p(n)\log \ell(n)}-\poly(p(n))\right)$.    
\end{lemma}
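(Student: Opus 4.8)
The plan is to prove the contrapositive: given an $m(n)$-sample algorithm $A$ that learns $\mathcal{C}_{L,V}$ to error $\eps^\star$ in time $T(n)$, we build an $\AM$ protocol for $L$ where Arthur runs in time $O(T(n)\cdot p(n)\log \ell(n) + t(n) + \poly(p(n)))$ and Merlin sends a proof of length $O(m(n))$ — thereby placing $L$ in $\AMTIME(\ell(n), m(n))$ for $\ell(n)$ roughly $T(n)\cdot p(n)\log\ell(n) + t(n) + \poly(p(n))$, contradicting the hypothesis. The central idea, following the technical overview, is that for an input $z$, the concept $\Cert_z \in \mathcal{C}_{L,V}$ has $1$-inputs exactly of the form $(z, i)$ with $\Enc(w^\star)_i = 1$, where $w^\star$ is the lex-first certificate. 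So we design a distribution $\mathcal{D}_z$ over examples that puts roughly uniform mass on the "useful" region $\{(z,i) : i \in [cp(n)]\}$ (and maybe a bit on useless inputs). If $z \in L$, then labeling these examples by $\Cert_z$ is consistent with a genuine concept in the class; if $z \notin L$, then $\Cert_z$ is the all-zeros function but — crucially — Merlin will try to cheat by supplying a purported certificate, and soundness must prevent him from fooling Arthur.

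Here is the protocol. Arthur draws the random string for $A$ and draws the $m(n)$ sample points $(\bx^{(1)}, \ldots, \bx^{(m)})$ i.i.d.\ from $\mathcal{D}_z$ (these are just indices $i_1, \ldots, i_m$ into the useful region, so they can be sent cheaply, but in fact Arthur generates them himself). Merlin sends a string $w \in \zo^{p(n)}$ — a purported certificate — but this is $p(n)$ bits, which is too long. The fix is the key trick flagged in \Cref{sec:ABX}: instead of Merlin sending $w$, Merlin sends only the $m(n)$ \emph{labels} $b_1, \ldots, b_m \in \zo$ that he claims are $\Enc(w^\star)_{i_j}$. Arthur then runs $A$ on the labeled sample $\{(\bx^{(j)}, b_j)\}$, obtains a hypothesis $h$, and needs to check whether $h$ is consistent with a genuine member of $\mathcal{C}_{L,V}$ in a way that certifies $z \in L$. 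Since $A$ only sees $m(n)$ examples, it cannot tell whether those labels came from a real concept; we just need the labels Merlin provides to \emph{coincide with} $\Enc(w^\star)$ on those $m$ coordinates for the real lex-first $w^\star$ when $z \in L$ (completeness: Merlin sends the truthful labels, then with probability $\ge 2/3$ over Arthur's randomness $A$ outputs $h$ with $\error_{\mathcal{D}_z}(\Cert_z, h) \le \eps^\star < \eps^\star c/c$, and because $\mathcal{D}_z$ is spread over $cp(n)$ useful points, small error means $h$ agrees with $\Enc(w^\star)$ on a $(1-\eps^\star)$-fraction of coordinates, so $\Dec$ applied to $h$'s implied codeword recovers $w^\star$, and Arthur verifies $V(z,w^\star)=1$ in time $O(t(n))$). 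For soundness, if $z \notin L$, there is \emph{no} certificate, so no matter what labels $b_1,\ldots,b_m$ Merlin sends and whatever $h$ comes out, when Arthur extracts a string from $h$ via $\Dec$ and runs $V$, the verifier rejects; hence Arthur rejects with probability $1$, giving perfect soundness.

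The main obstacle — and the delicate point — is making the decoding step work from only an $\eps^\star$-accurate hypothesis. The hypothesis $h$ is a function on all of $\zo^{n + \log(cp(n))}$, and error is measured under $\mathcal{D}_z$; I need $\mathcal{D}_z$ to be (close to) uniform over the $cp(n)$ useful coordinates so that $\error_{\mathcal{D}_z}(\Cert_z, h) \le \eps^\star$ translates to: the string $(h(z,1), \ldots, h(z, cp(n)))$ differs from $\Enc(w^\star)$ in at most an $\eps^\star$-fraction of coordinates, which is exactly the decoding radius in \Cref{def:codes}. Then $\Dec$ recovers $w^\star$ exactly. Querying $h$ at all $cp(n)$ useful points costs $cp(n)$ evaluations of the circuit/hypothesis output by $A$ (size $O(T(n))$), contributing the $\ell(n)/(p(n)\log\ell(n))$-type term after accounting for amplification. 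One subtlety: $A$ succeeds only with probability $2/3$, and we need this inside an $\AM$ protocol; since $A$'s randomness is part of Arthur's randomness this is fine for completeness, and for soundness we used perfect soundness which doesn't care about $A$'s success probability at all. A second subtlety is the gap between "learns to error $\eps^\star$" and needing the codeword-agreement fraction to be within the \emph{decoding} radius $\eps^\star$; one should set up $\mathcal{D}_z$ and constants so that the fraction of \emph{useful} coordinates on which $h$ errs is at most $\eps^\star$ — e.g.\ by placing all of $\mathcal{D}_z$'s mass on useful inputs, in which case $\error_{\mathcal{D}_z}(\Cert_z,h)\le\eps^\star$ gives exactly a $\le \eps^\star$ fraction of bad coordinates. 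Finally, I must tally the parameters: Arthur's runtime is $O(m(n)) + O(T(n)) + cp(n)\cdot O(T(n)) + \poly(p(n)) + O(t(n))$ for one run, and boosting completeness from $2/3$ to the required threshold costs an $O(\log\ell(n))$ (or $O(1)$) repetition factor — repeated runs use independent labels from Merlin, of total length $O(m(n)\log \ell(n))$, and this is where the $\log\ell(n)$ in the denominator and the certificate-size bound $O(m(n))$ up to logs come from; matching this against $L\notin\AMTIME(\ell(n),m(n))$ yields the stated time lower bound $\Omega\!\left(\frac{\ell(n)-t(n)}{p(n)\log\ell(n)} - \poly(p(n))\right)$ on $T(n)$.
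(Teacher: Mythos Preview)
Your approach is essentially the paper's: prove the contrapositive by building an $\AM$ protocol in which Arthur samples $m(n)$ uniform useful indices, Merlin supplies only the $m(n)$ corresponding labels, Arthur runs the learner, evaluates the hypothesis on all $cp(n)$ useful coordinates, decodes, and runs $V$. Completeness comes from the learner's guarantee plus the decoding radius of the code; soundness is perfect because when $z\notin L$ no string passes $V$. This is exactly \Cref{claim:nondeterministic-reduction} in the paper.

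One point to correct: you attribute the $\log\ell(n)$ in the denominator to boosting the completeness via $O(\log\ell(n))$ repetitions with fresh labels. That would inflate Merlin's proof to $O(m(n)\log\ell(n))$ bits, which does \emph{not} place $L$ in $\AMTIME(\ell(n),m(n))$ and so would not yield the contradiction. In fact no boosting is needed---the $\AM$ definition already only asks for completeness $\ge 2/3$, which the learner's success probability provides directly. The $\log$ factor in the paper's bound instead comes from the overhead of \emph{simulating} the hypothesis: the learner runs in time $T(n)$, so the hypothesis it writes down has description length $O(T(n))$, and evaluating it on a single input costs $O(T(n)\log T(n))$ in the Turing-machine model. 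Doing this $cp(n)$ times gives Arthur runtime $O(p(n)T(n)\log T(n)+\poly(p(n))+t(n))$; taking the contrapositive and using $\log T(n)\le \log\ell(n)$ yields the stated bound on $T(n)$.
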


\paragraph{Example setting of parameters in \Cref{lem:one curve lower bounds the other}.}{
We typically think of $m(n)\ll p(n)$ and $\ell(n)\gg t(n)\ge p(n)$. This corresponds to the intuition that as one shrinks the nondeterminism budget, the time complexity should increase. In the case of SAT, we have $t(n)=n\log n$ and $p(n)=n$. So, for example, if $\SAT \notin \AMTIME(2^{\sqrt{n}},\sqrt{n})$, i.e.~if any $\AM$ protocol for $\SAT$ that uses $O(\sqrt{n})$ bits of nondeterminism requires $\Omega(2^{\sqrt{n}})$ time, then \Cref{lem:one curve lower bounds the other} says that any $O(\sqrt{n})$-sample learner for $\mathcal{C}_{\mathrm{SAT}}$ must run in time $\Tilde{\Omega}(2^{\sqrt{n}})$.
}



\subsection{Proof of~\Cref{lem:one curve lower bounds the other}}

The proof of \Cref{lem:one curve lower bounds the other} is algorithmic and proceeds via the contrapositive. The key step is a randomized, nondeterministic reduction from verifying $L$ with few nondeterministic bits to learning $C_{L,V}$ with few samples.

\begin{lemma}[Learners with few samples yield $\AM$ protocols with few bits of nondeterminism]
\label{claim:nondeterministic-reduction}
    Let $L\in \NTIME(t(n),p(n))$ for growth functions $t(n),p(n)\ge n$, $V$ be a time-$O(t(n))$ verifier for $L$ using size-$O(p(n))$ certificates, and $\eps^\star>0$ be the constant from \Cref{def:codes}. If there is a learner for $\mathcal{C}_{L,V}$ which uses $m(n)$ samples, runs in time $T(n)$, and outputs a hypothesis with error $\eps^\star$ then, $$L\in\AMTIME(p(n)T(n)\log T(n)+\poly(p(n))+t(n), m(n)).$$ Furthermore, the $\AM$ protocol for $L$ has perfect soundness. 
\end{lemma}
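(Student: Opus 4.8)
The plan is to design an $\AM$ protocol for $L$ where Merlin's proof has length $O(m(n))$ — i.e.\ roughly the learner's sample complexity — by having Merlin send a carefully chosen batch of $m(n)$ labeled examples. On input $z \in \zo^n$, Arthur wants to decide whether $z \in L$, which (by definition of $\Cert_z$) is equivalent to deciding whether the concept $\Cert_z \in \mathcal{C}_{L,V}$ is non-constant, and moreover to recovering enough of $\Enc(w^\star)$ to decode $w^\star$ and then verify $V(z,w^\star)=1$ directly. The key idea: fix the distribution $\mathcal{D}_z$ to be uniform over the $cp(n)$ useful inputs $\{(z,i)\}_{i \in [cp(n)]}$. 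If $z \in L$, then $\Cert_z$ restricted to this support is the string $\Enc(w^\star)$, so running the learner on samples from $\mathcal{D}_z$ (labeled by $\Cert_z$) produces, with probability $\ge 2/3$, a hypothesis $h$ with $\error_{\mathcal{D}_z}(\Cert_z,h) \le \eps^\star$; by the distance property of the code (\Cref{def:codes}), the string $(h(z,1),\ldots,h(z,cp(n)))$ agrees with $\Enc(w^\star)$ on all but an $\eps^\star$ fraction of coordinates, so $\Dec$ of it returns $w^\star$, and Arthur can then check $V(z,w^\star)=1$ in time $O(t(n))$.

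The subtlety — and where the $\AM$ structure and perfect soundness come in — is that a learner only comes with guarantees when its input examples are genuinely drawn i.i.d.\ from $\mathcal{D}_z$ and labeled by a concept in the class. So the protocol should run as follows. Arthur's random string $\br$ specifies $m(n)$ i.i.d.\ uniform indices $i_1,\ldots,i_{m(n)} \in [cp(n)]$ (this is the ``public coins'' part, and note it is only $O(m(n)\log p(n))$ bits — to get down to $O(m(n))$ one can instead have Arthur pick the indices and Merlin only supply the labels, or absorb the $\log$ into the runtime; I'll say more below). Merlin's proof $w$ supplies the $m(n)$ labels $b_1,\ldots,b_{m(n)} \in \zo$ that he claims equal $\Enc(w^\star)_{i_1},\ldots$. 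Arthur then simulates the learner on the labeled sample $\{((z,i_j), b_j)\}_{j}$, obtains $h$, computes $\tilde y = (h(z,1),\ldots,h(z,cp(n)))$, sets $\hat w = \Dec(\tilde y)$, and \emph{accepts iff $V(z,\hat w)=1$}.

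For soundness: if $z \notin L$ there is no certificate whatsoever, so $V(z,\hat w)=0$ for every string $\hat w$ regardless of what $h$ (hence $\tilde y$, hence $\hat w$) is — so Arthur rejects with probability $1$ for \emph{every} Merlin strategy. This gives perfect soundness automatically, which is exactly why no completeness-amplification trick (unavailable for $\AMTIME(\cdot,\cdot)$) is needed. For completeness: if $z \in L$, let $w^\star$ be the lex-first certificate; the honest Merlin sends $b_j = \Enc(w^\star)_{i_j}$, so the sample is exactly an i.i.d.\ sample from $\mathcal{D}_z$ labeled by $\Cert_z \in \mathcal{C}_{L,V}$, the learner succeeds with probability $\ge 2/3$ over $\br$ and its internal coins, and on success the decoding-then-verification step accepts. (Standard repetition or a Markov argument over the learner's internal randomness handles the internal coins; or one folds the internal coins into $\br$.) The runtime is: $O(m(n))$ to read the proof, one invocation of the learner costing $T(n)$, then $cp(n)$ evaluations of $h$ each costing $O(T(n))$ (a size-$T(n)$ circuit/program output by the learner, evaluated with $O(\log T(n))$ overhead per step or $O(T(n))$ total — hence the $p(n)T(n)\log T(n)$ term), then $\poly(p(n))$ for $\Dec$, then $O(t(n))$ for $V$; total $O(p(n)T(n)\log T(n) + \poly(p(n)) + t(n))$, matching the claim.

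The main obstacle I anticipate is the proof-length bookkeeping: the naive protocol has Arthur's random string encode $m(n)$ indices into $[cp(n)]$, which is $\Theta(m(n)\log p(n))$ public random bits, but those are free in $\AM$; the \emph{proof} length is only the $m(n)$ label bits plus possibly a short description of which concept Merlin claims — so we must be careful that Merlin need not send $z$ or $w^\star$ (he doesn't: $z$ is the input, and $w^\star$ is \emph{recovered} by Arthur via decoding, not sent). One must also double-check that the reduction is genuinely black-box in the learner (it is — we only run it and evaluate its output hypothesis) and that the distribution $\mathcal{D}_z$ is samplable by Arthur efficiently (it is — uniform over an explicit polynomial-size set), and handle the gap between ``$\error_{\mathcal{D}_z} \le \eps^\star$'' and ``Hamming distance $\le \eps^\star \cdot cp(n)$'': since $\mathcal{D}_z$ is exactly uniform over the $cp(n)$-element support, these coincide, so $\Dec$ is guaranteed to recover $w^\star$ exactly. \Cref{lem:one curve lower bounds the other} then follows by contrapositive, bounding $\AMTIME$ in terms of $\NTIME$ via the trivial $2^{m(n)}$-overhead simulation of the proof, combined with the hypothesis $L \notin \AMTIME(\ell(n),m(n))$.
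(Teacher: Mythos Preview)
Your proposal is correct and follows essentially the same approach as the paper: Arthur samples $m(n)$ uniform indices into $[cp(n)]$, Merlin supplies the $m(n)$ label bits, Arthur runs the learner, evaluates the hypothesis on all $cp(n)$ useful points, decodes, and verifies with $V$; completeness follows from the learner's guarantee under the honest labeling and the code's decoding radius, and perfect soundness is immediate since no certificate exists when $z\notin L$. Your runtime accounting and proof-length bookkeeping match the paper's as well.
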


\begin{proof}
    Let $A$ be the learner for $\mathcal{C}_{L,V}$. We design an $\AM$~protocol $M$ for $L$. On input $z\in \zo^n$, $M$ does the following.
    \begin{enumerate}
        \item Draw $m=m(n)$ uniform random indices $\bi^{(1)},\ldots,\bi^{(m)}\sim \zo^{\log (cp(n))}$.
        \item Ask Merlin for the $m$ labels, $w_1,\ldots,w_m$, of the examples $\bx^{(1)},\ldots,\bx^{(m)}$ where $\bx^{(j)}=(z,\bi^{(j)})$
        \item Run the learner $A$ on the sample $\bS=\{(\bx^{(1)},w_1),\ldots,(\bx^{(m)},w_m)\}$ and obtain a hypothesis $\textsc{Hyp}:\zo^{n+\log (C p(n))}\to \zo$.
        \item Query \textsc{Hyp} on all points $(z,i)$ for $i\in\zo^{\log (cp(n))}$ to obtain a string $y\in \zo^{cp(n)}$.
        \item Apply the decoding algorithm to $y$ to obtain $\Tilde{w}=\Dec(y)$.
        \item Run the verifier $V$ on $z$ using certificate $\Tilde{w}$ and output the result, i.e. output $V(z,\Tilde{w})$.
    \end{enumerate}

    \paragraph{Runtime.}{
        The learner $A$ runs in time $T(n)$ and therefore the hypothesis that it outputs can simulated on a query in time $O(T(n)\log T(n))$. So obtaining $y$ takes time $O(p(n)T(n)\log T(n))$. The decoding algorithm runs in time $\poly(p(n))$. Finally, running the verifier $V$ takes time $t(n)$. So the overall runtime of $M$ is $O(p(n)T(n)\log T(n)+\poly(p(n))+t(n))$.  
    }
    \paragraph{Correctness.}{
        We prove the completeness and soundness of this protocol separately.

        \subparagraph{Completeness.}{
            If $z\in L$, then for every choice of examples $x^{(1)},\ldots,x^{(m)}$, Merlin can respond with the true labels given by $\Cert_z$. It follows that $\bS$ is a size-$m$ sample from the distribution $\mathcal{D}$ which is uniform over the set $\{(z,i)\mid i\in \zo^{cp(n)}\}$. Therefore, w.h.p.~$\textsc{Hyp}$ has error $\eps^\star$ over $\mathcal{D}$ and the string $y$ contains at most $\eps^\star\cdot cp(n)$ errors relative to $\Enc(w^\star)$. The decoder then returns $w^\star$. Finally, since $w^\star$ is a certificate for $z$, the output of $M$ is $1=V(z,w^\star)$. 
        }
        \subparagraph{Soundness.}{
            If $z\not\in L$, then  $V(z,\Tilde{w})=0$ for every certificate $\Tilde{w}$. Therefore, $M$ always outputs the correct answer regardless of Merlin's responses.\hfill\qedhere
        }
    }
\end{proof}

\Cref{lem:one curve lower bounds the other} now follows from the contrapositive of \Cref{claim:nondeterministic-reduction}.


With~\Cref{claim:nondeterministic-reduction} in hand, we can then derive a one-sided randomized algorithm for $L$:

\begin{corollary}[Learners with few samples yield fast randomized algorithms]
    \label{cor:randomized reduction}
    Let $L\in \NTIME(t(n),p(n))$ and let $V$ be a time-$O(t(n))$ verifier for $L$ using size-$O(p(n))$ certificates. There is a constant $\eps>0$ such that if there is a learner for $\mathcal{C}_{L,V}$ which uses $m(n)$ samples, runs in time $T(n)$, and outputs a hypothesis with error $\eps$ then, $L\in\RTIME(2^{O(m(n))}T(n)\log T(n)\poly(t(n)))$.
\end{corollary}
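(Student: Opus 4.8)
The plan is to combine \Cref{claim:nondeterministic-reduction} with the textbook fact that each bit of nondeterminism costs a multiplicative factor of two to simulate by a deterministic search. First I would set $\eps \coloneqq \eps^\star$, the distance constant from \Cref{def:codes}, and apply \Cref{claim:nondeterministic-reduction}: an $m(n)$-sample, time-$T(n)$ learner for $\mathcal{C}_{L,V}$ achieving error $\eps^\star$ yields an $\AM$ protocol $M$ for $L$ with \emph{perfect} soundness in which Arthur runs in time $\ell(n) \coloneqq O(p(n)T(n)\log T(n) + \poly(p(n)) + t(n))$ and Merlin sends a proof of length $O(m(n))$.

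From $M$ I would build a one-sided-error randomized algorithm $A$: on input $z \in \zo^n$, $A$ draws Arthur's random string $\br$ once and then exhaustively tries all $2^{O(m(n))}$ possible Merlin proofs $w$, running $M(\br,w,z)$ for each and accepting iff some $w$ causes $M$ to accept. Correctness falls out of the two properties of $M$. Perfect soundness says that when $z \notin L$ we have $M(\br,w,z)=0$ for \emph{every} $\br$ and $w$, so $A$ rejects with probability $1$; completeness says that when $z \in L$, with probability at least $2/3$ over $\br$ some proof $w$ makes $M$ accept, and $A$ finds it by enumeration. This is precisely the definition of $\RTIME$.

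The runtime of $A$ is the enumeration factor $2^{O(m(n))}$ times the per-proof cost $\ell(n)$, i.e.\ $2^{O(m(n))}\cdot O\!\left(p(n)T(n)\log T(n) + \poly(p(n)) + t(n)\right)$. Using $p(n)\le t(n)$ (which holds because $V$ runs in time $O(t(n))$ and $p(n)\le t(n)$ by definition of $\NTIME(t(n),p(n))$) to absorb the additive terms into a single $\poly(t(n))$ factor, this collapses to $2^{O(m(n))}T(n)\log T(n)\poly(t(n))$, as claimed.

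I do not anticipate any genuine difficulty: the corollary is a direct repackaging of \Cref{claim:nondeterministic-reduction} via brute-force simulation of nondeterminism. The only two points to handle with care are that it is the \emph{perfect} soundness of the protocol (rather than merely bounded soundness) which makes $A$ a one-sided-error algorithm and hence places $L$ in $\RTIME$ as opposed to a two-sided error class, and the bookkeeping of the $\poly(t(n))$ and $\poly(p(n))$ factors when simplifying the final runtime expression.
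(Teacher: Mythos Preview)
Your proposal is correct and follows essentially the same argument as the paper: apply \Cref{claim:nondeterministic-reduction} to obtain a perfectly sound $\AM$ protocol, then brute-force over Merlin's $O(m(n))$-bit proof to get a one-sided-error randomized algorithm. The only cosmetic difference is that the paper factors the enumeration step into a standalone proposition (\Cref{prop:amtime upper bounds}) before invoking it, whereas you inline the construction of $A$ directly.
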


This follows from the fact that $\AM$ protocols with perfect soundness can be simulated by randomized algorithms with one-sided error:

\begin{proposition}[Randomized simulation of $\AMTIME$]
    \label{prop:amtime upper bounds}
    If $L\in \AMTIME(t(n),p(n))$ and $L$ is realized by an $\AM$~protocol with perfect soundness, then $L\in \RTIME(t(n)2^{O(p(n))})$.
\end{proposition}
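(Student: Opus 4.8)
The plan is to give a standard "Adleman-style" derandomization of Arthur's message, exploiting the fact that soundness is perfect so that the only thing that can go wrong is completeness. Let $M(r,w,z)$ be the $\AM$ protocol: $M$ runs in time $O(t(n))$, uses $r$ random bits with $r = O(t(n))$, and takes a proof $w$ with $|w| \le O(p(n))$. By perfect soundness, if $z \notin L$ then for \emph{every} $r$ and \emph{every} $w$ we have $M(r,w,z) = 0$; if $z \in L$ then $\Prx_{\br}[\exists\, w\colon M(\br,w,z)=1] \ge 2/3$.

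First I would define the candidate randomized algorithm $A$ for $L$: on input $z$ with $|z| = n$, draw a uniform random string $\br$ of the appropriate length, then iterate over all proof strings $w \in \zo^{O(p(n))}$, and accept iff $M(\br,w,z) = 1$ for some such $w$; otherwise reject. The runtime is the number of proofs, $2^{O(p(n))}$, times the per-call cost $O(t(n))$ of simulating $M$ (plus the cost of writing down $\br$, which is absorbed), giving $O(t(n) \cdot 2^{O(p(n))})$ as claimed. Note I am crucially using that $p(n)$ bounds the proof length, so exhaustive search over proofs is affordable.

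Next I would verify the two $\RTIME$ conditions. If $z \notin L$: by perfect soundness no proof $w$ makes $M$ accept, regardless of $\br$, so $A$ rejects with probability $1$ — this gives the required one-sided (zero-error-on-no-instances) behavior. If $z \in L$: with probability $\ge 2/3$ over $\br$ there exists a $w$ with $M(\br,w,z) = 1$, and the exhaustive search finds it, so $A$ accepts with probability $\ge 2/3$. Thus $L \in \RTIME(t(n) 2^{O(p(n))})$.

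There is no real obstacle here — the only thing to be slightly careful about is the bookkeeping on the per-proof simulation cost and making sure the "$O(\cdot)$ inside the exponent" conventions are consistent (i.e.\ that $|w| \le O(p(n))$ translates to $2^{O(p(n))}$ proofs), and that the $2/3$ completeness bound is already good enough for $\RTIME$ without any amplification. If one wanted a cleaner constant one could first amplify completeness by repeating Arthur's coin tosses a constant number of times in parallel, but since $\RTIME$ only demands success probability $\ge 2/3$ this is unnecessary. The same argument of course applies verbatim to prove \Cref{cor:randomized reduction} by plugging the parameters of \Cref{claim:nondeterministic-reduction} into \Cref{prop:amtime upper bounds}: there $p(n) = m(n)$ and $t(n) = O(p(n)T(n)\log T(n) + \poly(p(n)) + t(n))$, yielding the stated bound $\RTIME(2^{O(m(n))} T(n)\log T(n)\poly(t(n)))$.
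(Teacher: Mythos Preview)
Your proposal is correct and matches the paper's proof essentially line for line: draw Arthur's randomness, brute-force over all $2^{O(p(n))}$ proofs, and use perfect soundness to get one-sided error. The only quibble is terminological---this is not really ``Adleman-style derandomization'' (you are not removing randomness, you are keeping Arthur's coins and exhaustively searching Merlin's message)---but the argument itself is exactly right.
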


\begin{proof}
    Let $M$ be an $\AM$~protocol with perfect soundness which runs in time $O(t(n))$ and uses $O(p(n))$ bits of nondeterminism. We design a randomized algorithm $A$ for $L$. On input $z\in \zo^n$, $A$ does the following.
    \begin{enumerate}
        \item Generate a random string $\br$.
        \item Output 1 if there is some $w\in\zo^{O(p(n))}$ such that $M(\br,w,z)=1$.
        \item Otherwise, output $0$.
    \end{enumerate}
    Since $A$ simulates $M$ on every $w\in\zo^{O(p(n))}$, the runtime is $O(t(n)) \cdot 2^{O(p(n))}$. 
    
    It remains to show this algorithm has one-sided error. If $z\in L$, then by the completeness of $M$, with probability $2/3$ over the choice of $\br$, there will be some $w\in\zo^{O(p(n))}$ on which $M(\br,w,z)=1$ in which case $A$ successfully outputs $1$. On the other hand, if $z\not\in L$, then by perfect soundness, \textit{for every} $\br$ and \textit{for every} $w\in\zo^{O(p(n))}$, we have $M(\br,w,z)=0$ in which case our algorithm correctly outputs $0$. 
\end{proof}

\begin{proof}[Proof of \Cref{cor:randomized reduction}]
    By \Cref{claim:nondeterministic-reduction}, since $p(n)\le t(n)$, $L\in \AMTIME(T(n)\log T(n)\poly(t(n)), m(n))$ and this is realized with an $\AM$ protocol with perfect soundness. The corollary then follows from \Cref{prop:amtime upper bounds} which shows that: $$\AMTIME(T(n)\log T(n)\poly(t(n)), m(n))\sse \RTIME(2^{O(m(n))}T(n)\log T(n)\poly(t(n)))$$ and therefore $L\in \RTIME(2^{O(m(n))}T(n)\log T(n)\poly(t(n)))$ as desired.
\end{proof}

\section{Proof of \Cref{thm:NP formal} and its corollaries}

We are now ready to put things together and prove~\Cref{thm:NP formal}. Let $L^\star$ be the language
    $$
    L^\star \coloneqq \{\langle{M,x}\rangle\mid M\text{ accepts }x\text{ in }p(|x|)\text{ steps}\}
    $$
    where $M$ is the encoding a nondeterministic Turing machine.
    Let $V^\star$ be the verifier that on input $\langle M,x\rangle$ and certificate $w$, simulates $M$ on $(x,w)$ for $p(|x|)$ time steps and outputs the value of $M(x,w)$. Since time-$p(n)$ Turing machines can be simulated in time $O(p(n)\log p(n))$ \cite{AB09}, this verifier uses $p(n)$-size certificates, runs in $O(p(n)\log p(n))$ time, and verifies $L^\star$, thereby showing that $L^\star\in \NTIME(p(n)\log p(n), p(n))$. 
    
    We claim that the concept class $\mcC^\star\coloneqq \mcC_{L^\star,V^\star}$ fulfills the requirements of the theorem:
    \begin{enumerate}
        \item By \Cref{claim:vc-dimension}, $\VCdim(\mcC^\star)=1$ and by \Cref{claim:enumerable}, the class is $\NTIME(p(n)\log(p(n)),p(n))$-enumerable. Also, by \Cref{claim:vc-dimension}, there is an algorithm that learns $\mcC^\star$ to error $\eps$ for every $\eps>0$ using $O(1/\eps)$ samples in time $2^{O(p(n))}$. 
        \item By \Cref{cor:randomized reduction}, if there is an $m$-sample, time-$t(n)$ algorithm that learns $\mcC^\star$ w.h.p.~to error $\eps^\star$, then $L^\star\in \RTIME\left(2^{O(m(n))}t(n)\log t(n)\poly(p(n))\right)$. We also observe that $L^\star$ is $\NTIME(p(n))$-hard. Given a language $L\in \NTIME(p(n))$ with verifier $V$, we can reduce it to $L^\star$ via the mapping $x\mapsto \langle V,x\rangle$. Correctness is immediate from the definition of $L^\star$. It follows that if $L^\star\in \RTIME\left(2^{O(m(n))}t(n)\poly(p(n))\right)$, then $\NTIME(p(n))\sse \RTIME\left(2^{O(m(n))}t(n)\log t(n)\poly(p(n))\right)$.
        \item By \Cref{claim:large-samples-upper-bound}, the class $\mcC^\star$ can be learned to error $\eps$ for every $\eps>0$ in time $O(p(n)/\eps)$ using $O(p(n)/\eps)$ samples.
    \end{enumerate}
    Finally, by \Cref{prop:cert_dt}, every concept in $\mcC^\star$ is a decision tree of size $O(p(n))$. \hfill\qed

\subsection{Corollaries of \Cref{thm:NP formal}}
\begin{corollary}[Consequences under $\NP\neq\RP$]
\label{cor:np-neq-rp}
    There is a concept class $\mathcal{C}$ such that the following holds. 
    \begin{enumerate}
        \item We have $\VCdim(\mcC)=1$ and $\mathcal{C}$ is $\NP$-enumerable. There is an algorithm that, for every $\eps>0$, learns $\mcC$ to error $\eps$ using $O(1/\eps)$ samples in time $2^{O(n)}$. 
        \item Let $\eps^\star>0$ be the constant in \Cref{def:codes}. Assuming $\NP\neq \RP$, no polynomial-time algorithm can learn $\mcC$ to error $\eps^\star$ using $O(\log n)$ samples.  
        \item There is an algorithm that, for every $\eps>0$, learns $\mcC$ to error $\eps$ whose sample complexity and runtime are both $O(n/\eps)$. 
    \end{enumerate}
\end{corollary}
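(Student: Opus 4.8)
The plan is to instantiate \Cref{thm:NP formal} with the growth function $p(n)=n$ and take $\mathcal{C}$ to be the resulting class $\mathcal{C}^\star = \mathcal{C}_{L^\star,V^\star}$ from that proof, so that $L^\star\in\NTIME(n\log n,n)$ and $L^\star$ is $\NTIME(n)$-hard under (linear-time) Karp reductions, via $x\mapsto\langle V,x\rangle$. Items~(1) and~(3) of the corollary are then immediate specializations of \Cref{thm:NP formal:VCdim} and \Cref{thm:NP formal:many samples}: with $p(n)=n$ we get $\VCdim(\mathcal{C})=1$, a learner using $O(1/\eps)$ samples running in time $2^{O(n)}$, and a learner whose sample complexity and runtime are both $O(n/\eps)$; and $\mathcal{C}$ is $\NTIME(n\log n,n)$-enumerable, which is a fortiori $\NP$-enumerable since $\NTIME(n\log n,n)\sse\NTIME(n\log n)\sse\NTIME(n^2)\sse\NP$, so the enumeration oracle decides a language in $\NP$.

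For item~(2) I would argue by contradiction. Suppose that for some constant $k$ there is a time-$O(n^k)$ algorithm that learns $\mathcal{C}$ to error $\eps^\star$ using $m(n)=O(\log n)$ samples in the sense of \Cref{def:PAC-learning}. Applying \Cref{cor:randomized reduction} to $L^\star$ (with $t(n)=O(n\log n)$, $m(n)=O(\log n)$, $T(n)=O(n^k)$) yields $L^\star\in\RTIME\!\left(2^{O(m(n))}\,T(n)\log T(n)\,\poly(t(n))\right)=\RTIME(\poly(n))$, using $2^{O(\log n)}=\poly(n)$. Since $L^\star$ is $\NTIME(n)$-hard under linear-time Karp reductions, composing the reduction with this one-sided-error algorithm shows $\NTIME(n)\sse\RTIME(\poly(n))$.

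The remaining step is to upgrade $\NTIME(n)\sse\RTIME(\poly(n))$ to $\NP\sse\RP$ by a padding argument: given $L\in\NTIME(n^c)$, the padded language $L_{\mathrm{pad}}=\{\,x\,\#\,1^{|x|^c}:x\in L\,\}$ lies in $\NTIME(m)$ on inputs of length $m=\Theta(|x|^c)$ — a verifier recognizes the block structure, checks that $|x|^c$ matches the padding length, and then runs the $\NTIME(n^c)$-verifier of $L$ on $x$, which now runs in time linear in $m$ with a certificate of size at most $m$ — so $L_{\mathrm{pad}}\in\RTIME(\poly(m))$, and running that algorithm on the padded version of an $n$-bit input decides $L$ with one-sided error in randomized time $\poly(n^c)=\poly(n)$. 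Hence $\NP\sse\RP$, and since $\RP\sse\NP$ always, $\NP=\RP$, contradicting the hypothesis $\NP\ne\RP$. The main (if mild) obstacle is getting this padding argument right — in particular that $L_{\mathrm{pad}}$ genuinely lands in \emph{linear} nondeterministic time — together with the bookkeeping that keeps the blow-ups in \Cref{cor:randomized reduction} polynomial once $m(n)=O(\log n)$ and $T(n)=\poly(n)$; everything else is a transcription of \Cref{thm:NP formal} at $p(n)=n$.
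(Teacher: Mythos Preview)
Your proposal is correct and follows essentially the same approach as the paper: instantiate \Cref{thm:NP formal} with $p(n)=n$ and $m(n)=O(\log n)$, derive $\NTIME(n)\sse\RTIME(\poly(n))$ from a hypothetical efficient low-sample learner, and then pad to conclude $\NP=\RP$. The only difference is cosmetic---you unwrap the proof of \Cref{thm:NP formal:few samples} by invoking \Cref{cor:randomized reduction} and the $\NTIME(n)$-hardness of $L^\star$ directly, and you spell out the padding argument that the paper leaves as the one-word remark ``By padding''.
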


\Cref{cor:np-neq-rp} implies \Cref{cor:dichotomy} since $O(\log n)$ is asymptotically larger than any function of $\VCdim(\mcC)=1$. 

\begin{proof}
    We apply \Cref{thm:NP formal} with $p(n)=n$ and $m(n)=O(\log n)$. To prove the second part of the corollary, suppose for contradiction that there is a polynomial-time, $O(\log n)$-sample algorithm that learns $\mcC$ to error $\eps$ where $\eps>0$ is the constant from \Cref{thm:NP formal}. Then, \Cref{thm:NP formal} implies that $\NTIME(n)\sse \RTIME(\poly(n))$. By padding, this implies that $\NP\sse\RP$, contradicting our assumption.
\end{proof}

\begin{corollary}[Consequences under randomized ETH]
    There is a concept class $\mathcal{C}$ such that the following holds. 
    \begin{enumerate}
        \item We have $\VCdim(\mcC)=1$ and $\mathcal{C}$ is $\NP$-enumerable. There is an algorithm that, for every $\eps>0$, learns $\mcC$ to error $\eps$ using $O(1/\eps)$ samples in time $2^{O(n)}$. 
        \item Let $\eps^\star>0$ be the constant in \Cref{def:codes}. Assuming randomized ETH, there is a constant $\delta>0$ such that, no $2^{\delta n}$-time algorithm can learn $\mcC$ to error $\eps^\star$ using $\delta n$ samples.
        \item There is an algorithm that, for every $\eps>0$, learns $\mcC$ to error $\eps$ whose sample complexity and runtime are both ${O}(n/\eps)$. 
    \end{enumerate}
\end{corollary}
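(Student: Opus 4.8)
The plan is to instantiate \Cref{thm:NP formal} with $p(n)=n$; the concept class $\mcC$ is then exactly the one produced in \Cref{cor:np-neq-rp}. Parts (1) and (3) of the corollary are verbatim the corresponding parts of \Cref{cor:np-neq-rp} and follow directly from \Cref{thm:NP formal:VCdim} and \Cref{thm:NP formal:many samples}: with $p(n)=n$, the brute-force exact learner runs in $2^{O(n)}$ time using $O(1/\eps)$ samples, and the sparse-function learner of \Cref{claim:large-samples-upper-bound} uses $O(n/\eps)$ samples and time. So all the new content is in part (2), which strengthens the ``$\NP \neq \RP$'' conditional poly-time lower bound of \Cref{cor:np-neq-rp} into a $2^{\delta n}$-time lower bound against $(\delta n)$-sample learners, under randomized ETH.

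For part (2) I would argue by contradiction. Let $c$ be the absolute constant hidden in the exponent ``$2^{O(m(n))}$'' of \Cref{thm:NP formal:few samples} (it traces back to the rate of the error-correcting code of \Cref{def:codes}), let $\delta_0 > 0$ be the constant from randomized ETH, and set $\delta \coloneqq \delta_0/(2c)$. Suppose, for contradiction, that some $2^{\delta n}$-time, $(\delta n)$-sample algorithm learns $\mcC$ to error $\eps^\star$. Substituting $m(n) = \delta n$ and $t(n) = 2^{\delta n}$ into \Cref{thm:NP formal:few samples} gives
\[ \NTIME(n) \;\subseteq\; \RTIME\!\left(2^{O(\delta n)}\cdot 2^{\delta n}\cdot \log\!\big(2^{\delta n}\big)\cdot \poly(n)\right) \;=\; \RTIME\!\left(2^{c\delta n}\right) \;=\; \RTIME\!\left(2^{(\delta_0/2)\,n}\right), \]
where the $\log$- and $\poly$-factors are absorbed since $\delta$ is a fixed constant.

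It remains to observe that this inclusion contradicts randomized ETH. The standard translation of randomized ETH into a statement about nondeterministic (quasi-)linear time --- apply the sparsification lemma so that the hard $3$-SAT instances have $O(n)$ clauses, and encode them so that satisfiability is checkable in nondeterministic linear time (guess an assignment, then evaluate) --- yields $\NTIME(n) \not\subseteq \RTIME(2^{\delta_0 n})$. Together with the display above (which puts $\NTIME(n)$ inside $\RTIME(2^{(\delta_0/2)n}) \subseteq \RTIME(2^{\delta_0 n})$), this is a contradiction, ruling out the hypothesized learner and establishing part (2); combined with parts (1) and (3) this is precisely the tradeoff depicted in \Cref{fig:ETH}.

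The one step that needs genuine care is this last one --- converting ``$3$-SAT on $n$ variables requires randomized time $2^{\Omega(n)}$'' into ``$\NTIME(n) \not\subseteq \RTIME(2^{\Omega(n)})$''. This is where one uses the sparsification lemma to keep the instance length linear (up to the usual polylogarithmic encoding factors), picks the input encoding under which verification is (quasi-)linear time, and must keep in mind that the hypothesis concerns randomized rather than deterministic time. Everything else is a direct substitution into \Cref{thm:NP formal} and \Cref{cor:randomized reduction}.
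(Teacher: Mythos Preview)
Your approach works but takes a genuinely different route from the paper's. You instantiate \Cref{thm:NP formal} as a black box with $p(n)=n$, so the class is the $\mcC^\star$ of \Cref{cor:np-neq-rp}; the contradiction then runs through $\NTIME(n)\subseteq\RTIME(2^{O(\delta n)})$, and you need the sparsification lemma to convert randomized ETH into a hardness statement for $\NTIME(n)$. The paper instead changes the concept class for this corollary: it takes $\mcC=\mcC_{\text{SAT}}$ (using $\SAT\in\NTIME(n\polylog n,n)$), invokes \Cref{claim:nondeterministic-reduction} and \Cref{prop:amtime upper bounds} directly rather than going through \Cref{thm:NP formal}, and concludes $\SAT\in\RTIME(2^{O(\delta n)})$, contradicting ETH without the $\NTIME(n)$ detour and without appealing to sparsification. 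Your route has the virtue of reusing \Cref{thm:NP formal} verbatim so that all three parts refer to a single class and a single master theorem; the paper's route is shorter and keeps the hard language equal to $\SAT$ throughout, so the final contradiction with ETH is immediate and the translation step you rightly flag as delicate simply does not arise.
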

   
\begin{proof}
    Let $m(n)=\delta n$ for a small constant $\delta>0$ specified later. The language $\mathrm{SAT}$ is in $\NTIME(n\polylog n, n)$. Therefore, the class $\mcC_{\text{SAT}}$ satisfies $\VCdim(\mcC_{\text{SAT}})=1$ and is $\NP$-enumerable. Also, by \Cref{cor:large-samples-upper-bound}, there is an algorithm that, for every $\eps>0$, learns $\mcC_{\text{SAT}}$ to error $\eps$ whose sample complexity and runtime are both ${O}(n/\eps)$. It remains to prove the second point of the corollary. By \Cref{claim:nondeterministic-reduction}, if there is a $2^{\delta n}$-time, $\delta n$-sample algorithm that learns $\mcC_{\text{SAT}}$ to error $\eps^\star$, then $\mathrm{SAT}\in \AMTIME(2^{O(\delta n)}, \delta n)$, and \Cref{prop:amtime upper bounds} implies that $\mathrm{SAT}\in \RTIME(2^{O(\delta n)})$. This contradicts ETH for sufficiently small $\delta>0$. 
\end{proof}

\begin{corollary}[Consequences under exponential-time lower bounds for $\NP$]
    Let $p(n)$ be a polynomial and suppose there is a language $L\in \NTIME(p(n))$ such that any randomized algorithm for $L$ requires time $2^{\Omega(p(n))}$. Then, there is a concept class $\mathcal{C}$ such that the following holds.
    \begin{enumerate}
        \item We have $\VCdim(\mcC)=1$ and $\mathcal{C}$ is $\NP$-enumerable. There is an algorithm that, for every $\eps>0$, learns $\mcC$ to error $\eps$ using $O(1/\eps)$ samples in time $2^{O(p(n))}$. 
        \item Let $\eps^\star>0$ be the constant in \Cref{def:codes}. There is a constant $\delta>0$ such that no algorithm running in time $2^{\delta p(n)}$ can learn $\mathcal{C}$ to error $\eps^\star$ using $\delta  p(n)$ samples.
        \item There is an algorithm that, for every $\eps>0$, learns $\mcC$ to error $\eps$ whose sample complexity and runtime are both $O(p(n)/\eps)$.
    \end{enumerate}
\end{corollary}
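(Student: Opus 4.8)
The plan is to invoke \Cref{thm:NP formal} with the given polynomial $p(n)$ and read off the three conclusions, exactly as in the proof of the randomized-ETH corollary above. First I would note that we may assume $p(n) \ge n$ without loss of generality — any language in $\NTIME(p(n))$ also lies in $\NTIME(\max(p(n),n))$, and replacing $p$ by $\max(p(n),n)$ leaves it a polynomial and only weakens the hypothesis $2^{\Omega(p(n))}$ lower bound by a constant in the exponent. Let $\mathcal{C}$ be the concept class produced by \Cref{thm:NP formal} for this $p(n)$; concretely it is $\mathcal{C}_{L^\star, V^\star}$ where $L^\star$ is the bounded-halting language.

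Parts 1 and 3 of the corollary are then immediate: \Cref{thm:NP formal}\,(i) gives $\VCdim(\mathcal{C}) = 1$ together with a learner using $O(1/\eps)$ samples in time $2^{O(p(n))}$, and also gives that $\mathcal{C}$ is $\NTIME(p(n)\log p(n), p(n))$-enumerable — hence $\NP$-enumerable, since $p(n)\log p(n)$ is a polynomial in $n$. Part 3 is exactly \Cref{thm:NP formal}\,(iii), the $O(p(n)/\eps)$-sample, $O(p(n)/\eps)$-time learner (which exists because every concept in $\mathcal{C}$ is $O(p(n))$-sparse). So the only substantive point is part 2.

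For part 2 I would argue by contradiction. Suppose that for the constant $\delta > 0$ (to be fixed) there is a randomized learner for $\mathcal{C}$ using $m(n) = \delta p(n)$ samples, running in time $T(n) = 2^{\delta p(n)}$, and achieving error $\eps^\star$. Then \Cref{thm:NP formal}\,(ii) yields
\[
\NTIME(p(n)) \;\subseteq\; \RTIME\!\left( 2^{O(\delta p(n))} \cdot T(n)\log T(n) \cdot \poly(p(n)) \right) \;=\; \RTIME\!\left( 2^{O(\delta p(n))} \right),
\]
where $\log T(n) = O(\delta p(n))$ and $\poly(p(n)) = 2^{o(p(n))}$ are absorbed into the exponent. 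Since $L \in \NTIME(p(n))$ by hypothesis, this puts $L$ in $\RTIME(2^{O(\delta p(n))})$. The main thing to check — and the closest thing to an obstacle — is that the $O(\cdot)$ hidden in that exponent is a fixed absolute constant $C$ independent of $\delta$ (it is traced back through \Cref{cor:randomized reduction}, i.e.\ the $\AM$-to-$\RTIME$ simulation of \Cref{prop:amtime upper bounds} and the overheads in \Cref{claim:nondeterministic-reduction}). Granting that, the assumed lower bound "every randomized algorithm for $L$ needs time $2^{c\,p(n)}$ for some constant $c>0$" is contradicted as soon as $\delta < c/C$. Fixing such a $\delta$ completes the proof.
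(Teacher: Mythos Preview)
Your proposal is correct and follows essentially the same approach as the paper: invoke \Cref{thm:NP formal} with the given $p(n)$, read off parts 1 and 3 directly, and for part 2 argue by contradiction that a $2^{\delta p(n)}$-time, $\delta p(n)$-sample learner would place $\NTIME(p(n))$ in $\RTIME(2^{O(\delta p(n))})$, violating the assumed lower bound on $L$ for small enough $\delta$. Your extra care in noting $p(n)\ge n$ and in tracing the absolute constant in the exponent through \Cref{claim:nondeterministic-reduction} and \Cref{prop:amtime upper bounds} is a welcome bit of rigor the paper leaves implicit.
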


\begin{proof}
    We apply \Cref{thm:NP formal} with $p(n)$ and $m(n)=\delta p(n)$ for a small constant $\delta>0$ specified later. To prove the second part of the corollary, suppose for contradiction that there is a $2^{\delta p(n)}$-time algorithm that can learn $\mcC$ to error $\eps^\star$ using $\delta p(n)$ samples. Then, \Cref{thm:NP formal} implies that $\NTIME(p(n))\sse \RTIME(2^{O(\delta p(n))})$ which, for sufficiently small $\delta$, contradicts our randomized time lower bound for $L$.
\end{proof}

\section{Extensions to other settings}

\subsection{Uniform Distribution}

In this section, we extend \Cref{thm:NP formal} to hold for \textit{uniform-distribution} learning. 

\begin{theorem}[Extension of \Cref{thm:NP formal} to the uniform distribution]
\label{thm:uniform}
    For every time constructible growth function $p(n)\ge n$, there is a concept class $\mcC$ over $\zo^n$ such that the following holds.     
    \begin{enumerate}
        \item We have $\VCdim(\mcC)\le n$ and $\mcC$ is $\NTIME(p(n)\log p(n), p(n))$-enumerable. There is an algorithm that, for every $\eps>0$, learns $\mcC$ to error $\eps$ using $O(n/\eps)$ samples in time $2^{O(p(n))}$. 
        \item Let $\eps^\star$ be the constant in \Cref{def:codes}. If there is an $m(n)$-sample, time-$t(n)$ algorithm that learns $\mcC$ to error $\eps^\star/100$ \emph{over the uniform distribution}, then $$\NTIME(p(n))\sse \RTIME\left(2^{O(m(n))}t(n)\log t(n)\poly(p(n))\right).$$
        \item There is an algorithm that for every $\eps>0$ learns $\mcC$ to error $\eps$ whose sample complexity and runtime are both $O(p(n)/\eps)$.
    \end{enumerate}
\end{theorem}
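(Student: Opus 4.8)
The plan is to rerun the entire proof of \Cref{thm:NP formal} with a single change to the hard concept class: we \emph{densify} the useful region so that almost every input is useful with respect to the uniform distribution. Let $L^\star$ and $V^\star$ be the universal language and verifier from the proof of \Cref{thm:NP formal}, so $L^\star\in\NTIME(p(n)\log p(n),p(n))$ is $\NTIME(p(n))$-hard, and write $\ell\coloneqq\lceil\log(c\,p(n))\rceil$ where $1/c$ is the rate of the code of \Cref{def:codes}. For $z\in\zo^n$ (assume $n\ge\ell$, padding $n$ up if not) let $w^\star_z$ be the lexicographically first string with $V^\star(z,w^\star_z)=1$, or the all-zeroes string if $z\notin L^\star$, and define $\Cert^{\mathrm{unif}}_z\colon\zo^n\to\zo$ by $\Cert^{\mathrm{unif}}_z(x)=\Enc(w^\star_z)_{\,x_1\cdots x_\ell}$; that is, $\Cert^{\mathrm{unif}}_z$ is the codeword $\Enc(w^\star_z)$ read off as a function of the first $\ell$ input bits, the remaining $n-\ell$ bits being irrelevant. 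Put $\mcC\coloneqq\{\Cert^{\mathrm{unif}}_z:z\in\zo^n\}$. The design principle is that a uniform random input now hits a uniform random coordinate of the codeword, so that $\Theta(p(n))$ uniform samples reveal all but an $\eps$-fraction of the codeword while $o(p(n))$ samples reveal essentially nothing about it.

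Parts \emph{(i)} and \emph{(iii)} are routine variants of \Cref{subsec:basic-properties-of-C}. Since $|\mcC|\le 2^n$, \Cref{fact:vc-upper} gives $\VCdim(\mcC)\le n$, and \Cref{fact:consistent} shows an empirical risk minimizer learns $\mcC$ to error $\eps$ with $O(\log|\mcC|/\eps)=O(n/\eps)$ samples; brute-forcing the lexicographically first certificate of each candidate $z$ costs $2^{O(p(n))}$ time, so this runs in time $2^{O(p(n))}$. Enumerability with an $\NTIME(p(n)\log p(n),p(n))$ oracle is exactly \Cref{claim:enumerable}: binary-search the $\textsc{Lex}(V^\star)$ oracle to obtain $w^\star_z$ in $p(n)$ oracle calls, then output the depth-$\ell$, size-$O(p(n))$ decision tree whose leaves are the bits of $\Enc(w^\star_z)$ (it never queries the padding bits). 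For part \emph{(iii)}, the only difference from \Cref{claim:large-samples-upper-bound} is that the concepts are no longer sparse; instead \emph{every concept depends on only $\ell=O(\log p(n))$ coordinates}, so at most $2^{c\,p(n)}$ functions can be consistent with a sample on those coordinates, and the identical union-bound argument shows that the learner outputting any table consistent with the seen $\ell$-bit patterns (extended arbitrarily, ignoring the padding bits) learns $\mcC$ to error $\eps$ over \emph{every} distribution with sample complexity and runtime $O(p(n)/\eps)$.

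The substance is part \emph{(ii)}, where I would rerun the reduction of \Cref{claim:nondeterministic-reduction} almost verbatim, the point being that the $m(n)$ indices it draws are now genuine i.i.d.\ samples from the uniform distribution on the domain, which is all the learner is promised to handle. On input $z\in\zo^n$: parse Arthur's random tape as $m=m(n)$ uniform strings $\bx^{(1)},\dots,\bx^{(m)}\in\zo^n$, a uniform suffix $u\in\zo^{n-\ell}$, and the learner's internal coins; ask Merlin for the labels $w_j=\Enc(w^\star_z)_{\,\bx^{(j)}_1\cdots\bx^{(j)}_\ell}$; run the time-$t(n)$ learner on $\{(\bx^{(j)},w_j)\}$ to obtain a hypothesis $\textsc{Hyp}$; query $\textsc{Hyp}$ on the $c\,p(n)$ points whose first $\ell$ bits range over all of $\zo^\ell$ and whose last $n-\ell$ bits equal $u$, reading off $y\in\zo^{c\,p(n)}$; output $V^\star(z,\Dec(y))$. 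If $z\in L^\star$, honest Merlin makes $\{(\bx^{(j)},w_j)\}$ an i.i.d.\ uniform sample of $\Cert^{\mathrm{unif}}_z$, so with probability $\ge 2/3$ the hypothesis has error $\le\eps^\star/100$ over the uniform distribution on $\zo^n$; averaging over the independent suffix $u$ and applying Markov's inequality, with probability $\ge 0.99$ the string $y$ disagrees with $\Enc(w^\star_z)$ on at most an $\eps^\star$-fraction of coordinates, so $\Dec(y)=w^\star_z$ and $V^\star$ accepts. (The slack $\eps^\star/100$ in place of $\eps^\star$ is exactly what absorbs this Markov step; a standard constant-probability amplification of the protocol then pushes the completeness above $2/3$.) Soundness is perfect and identical to \Cref{claim:nondeterministic-reduction}: if $z\notin L^\star$ then no string is a valid certificate, so $V^\star(z,\Dec(y))=0$ regardless of Merlin. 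This places $L^\star$ in $\AMTIME(p(n)t(n)\log t(n)+\poly(p(n)),\,m(n))$ with perfect soundness, and \Cref{prop:amtime upper bounds} together with the $\NTIME(p(n))$-hardness of $L^\star$ yields $\NTIME(p(n))\subseteq\RTIME(2^{O(m(n))}t(n)\log t(n)\poly(p(n)))$, which is part \emph{(ii)}.

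The main obstacle is the tension between parts \emph{(ii)} and \emph{(iii)}. A uniform-distribution lower bound fundamentally \emph{requires} the useful region to carry constant probability mass, whereas the fast learner of \Cref{claim:large-samples-upper-bound} in the distribution-free construction relied on each concept being $O(p(n))$-sparse, a property flatly incompatible with a dense useful region. Resolving this dictates the shape of the construction above: keep the codeword \emph{short} (length $O(p(n))$) so it lives on an $O(\log p(n))$-bit domain, replacing ``sparseness'' by ``depends on few coordinates'' (which still suffices for a fast learner over all distributions); pad that domain out to width $n$ purely so that the class is over $\zo^n$ with controlled VC dimension; and accept the consequences, namely the error parameter $\eps^\star/100$ rather than $\eps^\star$ (forced by the Markov argument over the padding bits) and $\VCdim(\mcC)\le n$ in place of $=1$.
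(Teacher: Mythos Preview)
Your proposal is correct and is essentially the paper's own proof: the densified class $\Cert^{\mathrm{unif}}_z$ is exactly the paper's $\UC_z$ (the paper takes the domain to be $\zo^{\log(cp(n))+n}$ rather than $\zo^n$ with $n\ge\ell$, a cosmetic difference), and your $\AM$ protocol with the Markov step over the random suffix $u$ matches the paper's \Cref{claim:uniform-dist-reduction} line for line. The only minor deviation is in part \emph{(iii)}, where the paper views the $\ell$-bit junta as trivially $O(p(n))$-sparse on $\zo^\ell$ and invokes \Cref{claim:large-samples-upper-bound}, whereas you run the union bound directly over the $2^{cp(n)}$ truth tables; both give the same $O(p(n)/\eps)$ bound.
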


For example, under the assumption that there is a language in $\NTIME(n^{100})$ that requires $2^{\Omega(n^{100})}$ time on randomized Turing machines, \Cref{thm:uniform} gives a concept class that,
\begin{enumerate}
    \item[(i)] Can be learned inefficiently with $O(n)$ samples, over any distribution.
    \item[(ii)] Cannot be learned efficiently with $c_1 \cdot n^{100}$ samples for an appropriate constant $c_1$ even over just the uniform distribution.
    \item[(iii)] Can be learned efficiently with $c_2 \cdot n^{100}$ samples for appropriate $c_2 > c_1$ over any distribution.
\end{enumerate}

The main difference between \Cref{thm:uniform} and \Cref{thm:NP formal} is that the collapse in part (ii) of \Cref{thm:uniform} holds even if the algorithm only learns $\mcC$ over the uniform distribution. This comes at the cost of increasing the VC dimension of the concept class. 

Recall that in the proof of \Cref{thm:NP formal}, the concepts in $\mcC$ are \textit{sparse}. Every concept $c\in\mcC$ is highly biased toward $0$. This makes it easy to learn $\mcC$ over the uniform distribution with $0$ samples. So, to achieve the collapse in \Cref{thm:uniform} for the uniform distribution, we will have to modify the concept class. Additionally, the concepts $c\in\mcC$ from \Cref{thm:NP formal} have some inputs which are ``useless'' and some which are ``useful'' (recall \Cref{def:concept-class}). The distribution we construct to prove part (ii) of \Cref{thm:NP formal} is uniform over useful examples. To generalize to the uniform distribution, we will want every example to provide the same amount of information as all the others. Given these considerations, we define the following variant of the concept class in \Cref{def:concept-class}.

\begin{definition}[Definition of the concept class $\mathcal{U}_{L,V}$]
    \label{def:uniform}
    Let $L\in \NTIME(t(n),p(n))$ for growth functions $t(n),p(n)\ge n$ and let $V$ be a verifier for $L$ running in time $t(n)$ whose certificate size is bounded by $p(n)$. Let $(\Enc,\Dec)$ be the error-correcting code from \Cref{def:codes} with rate $1/c$ for constant $c>1$. For $z\in \zo^n$, we define a concept $\UC_z:\zo^{\log (cp(n))+n}\to \zo$ as follows. If $z\not\in L$, then $\UC_z$ is the constant $0$ function. Otherwise, let $w^\star\in \{0,1\}^{p(n)}$ be the lexicographically first certificate such that $V(z,w^\star)=1$ and define $\UC_z$ as 
    $$
    \UC_z(i,x)=\Enc\left(w^\star\right)_i
    $$
    where $i\in \zo^{\log (cp(n))}$ is interpreted as an integer index in the range $\{1,\ldots,cp(n)\}$. Finally, the concept class $\mathcal{U}_{L,V}$ is defined to be the set of all $\UC_z$ for $z\in\zo^n$. 
\end{definition}

It is straightforward to verify the analogues of \Cref{claim:vc-dimension}, \Cref{cor:large-samples-upper-bound,claim:enumerable} for the class $\mathcal{U}_{L,V}$:
\begin{enumerate}
    \item There are at most $2^n$ distinct concepts $\UC_z$ and so $\VCdim(\mathcal{U}_{L,V})\le n$ by \Cref{fact:vc-upper}. Also, following the proof of \Cref{claim:enumerable}, $\mathcal{U}_{L,V}$ is $\NTIME(t(n),p(n))$-enumerable. Therefore, by \Cref{fact:consistent}, there is an algorithm that, for every $\eps>0$, learns $\mcC$ to error $\eps$ using $O(n/\eps)$ samples in time $2^{O(p(n))}\cdot O(t(n))$.
    \item Each concept $\UC_z$ is a $\log(cp(n))$-junta over the first $\log(cp(n))$ bits of input. In particular, the junta is trivially $O(p(n))$-sparse and therefore can be learned to error $\eps$ via \Cref{claim:large-samples-upper-bound} using time and sample complexity $O(p(n)/\eps)$. This gives an $\eps$-error hypothesis over the marginal distribution of $\bi$ where $(\bi,\bx)\sim\D$. By evaluating the junta hypothesis on the first $\log(cp(n))$ bits of an input, we obtain an $\eps$-error hypothesis for $\UC_z$ over $\D$.
\end{enumerate}

It remains to prove the extension of \Cref{cor:randomized reduction} to learning algorithms for $\mathcal{U}_{L,V}$ over the uniform distribution.
\begin{claim}[Uniform distribution extension of \Cref{cor:randomized reduction}]
\label{claim:uniform-dist-reduction}
    Let $L\in \NTIME(t(n),p(n))$, let $V$ be a time-$O(t(n))$ verifier using size-$O(p(n))$ certificates, and let $\eps^\star$ be the constant from \Cref{def:codes}. If there is a algorithm for learning $\mathcal{U}_{L,V}$ over the uniform distribution which uses $m(n)$ samples, runs in time $T(n)$, and outputs a hypothesis with error $\eps^\star/100$, then we have that $L\in\RTIME(2^{O(m(n))}T(n)\log T(n)\poly(t(n)))$.
\end{claim}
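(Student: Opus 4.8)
The plan is to mirror \Cref{claim:nondeterministic-reduction} and \Cref{cor:randomized reduction}: from the hypothesized uniform-distribution learner $A$ for $\mathcal{U}_{L,V}$ I would build an $\AM$ protocol for $L$ with perfect soundness that uses only $O(m(n))$ bits of nondeterminism, and then invoke \Cref{prop:amtime upper bounds} to turn it into a one-sided randomized algorithm. On input $z\in\zo^n$, the protocol draws $m=m(n)$ uniform examples $(\bi^{(j)},\bx^{(j)})\in\zo^{\log(cp(n))+n}$, asks Merlin for their $m$ labels (this is the $m(n)$-bit proof), runs $A$ on the resulting labeled sample to get a hypothesis $\textsc{Hyp}$, reconstructs from $\textsc{Hyp}$ a string $y\in\zo^{cp(n)}$, decodes $\tilde w=\Dec(y)$, and outputs $V(z,\tilde w)$. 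Perfect soundness is immediate exactly as in \Cref{claim:nondeterministic-reduction}: if $z\notin L$ there is no certificate at all, so $V(z,\tilde w)=0$ for every $\tilde w$ and the protocol rejects regardless of Merlin's proof and of its own randomness. The runtime bookkeeping is also identical: simulating $\textsc{Hyp}$ on the $cp(n)$ points of interest costs $O(p(n)T(n)\log T(n))$, decoding costs $\poly(p(n))$, and running $V$ costs $t(n)$.

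The one genuinely new ingredient, and the step I expect to be the main obstacle, is the reconstruction of $y$. In \Cref{claim:nondeterministic-reduction} the concept $\Cert_z$ is supported on just the $cp(n)$ ``useful'' examples $(z,i)$, whose labels are the deterministic values $\Enc(w^\star)_i$, so Arthur simply queries $\textsc{Hyp}$ on all $cp(n)$ of them. For $\mathcal{U}_{L,V}$ the concept $\UC_z(i,x)=\Enc(w^\star)_i$ still depends only on the index $i$, but its domain now carries an extra factor $\zo^n$ over which Arthur cannot enumerate and over which $\textsc{Hyp}$ may behave arbitrarily. My fix is to reconstruct randomly: Arthur draws a single fresh uniform $\bx^\star\sim\zo^n$ and sets $y_i:=\textsc{Hyp}(i,\bx^\star)$ for all $i$. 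When Merlin is honest and $z\in L$, the sample handed to $A$ is a genuine i.i.d.\ uniform sample labeled by $\UC_z\in\mathcal{U}_{L,V}$, so with probability at least $2/3$ (over the random sample and $A$'s internal coins, both part of Arthur's randomness) it holds that $\Prx_{(\bi,\bx)}[\textsc{Hyp}(\bi,\bx)\ne\Enc(w^\star)_{\bi}]\le\eps^\star/100$. Conditioned on this, linearity of expectation gives $\E_{\bx^\star}\big[\,|\{i:y_i\ne\Enc(w^\star)_i\}|\,\big]\le\eps^\star cp(n)/100$ (here I use that the first $\log(cp(n))$ bits of a uniform input are uniform over the index set, which one may assume by padding the code to a power-of-two block length), and Markov's inequality then says that with probability at least $99/100$ over $\bx^\star$ the string $y$ differs from $\Enc(w^\star)$ in at most $\eps^\star cp(n)$ coordinates, so $\Dec(y)=w^\star$ by \Cref{def:codes} and the protocol outputs $V(z,w^\star)=1$. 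This is exactly why the claim needs error $\eps^\star/100$ rather than $\eps^\star$: the constant slack is what leaves Markov enough room once we pass from ``error over the distribution'' to ``number of corrupted codeword coordinates''.

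Putting the pieces together, a single run accepts a YES instance with probability at least $(2/3)(99/100)>1/2+\Omega(1)$. To push completeness up to $2/3$ I would run the whole protocol a constant number of times with independent randomness and accept iff at least one run accepts; a NO instance is still rejected in every run for every proof, so perfect soundness survives, while a YES instance is accepted with probability at least $1-(1/3+1/100)^{O(1)}>2/3$, at the cost of only constant factors in runtime and proof length. This yields $L\in\AMTIME\!\big(p(n)T(n)\log T(n)+\poly(p(n))+t(n),\,m(n)\big)$ with a perfectly sound protocol, and since $p(n)\le t(n)$ the runtime is $O(T(n)\log T(n)\poly(t(n)))$. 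Applying \Cref{prop:amtime upper bounds} converts this to $L\in\RTIME\!\big(2^{O(m(n))}T(n)\log T(n)\poly(t(n))\big)$, which is the claim; the downstream analogues of \Cref{lem:one curve lower bounds the other} and the remaining parts of \Cref{thm:uniform} then follow by the same substitutions already sketched in the excerpt.
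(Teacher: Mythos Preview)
Your proposal is correct and follows essentially the same argument as the paper: draw uniform samples, have Merlin supply the labels, run the learner, then fix a fresh uniform $\bx^\star$ and read off $y_i=\textsc{Hyp}(i,\bx^\star)$, using Markov's inequality on the $\eps^\star/100$ error bound to ensure $y$ is within decoding distance of $\Enc(w^\star)$, and finish with \Cref{prop:amtime upper bounds}. If anything you are slightly more careful than the paper about the completeness constants (explicitly noting the $(2/3)(99/100)$ bound and the constant repetition to reach $2/3$) and the power-of-two padding issue, both of which the paper glosses over.
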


\begin{proof}
    Let $A$ be the learner for $\mathcal{C}_{L,V}$. Let $\eps^\star>0$ be the constant from \Cref{def:codes} which represents the fraction of errors our decoding algorithm can decode. Following the proof of \Cref{claim:nondeterministic-reduction}, we first design a randomized $\AM$~protocol $M$ for $L$. On input $z\in \zo^n$, $M$ does the following.
    \begin{enumerate}
        \item Draw $m=m(n)$ uniform random examples $\by^{(1)},\ldots,\by^{(m)}$ where $\by^{(j)}=(\bi^{(j)},\bx^{(j)})$.
        \item Ask Merlin for the $m$ labels, $w_1,\ldots,w_m$, of the examples $\by^{(1)},\ldots,\by^{(m)}$.
        \item Run the learner $A$ on the sample $\bS=\{(\bx^{(1)},w_1),\ldots,(\bx^{(m)},w_m)\}$ with error parameter $\eps/100$ and obtain a hypothesis $\textsc{Hyp}:\zo^{n+\log (cp(n))}\to \zo$.
        \item Pick a uniform random $\bx\sim\zo^n$ and query \textsc{Hyp} on all points $(i,\bx)$ for $i\in\zo^{\log (cp(n))}$ to obtain a string $y\in \zo^{cp(n)}$ which is a noisy version of $\Enc(w^\star)$.
        \item Apply the decoding algorithm to $y$ to obtain $\Tilde{w}=\Dec(y)$.
        \item Run the verifier $V$ on $z$ using certificate $\Tilde{w}$ and output the result, i.e. output $V(z,\Tilde{w})$.
    \end{enumerate}
    \paragraph{Runtime.}
    The learner $A$ runs in time $T(n)$ and therefore the hypothesis that it outputs can simulated on a query in time $O(T(n)\log T(n))$. So obtaining $y$ takes time $O(p(n)T(n)\log T(n))$. The decoding algorithm runs in time $\poly(p(n))$. Finally, running the verifier $V$ takes time $t(n)$. So the overall runtime of $M$ is $O(p(n)T(n)\log T(n)+\poly(p(n))+t(n)) = O(T(n)\log T(n)\poly(t(n)))$ since $p(n)\le t(n)$. 
    \paragraph{Correctness.} We prove the completeness and soundness of this protocol separately.

    \subparagraph{Completeness.} If $z\in L$, then for every choice of examples $y^{(1)},\ldots,y^{(m)}$, Merlin can respond with the true labels given by $\UC_z$. The sample $\bS$ is then a size-$m$ sample from the uniform distribution and with probability $0.99$, the learner outputs a hypothesis $\textsc{Hyp}$ satisfying $\error(\textsc{Hyp}, \UC_z)\le \eps^\star/100$. We can rewrite this error as
    $$
    \Ex_{\bx\sim\zo^n}\left[\Prx_{\bi\sim\zo^{\log (cp(n))}}[\textsc{Hyp}(\bi,\bx)\neq \UC_z(\bi,\bx)] \right]\le \frac{\eps}{100}
    $$
    which, by Markov's inequality, implies that with probability at least $0.99$ over a random $\bx\sim\zo^n$, we have
    $$
    \Prx_{\bi\sim\zo^{\log (cp(n))}}[\textsc{Hyp}(\bi,\bx)\neq \UC_z(\bi,\bx)]\le \eps^\star.
    $$
    Therefore, with high probability, the string $y$ has at most $\eps^\star\cdot cp(n)$ errors relative to $\Enc(w^\star)$. The decoder then returns $w^\star$. Finally, since $w^\star$ is a certificate for $z$, the output of $M$ is $1=V(z,w^\star)$. 

    \subparagraph{Soundness.}{
            If $z\not\in L$, then for every certificate $\Tilde{w}$, $V(z,\Tilde{w})=0$. Therefore, $M$ always outputs the correct answer regardless of Merlin's responses.
        }
    Since $M$ is an $\AM$ protocol for $L$ with perfect soundness, uses $m(n)$ bits of nondeterminism, and runs in time $O(T(n)\log T(n)\poly(t(n)))$, then by \Cref{prop:amtime upper bounds}, we get $L\in\RTIME(2^{O(m(n))}T(n)\log T(n)\poly(t(n)))$ as desired.
\end{proof}

We can now combine our observations to give a proof of \Cref{thm:uniform}.

\begin{proof}[Proof of \Cref{thm:uniform}]
    This proof is identical to that of \Cref{thm:NP formal} except that we replace the concept class $\mcC^\star$ with $\mathcal{U}^\star\coloneqq \mathcal{U}_{L^\star,V^\star}$ where $L^\star$ and $V^\star$ are as in \Cref{thm:NP formal}. As we have already observed, $\mathcal{U}^\star$ has $\VCdim(\mathcal{U}^\star)\le n$ and is $\NTIME(p(n)\log p(n),p(n))$-enumerable and there is an algorithm that, for every $\eps>0$, learns $\mcC$ to error $\eps$ using $O(n/\eps)$ samples in time $2^{O(p(n))}$. Also, it can be learned to error $\eps$ over any distribution with time and sample complexity $O(p(n)/\eps)$. Finally, point (ii) of the theorem follows from \Cref{claim:uniform-dist-reduction} and the fact that $L^\star$ is $\NTIME(p(n))$-hard.
\end{proof}

\subsection{Online Learning}

In this section, we extend \Cref{thm:NP formal} to the setting of online learning. An efficient online learner is one that must label examples efficiently; in other words, we bound the algorithm's runtime for each round of online learning. This quantity is analogous to the overall time complexity of a PAC learning algorithm. 
Similarly, the mistake bound in online learning is analogous to sample complexity in the PAC model. Thus, the analogue of time-sample tradeoffs in the PAC model is time-mistake tradeoffs in the online learning model.

\begin{definition} [Littlestone dimension]\label{def:littlestone}
The \emph{Littlestone dimension} of a concept class $\mcC$ is the largest number $M$ such that every online learning algorithm for $\mcC$ must make at least $M$ mistakes. We denote this quantity $\Ldim(\mcC)$.
\end{definition}

The standard definition of Littlestone dimension is combinatorial in nature. However, we only use the optimal mistake bound definition and it is a well known fact that the two are equivalent~\cite{Lit88}.

\begin{theorem}[\Cref{thm:NP formal} for Online Learning]
\label{thm:online}
     For every time constructible growth function $p(n)\ge n$, there is a concept class $\mcC$ such that the following holds. 
    \begin{enumerate}
        \item We have $\Ldim(\mcC)=1$ and $\mcC$ is $\NTIME(p(n)\log p(n),p(n))$-enumerable. Consequently, there is an algorithm that online-learns $\mcC$ with at most 1 mistake and takes time $2^{O(p(n))}$ in each round.\label{thm:online:ldim}  
        \item If there is an online learning algorithm that makes at most $m(n)$ mistakes and takes time $t(n)$ in each round, then $\NTIME(p(n))\sse \RTIME\left(2^{O(m(n))}T(n)\log(T(n))\poly(p(n))\right)$ where $T(n)=t(n)m(n)\log(m(n))$. \label{thm:online:few mistake}  
        \item There is an algorithm  that online-learns $\mcC$ with at most $O(p(n))$ mistakes and takes time at most $O(n\log(p(n))+ \log^2(p(n))$ in each round.\label{thm:online:many mistake}  
    \end{enumerate}

\end{theorem}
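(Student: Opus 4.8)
The plan is to reuse the same concept class $\mcC^\star=\mcC_{L^\star,V^\star}$ built in the proof of \Cref{thm:NP formal} and to establish the three online analogues of its basic properties. For \Cref{thm:online:ldim}, the enumerability is exactly \Cref{claim:enumerable}, so it remains to pin down the Littlestone dimension. Since $\VCdim(\mcC^\star)=1$ (\Cref{claim:vc-dimension}) and $\VCdim\le\Ldim$ always, we have $\Ldim(\mcC^\star)\ge 1$; equivalently, the singleton $\{(z,i)\}$ shattered in \Cref{claim:vc-dimension} is a depth-$1$ Littlestone tree. For the matching upper bound, consider the learner that predicts $0$ until its first mistake; that mistake occurs on a \emph{useful} example $x=(z,i)$ with $\Cert_z(x)=1$, revealing $z$. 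The learner then brute-forces over all certificates of length $\le p(n)$ to recover $w^\star$ (time $O(t(n))\cdot 2^{O(p(n))}$, as in \Cref{claim:vc-dimension}), builds the decision tree $T_z$ of \Cref{prop:cert_dt}, and predicts with it thereafter, never erring again. Hence $\Ldim(\mcC^\star)=1$ and the $1$-mistake learner takes $2^{O(p(n))}$ time per round.

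For \Cref{thm:online:many mistake}, the learner maintains the set $U$ of useful examples it has so far seen labeled $1$, predicting $1$ on an input iff that input lies in $U$. It is correct on every useless example, on every useful example labeled $0$ (these are never inserted into $U$), and on every repeat occurrence of a useful $1$-example; thus a mistake occurs only on the first occurrence of such an example, and there are at most $|\{i:\Enc(w^\star)_i=1\}|\le cp(n)=O(p(n))$ of them. Each round performs one lookup/insertion among $\le cp(n)$ strings of length $n+\log(cp(n))$, which costs $O(n\log p(n)+\log^2 p(n))$ time, as required.

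The heart of the theorem is \Cref{thm:online:few mistake}, which we obtain by adapting \Cref{claim:nondeterministic-reduction}. Given an online learner $A$ for $\mcC^\star$ with mistake bound $m(n)$ and per-round time $t(n)$, we first make it \emph{conservative} by wrapping it so it is updated only on rounds it gets wrong; the standard argument (run the inner learner on the subsequence of mistaken examples, which is realizable by $\mcC^\star$ and on which the inner learner errs at every step) shows the mistake bound is preserved and the per-round time stays $O(t(n))$. The $\AM$ protocol for $L^\star$ then, on input $z$: draws $N=\Theta(m(n)/\eps^\star)=O(m(n))$ uniform random useful examples $(z,\bi_1),\dots,(z,\bi_N)$; asks Merlin for their $N$ true labels, so Merlin sends only $O(m(n))$ bits since the indices are Arthur's own randomness; runs the conservative $A$ on this labeled sequence; picks a uniformly random stopping round $R\in[N]$ and lets $\textsc{Hyp}$ be $A$'s state at that point; queries $\textsc{Hyp}$ on all $(z,i)$ to form $y\in\zo^{cp(n)}$; decodes $\tilde w=\Dec(y)$; and outputs $V^\star(z,\tilde w)$. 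Perfect soundness is immediate as in \Cref{claim:nondeterministic-reduction}: if $z\notin L^\star$ then $V^\star(z,\tilde w)=0$ for every $\tilde w$. For completeness, when $z\in L^\star$ Merlin supplies the true labels of $\Cert_z$, so the run is on $N$ i.i.d.\ samples from the distribution $\D$ uniform over $\{(z,i)\}$, on which $A$ makes at most $m(n)$ mistakes; the key claim is that the random-stopping hypothesis is accurate. Indeed, because $A$ is conservative, if $\error_\D(\textsc{Hyp},\Cert_z)>\eps^\star$ then either $A$'s current hypothesis changes within the next $\Delta=O(1/\eps^\star)$ rounds — an event that can be charged to one of the $\le m(n)$ mistakes, hence has probability $\le m(n)\Delta/N$ over the choice of $R$ — or it stays fixed yet survives $\Delta$ fresh i.i.d.\ rounds without a mistake, which has probability $\le(1-\eps^\star)^\Delta$ since the hypothesis is independent of those rounds. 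Taking $\Delta$ a large enough constant and $N=\Theta(m(n)/\eps^\star)$ makes the total at most $1/3$, so with probability $\ge 2/3$ the hypothesis has error $\le\eps^\star$, whence $y$ differs from $\Enc(w^\star)$ in at most $\eps^\star cp(n)$ coordinates and $\Dec$ recovers $w^\star$. The protocol uses $O(m(n))$ bits of nondeterminism, has perfect soundness, and runs in time $\poly(p(n))\cdot T(n)\log T(n)$ with $T(n)=t(n)m(n)\log m(n)$ (the per-query cost is re-running $A$ for $\le N$ rounds, with the usual multitape-simulation overhead, and there are $cp(n)$ queries plus the $\poly(p(n))$-time decoder and the $O(p(n)\log p(n))$-time verifier). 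By \Cref{prop:amtime upper bounds}, $L^\star\in\RTIME\!\left(2^{O(m(n))}T(n)\log T(n)\poly(p(n))\right)$, and since $L^\star$ is $\NTIME(p(n))$-hard this gives the stated collapse.

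The main obstacle is precisely the step just described. A black-box online-to-batch conversion (the classical ``longest survivor'' argument) would need $\Theta(m(n)\log m(n))$ samples to extract an $\eps^\star$-accurate hypothesis, forcing Merlin to send $\Theta(m(n)\log m(n))$ bits and degrading the collapse to $\RTIME(2^{O(m(n)\log m(n))}\cdots)$. Getting down to $O(m(n))$ samples — and hence to the $2^{O(m(n))}$ overhead claimed — requires exploiting the mistake bound directly (after its $\le m(n)$ mistakes a conservative learner is \emph{exactly} correct) together with the random-stopping-time trick above, and checking that conservativization does not spoil the mistake bound. The remaining pieces — that $\D$ is exactly the distribution $A$ is being fed, the simulation bookkeeping, and boosting the completeness probability — are routine.
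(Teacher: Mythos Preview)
Your arguments for \Cref{thm:online:ldim} and \Cref{thm:online:many mistake} match the paper's exactly (\Cref{claim:littlestone-dimension} and \Cref{claim:many-mistakes-upper-bound}). The divergence is entirely in \Cref{thm:online:few mistake}.

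The paper's proof of \Cref{thm:online:few mistake} is two lines: invoke the classical online-to-batch conversion (\Cref{lem:online lower bound}, due to Littlestone) to turn the $m(n)$-mistake, per-round-time-$t(n)$ online learner into a PAC learner that uses $O(m(n))$ samples and time $T(n)=O(t(n)m(n)\log m(n))$, and then apply \Cref{thm:NP formal:few samples} verbatim. Your approach instead reopens \Cref{claim:nondeterministic-reduction} and bakes an online-to-batch conversion (conservativize, then random-stopping-time) directly into the $\AM$ protocol. This is correct, but it is essentially reproving \Cref{lem:online lower bound} inside the reduction.

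The obstacle you identify is not real. You assert that a black-box conversion would cost $\Theta(m(n)\log m(n))$ samples and hence $2^{O(m(n)\log m(n))}$ in the collapse, but Littlestone's conversion already achieves $M(n)=O\!\big(\tfrac{1}{\eps}(m(n)+\log\tfrac{1}{\delta})\big)$ samples, which is $O(m(n))$ for constant $\eps,\delta$; the paper states this as \Cref{lem:online lower bound} and uses it directly. So the ``main obstacle'' paragraph can be dropped and the whole item replaced by a one-line citation.

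A minor bookkeeping issue in your route: you claim conservativization keeps the per-round time $O(t(n))$, but the snapshot/restore of the inner learner's state after up to $m(n)$ updates costs $O(m(n)t(n))$ per round, which would push your $T(n)$ to $t(n)m(n)^2$ rather than $t(n)m(n)\log m(n)$. If you want to keep the direct argument, drop conservativization and use the cleaner averaging bound $\E[\#\text{mistakes}]=\sum_{r}\E[\error_\D(h_{r-1})]\le m(n)$, so a uniformly random intermediate hypothesis has expected error $\le m(n)/N$ and Markov gives the same $N=O(m(n))$ with no overhead.
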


Note that when setting parameters, we generally have $m(n)=\log(t(n))$. Therefore,~\Cref{thm:online:few mistake} and~\Cref{thm:NP formal:few samples} achieve the same parameters save extra $\log$-factors. Since we set $t(n)$ to be polynomial or even exponential in $n$ to achieve~\Cref{thm:NP formal}'s relevant corollaries, this log factor can safely be ignored.  

\paragraph{Ingredients for the proof of~\Cref{thm:online}.} To prove \Cref{thm:online}, we will use the same concept class as in the proof of~\Cref{thm:NP formal} (see~\Cref{def:concept-class}). We must prove the two upper bounds ~\Cref{thm:online:ldim} and~\Cref{thm:online:many mistake} essentially from scratch. However, the analogous bounds in the PAC setting,~\Cref{thm:NP formal:VCdim} and~\Cref{thm:NP formal:many samples}, were relatively easy to show, and the proofs are even simpler in the online model. The two claims~\Cref{claim:littlestone-dimension} and~\Cref{claim:many-mistakes-upper-bound} will establish these two upper bounds. 

On the other hand, we can use the more complicated lower bound~\Cref{thm:NP formal:few samples} as a black box to establish~\Cref{thm:online:few mistake}. This follows from a classical result~\Cref{lem:online lower bound}~\cite{Lit89}
that online learning is at least as hard as PAC learning. 


\begin{claim}[\Cref{claim:vc-dimension} for Online Learning]
\label{claim:littlestone-dimension}
    Let $L\in \NTIME(t(n),p(n))$ and let $V$ be a time-$O(t(n))$ verifier for $L$  using size-$O(p(n))$ certificates. Then, $\Ldim(\mathcal{C}_{L,V}) = 1$. 
\end{claim}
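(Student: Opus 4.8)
The plan is to prove the two inequalities $\Ldim(\mathcal{C}_{L,V})\le 1$ and $\Ldim(\mathcal{C}_{L,V})\ge 1$ separately, working with the optimal-mistake-bound characterization of Littlestone dimension (the one the paper adopts). The structural fact I will lean on is exactly the one already isolated in the proof of \Cref{claim:vc-dimension}: every single input $x$ is labeled $1$ by \emph{at most one} concept of $\mathcal{C}_{L,V}$. Indeed, write $x=(z,i)$ with $z$ the first $n$ bits. If some $\Cert_{z'}$ satisfies $\Cert_{z'}(x)=1$, then the first $n$ bits of $x$ must match $z'$, so $z'=z$; this in turn forces $z\in L$ and $\Enc(w^\star)_i=1$, where $w^\star$ is the lexicographically first certificate for $z$. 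Hence $\Cert_z$ is the unique concept of $\mathcal{C}_{L,V}$ that is nonzero on $x$.

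For the upper bound $\Ldim(\mathcal{C}_{L,V})\le 1$, I would exhibit an online learner $A$ making at most one mistake on any realizable sequence. $A$ predicts $0$ on every point until it first errs. Since $A$ always predicts $0$, its first mistake must occur on some example $x=(z,i)$ whose true label is $1$; by the structural fact, the target concept is then pinned down to be exactly $\Cert_z$. At that moment $A$ extracts $z$ from $x$, recovers $w^\star$ by brute-force search over the $2^{O(p(n))}$ candidate certificates (each checked by running $V$; runtime is irrelevant for $\Ldim$), computes $\Enc(w^\star)$, and from then on simulates $\Cert_z$ exactly, making no further mistakes. Thus $A$'s worst-case mistake count is at most $1$.

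For the lower bound $\Ldim(\mathcal{C}_{L,V})\ge 1$, I would argue, exactly as in the opening of \Cref{claim:vc-dimension}, that there is a $z\in L$ and an index $i$ with $\Enc(w^\star)_i=1$, so the point $x^\star=(z,i)$ is labeled $1$ by $\Cert_z$ and labeled $0$ by $\Cert_{z'}$ for any $z'\ne z$, both of which lie in $\mathcal{C}_{L,V}$. An adversary presents $x^\star$ in the first round; whatever the learner predicts, the adversary declares the true label to be the opposite value and names a consistent target concept, forcing a mistake. Hence every online learner can be forced into at least one mistake, so $\Ldim(\mathcal{C}_{L,V})\ge 1$, and combined with the previous paragraph $\Ldim(\mathcal{C}_{L,V})=1$.

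I do not expect a real obstacle: both directions are just the online-learning translation of the VC-dimension argument in \Cref{claim:vc-dimension}. The only point that needs (routine) care is the observation that after a \emph{single} mistake the target is determined completely rather than merely constrained, which is what lets the naive ``predict $0$, then memorize'' strategy achieve mistake bound $1$ rather than the much weaker bound a generic online-learning algorithm would give.
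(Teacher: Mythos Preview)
Your proposal is correct and essentially identical to the paper's proof: both give the ``predict $0$ until a mistake, then brute-force $w^\star$ and simulate $\Cert_z$'' learner for the upper bound, and for the lower bound the paper simply cites $\Ldim\ge\VCdim$ together with \Cref{claim:vc-dimension}, which is the abstract version of your explicit adversary argument on a single shattered point.
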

\begin{proof}
Since $\Ldim(\mathcal{C})\geq \VCdim(\mathcal{C})$ for all concept classes $\mathcal{C}$, it follows from~\Cref{claim:vc-dimension} that $\Ldim(\mathcal{C}_{L,V}) \geq \VCdim(\mathcal{C}_{L,V})\geq 1$. 
    To show that $\Ldim(\mathcal{C}_{L,V}) \leq 1$, we provide an algorithm $A$ that makes no more than 1 mistake. $A$ labels all examples 0 until the adversary labels an example $(z,i)$ with 1. 
    As soon as this happens, $A$ makes a single mistake. However, the only concept in $\mathcal{C}_{L,V}$ consistent with this labeling is  $\Cert_z(x)$.  $A$ runs $V(z,w)$ with all possible $w\in\zo^{O(p(n))}$ to determine the lexicographically first $w^*$ such $V(z,w^*)=1$. With knowledge of $w^*$, it is trivial to determine the labeling of all future examples. Therefore, $A$ makes no further mistakes. 
\end{proof}

\begin{claim}[\Cref{claim:large-samples-upper-bound} for Online Learning]
\label{claim:many-mistakes-upper-bound}
  Let $\mcC$ be a concept class over $\zo^n$ such that every function $f\in\mcC$ is $p(n)$-sparse. Then there is an algorithm that online-learns $\mcC$ with at most $p(n)$ mistakes and takes time at most $O(n\log(p(n)))$ in each round. 
\end{claim}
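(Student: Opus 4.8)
The plan is to mimic the PAC upper bound from Claim~\ref{claim:large-samples-upper-bound} in the online model, exploiting sparsity directly. The learner $A$ maintains a set $Z \subseteq \zo^n$ of "known $1$-inputs", initialized to $\varnothing$. On each round, given a query point $x$, the learner predicts $1$ if $x \in Z$ and predicts $0$ otherwise; whenever the true label is revealed to be $1$ and $x \notin Z$, it adds $x$ to $Z$. The key observations are: (i) the learner never makes a mistake by predicting $0$ on a true $0$-input, since it only predicts $1$ on points it has previously seen labeled $1$, which the (fixed) target concept $f$ must still label $1$; and (ii) every mistake is of the form "predicted $0$, true label $1$", and each such mistake permanently adds a new element to $Z$. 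Since $f$ is $p(n)$-sparse, $|f^{-1}(1)| \le p(n)$, so at most $p(n)$ mistakes can occur.

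For the per-round runtime, I would store $Z$ in a balanced binary search tree (or a trie) keyed on the $n$-bit strings, so that membership queries and insertions each cost $O(n \log|Z|) = O(n\log p(n))$ time. That gives the claimed $O(n\log p(n))$ bound per round. One subtlety to address: the learner must be able to output a prediction even before it knows which concept in $\mcC$ is the target — this is automatic here since the prediction rule ("is $x$ a previously-seen $1$-input?") does not reference $\mcC$ at all, only the observed history. Another minor point: the definition of online learning requires the learner to succeed against an adversary who fixes a concept $f \in \mcC$ in advance (or at least answers consistently with some such $f$); the argument in (i) uses only that the revealed labels are consistent with a single fixed $p(n)$-sparse function, so it goes through.

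The main obstacle — and it is a mild one — is simply bookkeeping the data structure and confirming the per-round cost; there is no conceptual difficulty, as the sparsity of $\mcC$ does all the work, exactly as in the PAC analogue. Once Claim~\ref{claim:many-mistakes-upper-bound} is in hand, Theorem~\ref{thm:online:many mistake} follows by applying it to $\mcC_{L,V}$, whose concepts are $O(p(n))$-sparse, with the small extra $\log^2(p(n))$ term in the runtime accounting for the $\log(cp(n))$ extra input coordinates of $\Cert_z$ (so the strings being stored have length $n + \log(cp(n))$, giving insertion/lookup cost $O((n+\log p(n))\log p(n)) = O(n\log p(n) + \log^2 p(n))$).
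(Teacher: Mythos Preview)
Your proposal is correct and is essentially identical to the paper's own proof: both maintain the set of previously-seen $1$-inputs, predict $1$ exactly on that set, and bound mistakes by $|f^{-1}(1)|\le p(n)$ while using a sorted structure for $O(n\log p(n))$-time lookup/insert. (One wording slip: in your observation~(i) you presumably mean the learner never errs by predicting $1$ on a true $0$-input---your justification is for that statement, and your observation~(ii) already relies on it.)
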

\begin{proof}
We provide an algorithm $A$ that achieves these efficiency guarantees. 
    $A$ works by keeping a sorted list $\ell$ of all examples that have been labeled 1. Upon receiving a new example $x$, $A$ checks if $x\in\ell$. If it is, $A$ returns the label 1, if not it returns the label 0. If $A$ errs, learning that $x$ is labeled 1, it will update $\ell$ by inserting $x$ in the relevant position.  

We now analyze $A$'s runtime and mistake bound.
Each round, $A$ must search and potentially insert into a sorted list. Using a standard method like binary search, this takes time $O(n\log(p(n)))$. 

    Note that every time $A$ makes a mistake, the length of $\ell$ increases by 1. Therefore, $A$ makes $\leq |\ell|$ mistakes. Since $\ell$ can only contain examples labeled with 1, by assumption, $|\ell|\leq p(n)$. $A$ therefore achieves the desired mistake bound. 
\end{proof}

By definition, each function $\Cert_z\in \mathcal{C}_{L,V}$ is $cp(n)$-sparse. Since $\mathcal{C}_{L,V}$ has examples over $\zo^{n+\log (c p(n))}$, \Cref{claim:many-mistakes-upper-bound} yields the following corollary.

\begin{corollary}[\Cref{cor:large-samples-upper-bound} for Online Learning]\label{cor:many-mistakes-upper-bound}
Let $L\in \NTIME(t(n),p(n))$ and let $V$ be a time-$O(t(n))$ verifier for $L$ that uses size-$O(p(n))$ certificates. Then, there is an algorithm that online-learns $\mathcal{C}_{L,V}$  with at most $O(p(n))$ mistakes and takes time at most $O(n\log(p(n))+ \log^2(p(n))$ in each round. 
\end{corollary}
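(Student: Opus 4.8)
The plan is to obtain \Cref{cor:many-mistakes-upper-bound} as an immediate specialization of \Cref{claim:many-mistakes-upper-bound} to the concept class $\mathcal{C}_{L,V}$. First I would record the sparsity of its concepts: by \Cref{def:concept-class}, every $\Cert_z \in \mathcal{C}_{L,V}$ outputs $0$ on all useless inputs, and on the useful inputs $x = (z,i)$ it takes the value $\Enc(w^\star)_i$ for $i$ ranging over $\{1,\dots,cp(n)\}$; hence $|\Cert_z^{-1}(1)| \le cp(n)$, so $\mathcal{C}_{L,V}$ consists of $cp(n)$-sparse functions.

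Next I would apply \Cref{claim:many-mistakes-upper-bound} to $\mathcal{C}_{L,V}$, whose examples lie in $\zo^{N}$ for $N \coloneqq n + \log(cp(n))$, taking the sparsity parameter to be $cp(n)$. Since $c$ is the constant from \Cref{def:codes} and $p(n) \ge n$, we have $cp(n) = O(p(n)) = \poly(N)$ and $cp(n) \ge N$, so the hypotheses of the claim are met. It then furnishes an online learner --- the one that maintains a sorted list of the observed $1$-labeled examples, answers $1$ on a new example iff it is in the list, and inserts it upon a mistake --- that makes at most $cp(n) = O(p(n))$ mistakes and runs in time $O(N \log(cp(n)))$ per round. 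Finally I would simplify: $\log(cp(n)) = O(\log p(n))$ and $N = O(n + \log p(n))$ give a per-round runtime of $O\big((n+\log p(n))\log p(n)\big) = O(n\log p(n) + \log^2 p(n))$, which is the claimed bound.

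I do not expect a real obstacle here; the one point requiring care is purely a matter of bookkeeping the input length, namely that \Cref{claim:many-mistakes-upper-bound} is stated for classes over $\zo^n$ whereas the inputs to $\mathcal{C}_{L,V}$ have length $N = n + \log(cp(n))$, so one must substitute $N$ for the ambient dimension in the claim and re-expand in terms of the original $n$ and $p(n)$. All the combinatorial content --- one mistake per distinct $1$-example, and at most $cp(n)$ such examples --- is inherited verbatim from \Cref{claim:many-mistakes-upper-bound}.
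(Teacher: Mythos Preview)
Your proposal is correct and matches the paper's approach exactly: the paper simply notes that each $\Cert_z$ is $cp(n)$-sparse and that the input length is $n+\log(cp(n))$, then invokes \Cref{claim:many-mistakes-upper-bound}. Your write-up is more detailed (the check $cp(n)\ge N$ is not actually a hypothesis of \Cref{claim:many-mistakes-upper-bound}, so that step is unnecessary but harmless), and the final simplification of the per-round runtime is precisely the computation the paper leaves implicit.
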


    \begin{fact}[Online learners imply PAC learners~\cite{Lit89}]\label{lem:online lower bound}  Let $\mcC$ be a concept class that is online-learnable by an algorithm that makes at most $m(n)$ mistakes and takes time at most $t(n)$ in each round. Then, there exists an algorithm that uses $M(n)\coloneqq O\left(\frac{1}\eps (m(n) +\log(\frac{1}{\delta}))\right)$ samples and $T(n)\coloneqq O( t(n)M(n)\log(M(n)))$ time and PAC-learns $\mcC$ to error $\eps$ with probability $1-\delta$. 
    \end{fact}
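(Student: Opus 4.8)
The plan is to give the standard online-to-batch conversion, using as the output hypothesis the uniform mixture of the prediction rules that the online learner cycles through. Let $A$ be the online learner for $\mcC$ that makes at most $m(n)$ mistakes and runs in time $t(n)$ per round. The PAC learner $B$, given a target $f\in\mcC$ and an unknown distribution $\D$ over $\zo^n$, draws $M:=M(n)=O\!\big(\tfrac1\eps(m(n)+\log\tfrac1\delta)\big)$ i.i.d.\ labeled examples $(\bx_1,f(\bx_1)),\dots,(\bx_M,f(\bx_M))$, simulates $A$ on them in the online protocol (revealing the true label $f(\bx_i)$ after each of $A$'s predictions), and records the prediction rules $h_1,\dots,h_M$ that $A$ uses, where $h_i$ is the rule applied to $\bx_i$. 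It outputs the (improper, randomized) hypothesis $\hat h$ that on a fresh input $x$ samples $\bi\sim[M]$ uniformly and returns $h_{\bi}(x)$, i.e.\ the uniform mixture of $h_1,\dots,h_M$.

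The error analysis is the crux. By construction $\error_{\D}(\hat h,f)=\tfrac1M\sum_{i=1}^{M}\error_{\D}(h_i,f)$. The key structural point is that $h_i$ is determined by $\bx_1,\dots,\bx_{i-1}$ alone (their labels being pinned down by the fixed $f$), so writing $p_i:=\error_{\D}(h_i,f)$ and $M_i:=\mathbf{1}[A\text{ errs at round }i]$ and $\mathcal F_{i-1}:=\sigma(\bx_1,\dots,\bx_{i-1})$, we have $p_i$ is $\mathcal F_{i-1}$-measurable and $\E[M_i\mid \mathcal F_{i-1}]=p_i$. Because the sample is realizable by $f\in\mcC$, the mistake bound forces $\sum_{i=1}^{M}M_i\le m(n)$ with probability $1$. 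Thus $\{M_i-p_i\}$ is a bounded martingale difference sequence, and a short moment-generating-function computation (using $\sum_i M_i\le m(n)$ surely to control $\E\big[e^{\sum_i p_i}\big]\le e^{O(m(n))}$, then Markov) yields $\sum_{i=1}^{M}p_i\le O(m(n)+\log\tfrac1\delta)$ with probability at least $1-\delta$. On that event $\error_{\D}(\hat h,f)=\tfrac1M\sum_i p_i\le\eps$ for our choice of $M(n)$, which is exactly the claimed PAC guarantee. (The ``longest surviving hypothesis'' variant---output the rule used on the longest run of correctly predicted examples---also works via an independence-and-union-bound argument, but the mixture bound reaches the stated parameters most cleanly.)

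For the running time, $B$ simulates $A$ for $M(n)$ rounds at cost $t(n)$ each, and packaging the $M(n)$ recorded prediction rules into the mixture hypothesis (indexing and storing $A$'s intermediate states) adds an $O(\log M(n))$ overhead per round, for a total of $T(n)=O\big(t(n)\,M(n)\log M(n)\big)$; a query to $\hat h$ takes $O(t(n))$ time, so $\hat h$ is an efficient (small-circuit) hypothesis. I expect the main obstacle to be exactly the control of the adaptively chosen quantity $\sum_i\error_{\D}(h_i,f)$: the one-line Markov argument using only $\E[\sum_i M_i]\le m(n)$ delivers success probability bounded away from $1$ but not the $1-\delta$ / $\log\tfrac1\delta$ dependence, so the almost-sure mistake bound must be leveraged to get concentration of $\sum_i M_i$ about its mean. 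Everything else---the reduction itself and the bookkeeping for the circuit representation of $\hat h$---is routine.
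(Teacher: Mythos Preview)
The paper does not prove this statement; it is recorded as a Fact with a citation to Littlestone's thesis and used as a black box. Your proposal supplies a correct proof via the standard online-to-batch conversion. The uniform-mixture output together with the supermartingale/MGF control of $\sum_i p_i$ is exactly the right mechanism: concretely, for $\lambda\le 1-e^{-\mu}$ the process $\exp\!\big(\lambda\sum_{i\le k}p_i-\mu\sum_{i\le k}M_i\big)$ is a supermartingale, and combining $\E[\cdot]\le 1$ with the almost-sure bound $\sum_i M_i\le m(n)$ gives $\E\!\big[e^{\lambda\sum_i p_i}\big]\le e^{\mu\,m(n)}$ and hence the $O(m(n)+\log(1/\delta))$ tail bound by Markov---so your compressed line ``$\E[e^{\sum_i p_i}]\le e^{O(m(n))}$'' is justified once this intermediate step is made explicit. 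Littlestone's original argument is the ``longest surviving hypothesis'' variant you mention parenthetically (pigeonhole over the at most $m(n)+1$ epochs, then a union bound on the survival probability of an $\eps$-bad hypothesis over a run of length $\ge M/(m(n)+1)$); it yields the same sample bound and a deterministic output hypothesis, whereas your mixture approach gives a slightly cleaner analysis at the cost of a randomized hypothesis. Either route is fine for how the fact is used downstream.
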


\subsubsection{Putting things together: Proof of~\Cref{thm:online}}

We use the same concept class $\mcC^\star\coloneqq \mcC_{L^\star,V^\star}$ as in the proof of~\Cref{thm:NP formal}. We show that it fulfills the three requirements of~\Cref{thm:online}.
    \begin{enumerate}
        \item By \Cref{claim:littlestone-dimension}, $\Ldim(\mcC^\star)=1$ and by \Cref{claim:enumerable}, the class is $\NTIME(p(n)\log(p(n)),p(n))$-enumerable. An $\NTIME(p(n)\log(p(n)),p(n))$ oracle can be simulated deterministically in time $2^{O(p(n))}$ and therefore all concepts in $\mcC^\star$ can be enumerated in time $2^{O(p(n))}$. It follows that there is an online learning algorithm that makes $\leq 1$ mistake and takes time $\leq 2^{O(p(n))}$ in each round.   
        \item By~\Cref{lem:online lower bound}, if there is an online learner for $\mcC^\star$ that makes at most $m(n)$ mistakes and takes time at most $t(n)$ in each round, then there there is an $O(m(n))$-sample, time-$O(t(n)m(n)\log(m(n)))$ algorithm that PAC-learns $\mcC^\star$ to error $\eps$ for any constant $\eps>0$. Since \Cref{thm:NP formal} holds for the concept class $\mcC^\star$, we can apply~\Cref{thm:NP formal:few samples} to conclude that $\NTIME(p(n))\sse \RTIME\left(2^{O(m(n))}T(n)\log(T(n))\poly(p(n))\right)$ where $T(n)= t(n)m(n)\log(m(n))$ as desired. 
        \item By \Cref{cor:many-mistakes-upper-bound}, there is an algorithm that online-learns $\mathcal{C}^\star$  with at most $O(p(n))$ mistakes and takes time at most $O(n\log(p(n))+ \log^2(p(n))$ in each round.
    \end{enumerate}



\section*{Acknowledgements}

We thank the FOCS reviewers for helpful feedback and suggestions. 

The authors are supported by NSF awards 1942123, 2211237, 2224246, a Sloan Research Fellowship, and a Google Research Scholar Award. Guy is also supported by a Jane Street Graduate Research Fellowship; Caleb by a Jack Kent Cooke Graduate Scholarship; and Carmen by a Stanford Computer Science Distinguished Fellowship and an NSF GRFP.

 \bibliography{ref}
 \bibliographystyle{alpha}


\end{document}